\newtheorem{theorem}{Theorem}
\newtheorem{lemma}{Lemma}
\begin{document}

\title{Shallow quantum circuit for generating {extremely low}-entangled approximate state designs}

\author{Wonjun Lee}
\email{wonjun1998@postech.ac.kr}
\affiliation{Department of Physics, Pohang University of Science and Technology, Pohang, 37673, Republic of Korea}
\author{Minki Hhan}
\email{minkihhan@gmail.com}
\affiliation{Department of Electrical and Computer Engineering, The University of Texas at Austin, Texas, 78712, USA}
\author{Gil Young Cho}
\email{gilyoungcho@kaist.ac.kr}
\affiliation{Department of Physics, Korea Advanced Institute of Science and Technology, Daejeon, 34141, Korea}
\affiliation{Center for Artificial Low Dimensional Electronic Systems, Institute for Basic Science, Pohang, 37673, Republic of Korea}
\author{Hyukjoon Kwon}
\email{hjkwon@kias.re.kr}
\affiliation{School of Computational Sciences, Korea Institute for Advanced Study, Seoul, 02455, South Korea}

\begin{abstract}
Random quantum states have various applications in quantum information science. {W}e discover a new ensemble of quantum states that serve as an $\epsilon$-approximate state $t$-design while possessing extremely low entanglement, magic, and coherence. These resources can reach their theoretical lower bounds, $\Omega(\log (t/\epsilon))$, which are also proven in this work. This implies that for fixed $t$ and $\epsilon$, entanglement, magic, and coherence do not scale with the system size, i.e., $O(1)$ with respect to the total number of qubits $n$. Moreover, we explicitly construct an ancilla-free shallow quantum circuit for generating such states by transforming $k$-qubit approximate state designs into $n$-qubit ones without increasing the support size. The depth of such a quantum circuit, $O(t [\log t]^3 \log n \log(1/\epsilon))$, is the most efficient among existing algorithms without ancilla qubits. A class of quantum circuits proposed in our work offers reduced cost for classical simulation of random quantum states, leading to potential applications in quantum information processing. As a concrete example, we propose classical shadow tomography using an estimator with superpositions between only two states, {from which almost all quantum states can be efficiently certified} by requiring only $O(1)$ measurements and {classical post-processing time.}
\end{abstract}

\maketitle

Random quantum states serve as a fundamental tool for understanding the typical behaviors of complex quantum systems, such as entanglement~\cite{Page1993, hayden2006} and thermalization~\cite{Srednicki1994}, and also for simulating quantum chaotic dynamics~\cite{yoshida2017, Yoshida2019, Piroli_2020}. Due to their statistical typicality, generating ensembles of random quantum states has a wide range of applications in quantum information processing, including creating secure cryptographic keys~\cite{Ji2018, ananth2022, kretschmer2023}, benchmarking quantum devices~\cite{Knill_2008, Huang_2020, Helsen2022}, demonstrating quantum supremacy~\cite{Boixo_2018, arute2019}, and performing shadow tomography~\cite{Huang_2020, elben2023}. Since generating a truly random, i.e., Haar-random, quantum state requires a quantum circuit with a depth that grows exponentially with the number of qubits~\cite{Emerson2005, Oszmaniec2024}, the need for efficient implementations of approximate versions has become increasingly important. Two primary paradigms are \textit{approximate state $t$-designs}~\cite{Andris2007}, which mimic the first $t$ moments of the Haar distribution, and {pseudorandom states}~\cite{Ji2018}, which cannot be distinguished from Haar-random states by any polynomial-time algorithm. In this direction, significant progress has recently been made in the efficient preparation of both pseudorandom states~\cite{aaronson2023quantum, Chamon2024} and approximate state $t$-designs~\cite{feng2024}, approaching their ultimate circuit depth limits~\cite{schuster2024,cui2025}.

As Haar-random states exhibit nearly maximal entanglement~\cite{Page1993,hayden2006} that scales with the number of qubits $n$, one might expect that their approximations would also exhibit high entanglement. Nevertheless, a remarkable observation has been made that a pseudorandom state can have as little as $\omega(\log n)$ entanglement~\cite{aaronson2023quantum, gu2023little}, in stark contrast to the volume-law entanglement of Haar-random states. This opens a new direction toward efficient generation of random quantum states that require significantly fewer resources, such as entanglement, coherence, magic, {and ancilla}~\cite{aaronson2023quantum,haug2023pseudorandom,gu2023little,Zhang2025}. On the other hand, the entanglement properties of approximate state $t$-designs have not yet been thoroughly explored, whereas most existing constructions based on layered random circuits~\cite{Brand_o_2016,Haferkamp2022randomquantum,Harrow2023} or chaotic dynamics~\cite{Ho_2022,Cotler2023} typically produce highly entangled states. Hence, it remains an open question how low the entanglement and other quantum resources like magic and coherence of an approximate state $t$-design can be, and whether such states can be efficiently generated using low-depth quantum circuits. This question is not merely an abstract mathematical one, as the low entanglement would allow for the efficient manipulation and simulation of the random quantum states on both classical and quantum computers. 

\begin{figure*}[t]
    \includegraphics[width=\linewidth]{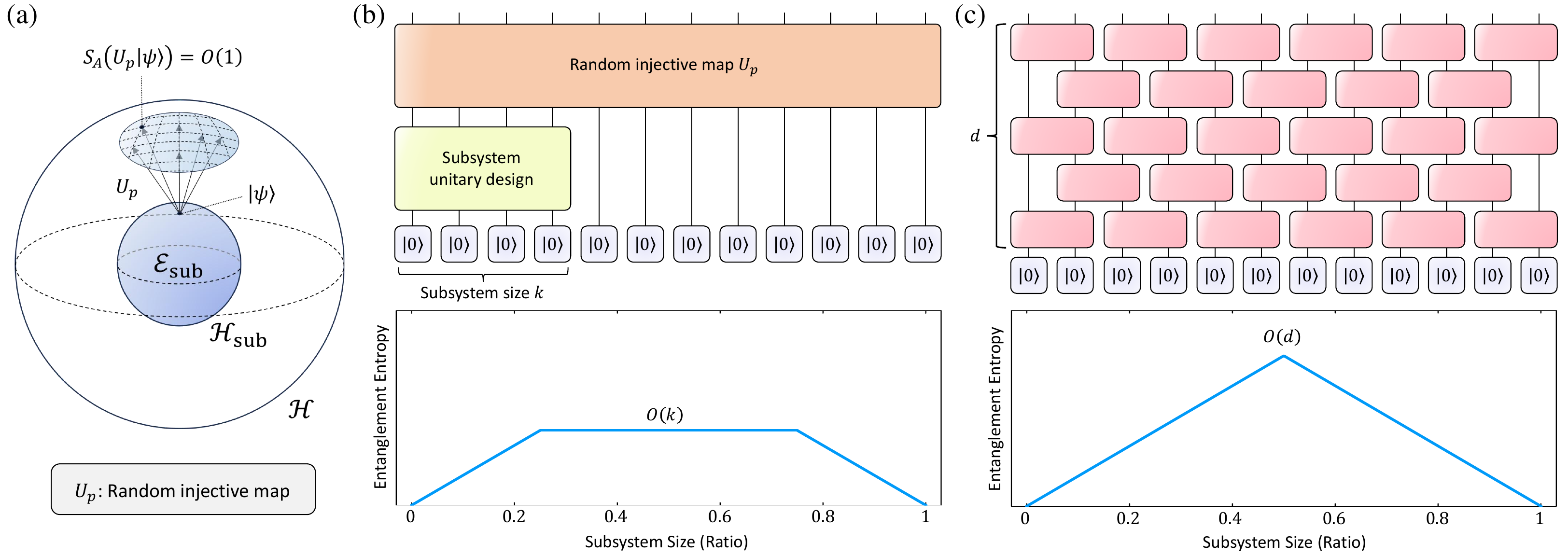}
    \centering
    \caption{Overview of the generation of approximate state designs. (a) States in an ensemble $\mathcal{E}_\mathrm{sub}$ forming an approximated state $t$-design in the subspace $\mathcal{H}_\mathrm{sub}$ are mapped to states in an ensemble forming an approximate state $t$-design in the full space $\mathcal{H}$ by random injective maps $\{U_p\}$. $S_A(\ket{\phi})$ is the half-system entanglement entropy of $\ket{\phi}$. (b,c) Generating an approximate state $t$-design using (b) {subsystem unitary design with} random injective map, and (c) a random circuit. Generated states have (b) {$O(k)$ with the subsystem size $k$} and (c) $O(d)$ entanglements with $d=\Omega(\log n)$ and the system size $n$.}
    \label{fig:overview}
\end{figure*}

In this Letter, we introduce a new class of efficiently preparable $\epsilon$-approximate quantum state $t$-designs with non-extensive, i.e., $O(1)$, entanglement, magic, and coherence for constant $\epsilon$ and $t$. In particular, we prove that those resources of these states saturate their theoretical lower bounds of $\Omega(\log(t/\epsilon))$, which we also uncover in this work. Additionally, we construct a shallow quantum circuit that can generate these quantum states efficiently. This quantum circuit is based on our new algorithm, which maps random quantum states in a $k$-qubit subsystem to the full $n$-qubit system while preserving the entanglement structure and the $t$-design property, inspired by random automaton circuits in Ref.~\cite{feng2024}. We provide an explicit $\log n$-depth quantum circuit for implementing this map based on iterations of multi-controlled NOT (MCX) gates. The overall circuit depth{, measured in basic single-and two-qubit gates,} for generating the $\epsilon$-approximate state $t$-design is $O\left(t[\log t]^3 \log n \log(1/\epsilon)\right)$ without ancilla qubits, and $\tilde{O}\left(t[\log t]^2 \log n \log(1/\epsilon)\right)$ up to polylogarithmic factors in $\log t$, when using {$\lceil n / \lceil 3\log_2 (t^2/\epsilon) \rceil \rceil$} ancilla qubits. This shallow quantum circuit uses a small number of entangling and non-Clifford gates compared to conventional random circuits, offering advantages for practical applications. In particular, we will show that such quantum circuits can dramatically improve the {classical postprocessing time} of shadow tomography~\cite{Huang_2020} as the shadow estimator contains superpositions between only two computational bases.

\textit{Results.}---Our main contributions are: {(i)} finding an ensemble of states with constant entanglement, magic, and coherence that forms an approximate state design, {(ii)} showing that the resources of these states saturates their theoretical lower bounds, {(iii)} providing an explicit shallow-circuit construction for these states, and (iv)
{designing} a new efficient classical shadow method using this circuit{, which is especially useful for quantum state certification}. Our construction is based on the intuition that the Hilbert space can be approximately covered by states with constant entanglement while typical states in the space have $\Theta(n)$ resources with the system size $n$, as depicted in Fig.\ref{fig:overview}(a). Our circuit, illustrated in Fig.~\ref{fig:overview}(b), generates only a constant amount of entanglement for any input computational basis state. In contrast, conventional random circuit constructions as in Fig.~\ref{fig:overview}(c) produce entanglement that increases with the system size~\cite{Brand_o_2016,Haferkamp2022randomquantum,Harrow2023}. 

\textit{{Extremely low}-entangled state designs.}---{The key idea behind our main results is that a $k$-qubit state design can be extended to an $n$-qubit one by applying a random unitary operator. Specifically, the random unitary operator $U_p$ is characterized by a permutation $p$ such that $U_p \left(\ket{b} \otimes \ket{0^{n-k}}\right) = \ket{p(b)}$, where $p$ is sampled from a $t$-wise independent random injective map $\mathcal{P}$~\cite{Alon2007, chen2024} (See the End Matter for its explicit definition) as follows:}
{
\begin{lemma} [Dimension expansion of state design]\label{thm:sub-to-global-design}
For a $k$-qubit $\epsilon$-approximate state $t$-design ${\cal E}_{\rm sub} =\{ \ket{\psi^{(k)}} \}$, the ensemble $\{ U_p \ket{\psi^{(k)}} \otimes \ket{0^{n-k}} \}_{\psi^{(k)} \sim {\cal E}_{\rm sub},  p \sim \mathcal{P}}$ with a $t$-wise independent random injective map $p \sim \mathcal{P}$ from $[2^k]$ to $[2^n]$ forms a $n$-qubit $\epsilon'$-approximate state $t$-design, where $\epsilon' = \epsilon + \frac{t^2}{2^{k-1}} + \frac{t^2}{2^{n-1}}$.
\end{lemma}}
{Since $U_p$ preserves the number of superpositions in the computational basis, using this lemma, we can construct approximate state designs having extremely low entanglement by applying $U_p$ to an exact design within the subspace.}
\begin{theorem}\label{thm:const-entanged-design}
    There exists an $n$-qubit $\epsilon$-approximate state $t$-design with $O(\log(t/\epsilon))$ entanglement, magic, and coherence {for any $\epsilon\geq \frac{4t^2}{2^n}$}.
\end{theorem}
\begin{proof}
{We set the subsystem size $k\leq n$ as an integer greater than $\log_2(4t^2/\epsilon)$. Let us take ${\cal P}$ as an ensemble of $t$-wise independent random injective map from $[2^k]$ to $[2^n]$. Let $\mathcal{E}_\mathrm{sub}$ be an ensemble forming an exact state $t$-design in the subspace. From \textbf{Lemma}~\ref{thm:sub-to-global-design}, $\mathcal{E}_{\rm tot}=\{U_p \ket{\psi} \}_{p \sim \mathcal{P}, \psi\sim\mathcal{E}_\mathrm{sub}}$ forms $\epsilon'$-approximate state $t$-design with $\epsilon'=\frac{t^2}{2^{k-1}}+\frac{t^2}{2^{n-1}}\leq\frac{4t^2}{2^k}$. Thus, for any given $\epsilon \geq \frac{4t^2}{2^n}$, there exists $k\leq n$ such that $\mathcal{E}_{\rm tot}$ forms an $\epsilon$-approximate state $t$-design.}

{We note that any states in $\mathcal{E}_{\rm tot}$} have $2^k$ superpositions in the computational basis, {the same as in $\mathcal{E}_{\rm sub}$}. Thus, these states have entanglement across any bipartition of the subsystems and coherence at most $O(k) = O(\log(t/\epsilon))$. Additionally, they also have magic at most $O(\log(t/\epsilon))$ as discussed in SM~\cite{Supple}. Here, we use the stabilizer R\'enyi entropy~\cite{Leone2022} and the relative entropy of coherence~\cite{Baumgratz2014} as measures of magic and coherence, respectively.
\end{proof}

\textit{Entanglement lower bound.}---We present a theoretical lower bound on the entanglement of states forming an approximate state $t$-design. These states have entanglement as well as magic and coherence {of} at least $\Omega(\log(t/\epsilon))$ due to the following theorem. We note that the states provided in \textbf{Theorem}~\ref{thm:const-entanged-design} saturate this lower bound.
\begin{theorem}\label{thm:ent_lower_bound}
    An ensemble of pure quantum states can form an $\epsilon$-approximate state $t$-design {only if} mean entanglement, magic, and coherence of the states are at least $\Omega(\log(t/\epsilon))$.
\end{theorem}
{A proof of \textbf{Theorem}~\ref{thm:ent_lower_bound} is provided in SM~\cite{Supple}. \textbf{Theorem}~\ref{thm:ent_lower_bound}, together with \textbf{Theorem}~\ref{thm:const-entanged-design}, establishes fundamental tight lower bounds $\Theta(\log(t/\epsilon))$ on entanglement, magic, and coherence of $\epsilon$-approximate state $t$-designs. This implies, surprisingly, that for fixed $\epsilon$ and $t$, these resources can become constants, independent of the number of qubits. Such an ensemble of states can be utilized to estimate the average fidelity of a given quantum channel $\Lambda$, ${\cal F}_{\rm avg}(\Lambda) = \mathbb{E}_{\psi\sim{\rm Haar}}\left[ \bra{\psi} \Lambda(\ket{\psi}\bra{\psi}) \ket{\psi} \right]$~\cite{Horodecki1999, Nielsen2002}, since an $\epsilon$-approximate state $2$-design can estimate ${\cal F}_{\rm avg}(\Lambda)$ within error $2\epsilon$, regardless of the system size $n$. We also highlight that pseudorandom states with the lowest possible entanglement $\omega(\log n)$~\cite{aaronson2023quantum} do not guarantee an accurate approximation of the average fidelity, as these random states are indistinguishable from Haar-random states only under limited computational time.}

{Meanwhile, for certain applications in quantum information processing, $t$ and $\epsilon$ may scale with the system size $n$. For example, in cryptographic applications, $t$ and $\epsilon$ are typically set to scale $\mathrm{poly}(n)$ and $1/\omega(\mathrm{poly}(n))$, respectively. In this case, our lower bounds become $\omega(\log n)$, which agree with known lower bounds for the resources of pseudorandom states~\cite{aaronson2023quantum,gu2023little}.}

\begin{figure}[t]
    \includegraphics[width=\linewidth]{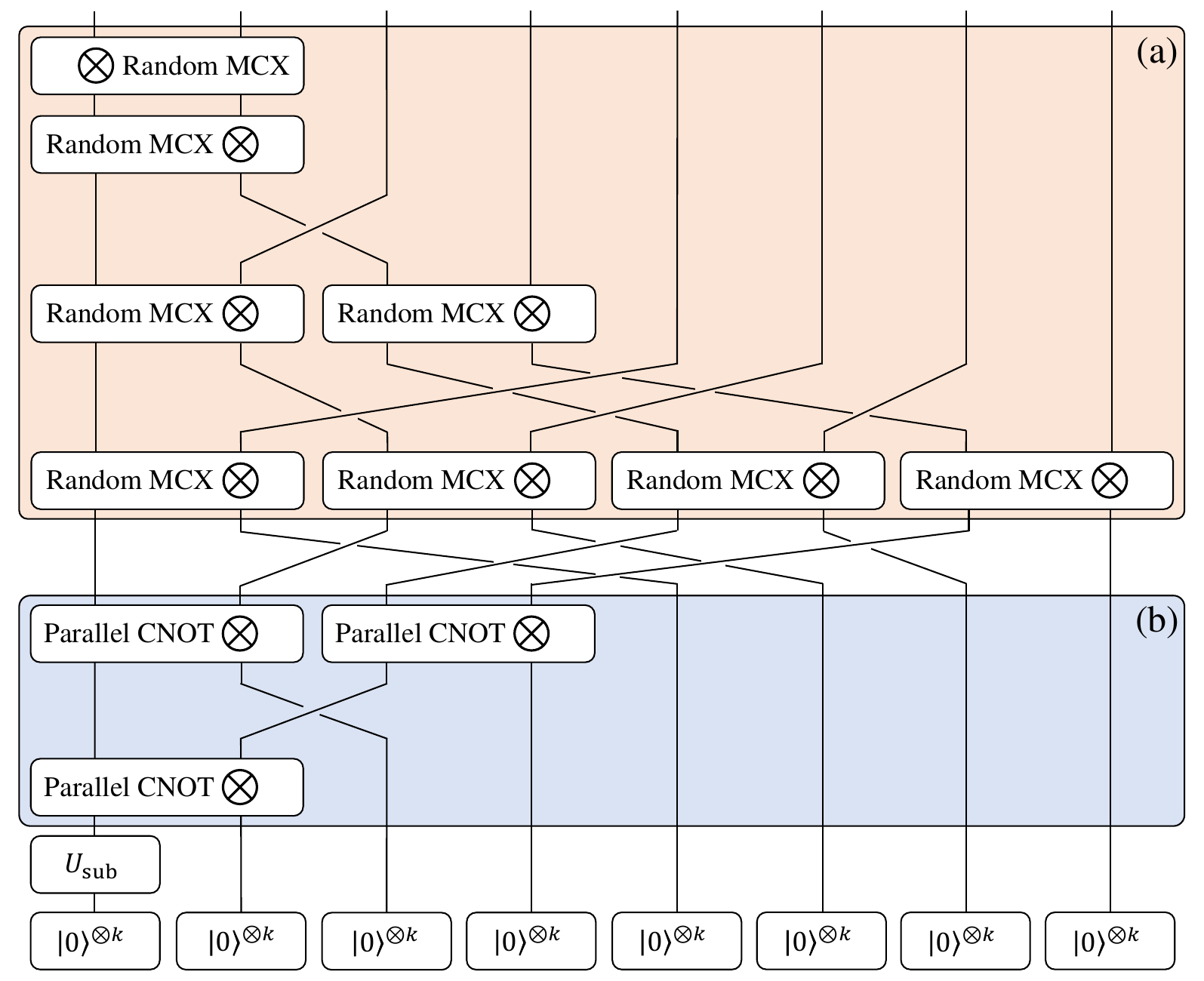}
    \centering
    \caption{Circuit for generating $O(1)$-entangled state designs. `$\otimes$' in each conditional gate represents its target bit. {$U_\mathrm{sub}$ represents a circuit generating state designs in a subsystem.} The circuit $U_p$ for a random injective map can be implemented by (a) a random MCX gate circuit and (b) a parallel CNOT gate circuit.}
    \label{fig:circuit}
\end{figure}

\textit{Shallow circuit implementation.}---{To translate our findings into practice, we develop} an algorithm {implementing $U_p$} in \textbf{Theorem}~\ref{thm:const-entanged-design} using a shallow circuit. $U_p$ can be implemented by parallel Control NOT (CNOT) gates and random MCX gates, illustrated in Fig.~\ref{fig:circuit}. More details on the implementation, including two main algorithms (\textbf{Algorithms}~\ref{alg:depth_opt} and \ref{alg:RMCC}), can be found in End Matter. {We also provide a more efficient algorithm for implementing $U_p$ for $t\leq 3$ in SM~\cite{Supple}.} The following theorem guarantees that our algorithms indeed output circuits generating an $O(1)$-entangled state $t$-design in $O(\log n)$ depth.
\begin{theorem}\label{thm:approximate-design-depth-main}
    The states in \textbf{Theorem}~\ref{thm:const-entanged-design} can be implemented in $O(t [\log t]^3 \log n \log (1/\epsilon))$ depth without ancilla qubits and $\tilde{O}(t [\log t]^2 \log n \log (1/\epsilon))$ depth with {$\lceil n/\lceil 3\log_2(t^2/\epsilon)\rceil\rceil$} ancilla qubits where $\tilde{O}(\cdot)$ neglects $\log\log t$ terms. {Here, the circuit depth is counted by using single-qubit gates and CNOT gates with all-to-all connectivity.}
\end{theorem}
\noindent
We prove this theorem in SM~\cite{Supple}. {We compare the circuit depth of our algorithm with previous approaches for generating approximate state designs without ancillary qubits in Table~\ref{tab:depth}. We highlight that our method becomes the most efficient among these approaches. By taking $n = 58, t = 3, \epsilon = 0.01$, as an explicit example, our construction requires {58-depth}, which could be achievable in near-term quantum devices using parallel applications of entangling gates with all-to-all connectivity~\cite{Evered2023,Xue2024,Bluvstein2024}.

When using ancillary qubits, recent works have shown that $\epsilon$-approximate unitary $t$-designs can be constructed with $O(\log t\log\log (nt/\epsilon))$ depth~\cite{schuster2024, cui2025}, although the resulting states exhibit at least logarithmic entanglement scaling in $n$. This leaves open the question of whether the gluing approach used in Refs.~\cite{schuster2024, cui2025} can also lead to improvements in generating state $t$-designs with low entanglement, magic, and coherence.}

\begin{table}[t] 
    {
    \begin{ruledtabular} 
        \begin{tabular}{ccc}
            Depth (time) & Error type & Method \\
            \hline
            $\mathrm{poly}(n)\cdot n^{1/D}$ & Relative & $\begin{array}{c}
            \text{Local random circuits} \\ \text{\cite{Brand_o_2016, Haferkamp2022randomquantum, Harrow2023}} \end{array}$  \\
            $[f(t)]^{1/\gamma}\cdot O(\mathrm{poly}(n))^\star$ & Additive & 
            Projected ensembles~\cite{Cotler2023}\\
            $O(nt^2 \log n)$ & Additive & Automaton circuits \cite{feng2024} \\
            $O(t [\log t]^7 \log n)$ & Relative & Gluing small designs \cite{schuster2024}\\
            $O(t[\log t]^3 \log n)$ & Additive & $\begin{array}{c} \text{Expanding design} \\ \text{(This work)} \end{array}$
        \end{tabular}
        \null\hfill $^\star$ $f(t)$ is some increasing function and $\gamma$ is a constant.
    \end{ruledtabular}
    \caption{Comparison between various protocols for generating an approximate state $t$-design without ancilla qubits. Here, the approximation error $\epsilon$ is fixed as a constant.}
    \label{tab:depth}
    }
\end{table}

{
\textit{Applications in shadow tomograph and state certification}.---Based on the approach of expanding randomness to a larger dimensional system, we construct a novel method for classical shadow tomography~\cite{Huang_2020}, in which the shadow estimator contains only $O(1)$ superpositions in the computational basis. This enables the unbiased estimation of many observables of a complex quantum state with significantly reduced classical post-processing time.

More precisely, we construct a shadow estimator of $\rho$ by separately estimating diagonal ($\hat\rho_{\rm d.}$) and off-diagonal ($\hat\rho_{\rm o.d.}$) elements as follows:
\begin{equation}\label{eq:shadow_estimator}
\begin{aligned}
    \hat\rho_{\rm d.} &= \ketbra{z}, &&z\sim\bra z \rho \ket z\\ 
    \hat\rho_{\rm o.d.} &= \frac{(2^k+1)(2^n-1)}{2^k-1} U^\dag\ketbra{z}U,&& z\sim\bra{z}U\rho U^\dag\ket{z}.
\end{aligned}
\end{equation}
Here, $z\in\{0,1\}^n$ are measurement outcomes and the random unitary $U$ for estimating the off-diagonal elements is given as
\begin{equation}\label{eq:shadow-circuit}
    U=(V\otimes I^{\otimes(n-k)})U_p^\dag,
\end{equation}
where $V$ is sampled from an $k$-qubit unitary 3-design. We provide a slight variant construction of $U_p$ that is optimized for shadow tomography in SM~\cite{Supple}. By decomposing $\hat{O}$ into the diagonal and off-diagonal components as $\hat{O} = \hat O_{\rm d.} + \hat O_{\rm o.d.}$, both $\hat{o}_\mathrm{d.} = \tr(\hat{\rho}_\mathrm{d.}\hat{O}_\mathrm{d.})$ and $ \hat{o}_\mathrm{o.d.} = \tr(\hat{\rho}_\mathrm{o.d.}\hat{O}_\mathrm{o.d.})$ can be unbiasedly estimated by the following theorem:
\begin{theorem}
    For any operator $\hat O $, $\hat{o}_\mathrm{d.}$ and $\hat{o}_\mathrm{o.d.}$ give an unbiased estimation as 
    \begin{equation}
        \tr(\rho\hat{O}) =  \mathbb{E}_{z\sim \bra z \rho \ket z}\left[ \hat{o}_\mathrm{d.} \right] + \mathbb{E}_{U \sim {\cal E}, z\sim\bra{z}U\rho U^\dag\ket{z}} \left[ \hat{o}_\mathrm{o.d.} \right].
    \end{equation}
    Furthermore, their variances are upper-bounded by
    \begin{equation}\label{eq:shadow_variance}
        \begin{split}
            \mathrm{Var}(\hat{o}_\mathrm{d.})&\leq \tr(\hat{O}^2)\\
            \mathrm{Var}(\hat{o}_\mathrm{o.d.})&\leq O\left(\tr(\hat{O}^2)\right) + O\left(2^{n-k}\chi_\rho(\hat{O})\right)
        \end{split}
    \end{equation}
    with $\chi_\rho(\hat{O})=\sum_{z\in[2^n]}\bra{z}\rho\ket{z}\bra{z}\hat{O}^2_{\rm o.d.}\ket{z}$. Consequently, the shadow norm that determines the sample complexity~\cite{Huang_2020} is given by
    \begin{equation}
        \|\hat{O}\|_\mathrm{shadow}^2 \leq O\left(\tr(\hat{O}^2)\right) + O\left(2^{n-k}\chi(\hat{O})\right) 
    \end{equation}
    with $\chi(\hat{O})=\displaystyle\max_{z \in \{0,1\}^n} \left[ \bra z \hat O^2 \ket {z} - \bra{z} \hat O \ket {z}^2 \right]$.
\end{theorem}
This theorem is proven in SM~\cite{Supple}. The main advantage of our protocol over the original classical shadow tomography based on random Clifford circuits~\cite{Huang_2020} is its low classical computational cost for evaluating observables from the estimator. Since the estimated observable $\hat{O}$ must be stored and accessed classically, the classical shadow protocol is limited by the efficiency of the classical representation of $\hat{O}$. For shadow tomography of low-rank observables, a typical approach for this is using low-rank stabilizer states~\cite{Bravyi2019}, neural quantum states (NQS)~\cite{huang2024certifying,Lange_2024}, or matrix product states (MPS)~\cite{Schollwock2011} to efficiently represent the observables as their summations. These methods allow the efficient extraction of their matrix elements in the computational basis. For the random Clifford circuits, evaluating the expectation value, $\mathrm{tr} \left(\hat{\rho}_{\rm Cl} \hat{O}\right) \propto \mathrm{tr} \left(U^\dagger_{\rm Cl} \ketbra{z} U_{\rm Cl} \hat{O}\right)$, from the estimator $\hat\rho_{\rm Cl}$ for a Clifford unitary $U_{\rm Cl}$ with outcome $z$, can be computationally challenging when $\hat{O}$ has a high stabilizer rank. In contrast, our estimator requires at most $4^{k}$ access to $\hat O_{z_1 z_2} = \bra {z_1} \hat O \ket{z_2}$, which corresponds to the number of $\ketbra{z_2}{z_1}$ terms in our estimator $\hat \rho_{\rm d.}$ or $\hat \rho_{\rm o.d.}$. For the minimum case of $k=1$, only a single access to $\hat O_{z_1 z_2}$ is sufficient to evaluate the expectation value of a Hermitian observable for each sample.}

{At the same time, by taking $k = n - \log_2 \chi(\hat{O})$, one can obtain the same sample complexity scaling as that using random Clifford circuits, $3\tr(\hat{O}^2)$. In particular, for $\chi(\hat{O}) = O(n 2^{-n})$, for example, when $\hat O$'s eigenstates are the Haar random states, the classical post-processing cost remains polynomial since $k = O(\log n)$. Even for the case with large $\chi(\hat{O})$, one can apply additional unitary operator $W$ to obtain $\hat O' = W \hat O W^\dagger$ with lower $\chi( \hat O')$, and then estimate $\hat O'$ for $\rho' = W \rho W^\dagger$ by noting that $\tr \left( \rho \hat O \right) = \tr \left(\rho' \hat O'\right)$. This allows us to estimate the fidelity of product states or the GHZ state with constant sample complexity even with $k=1$ (see SM~\cite{Supple} for more detail). Moreover, from Eq.~\eqref{eq:shadow_variance}, if $\rho$ is sufficiently random in the computational basis, then typical observables with random eigenstates can also be estimated with constant sample complexity.}

{These make our formalism able to estimate a large family of observables and are particularly useful for quantum state certification~\cite{huang2024certifying,gupta2025} and benchmarking~\cite{Knill_2008,elben2023}. The state certification task aims to determine whether an experimentally generated state $\rho$ has sufficient overlap between a target state $\ket\psi$, \textit{i.e.}, $\bra\psi \rho \ket\psi\geq 1-\varepsilon$. We find that for any given $n$-qubit state $\rho$, almost all $n$-qubit pure states $\ket{\psi}$ can be certified with a constant sample complexity as follows:}
{
\begin{theorem}
    Almost all $n$-qubit pure states can be certified through $O(1/\varepsilon^2)$ measurements with $O(\log (n/\varepsilon))$-depth circuits and queries to fixed local basis coefficients of the target state.
\end{theorem}}
The proof of the theorem can be found in SM~\cite{Supple}. This provides a substantial reduction from the $O(n^2)$ {single qubit measurements} needed by recently developed methods~\cite{huang2024certifying,gupta2025}, with an added logarithmic circuit depth overhead. {Compared to Ref.~\cite{huang2024certifying}, which contains certain undecidable cases, our approach can directly estimate the fidelity under a relatively simple strict anti-concentration condition, i.e., all measurement probabilities are upper bounded by $O(2^{-n})$, for certifiable states. While Ref.~\cite{gupta2025} proposes a general certification protocol applicable to arbitrary quantum states, our method does not require direct access to the Pauli basis, as it operates directly on the NQS ansatz, which has better expressivity than the MPS ansatz. Fidelity can also be estimated via classical shadow tomography using approximate unitary designs from one-dimensional $O(\log n \log(1/\varepsilon))$-depth random circuit~\cite{schuster2024} together with the MPS ansatz, However, our method is substantially more efficient, reducing classical overhead by a factor of $n^{O(1)}$ and eliminating the $\log(1/\varepsilon)$ factor in the measurement-circuit depth.}

\textit{Discussion.}---We have provided a construction of {an $\epsilon$-approximate state $t$-design with $O(1)$-entangled states for constant $\epsilon$ and $t$}. Our construction saturates the theoretical lower bounds of mean entanglement, magic, and coherence of quantum states that form an $\epsilon$-approximate state $t$-design, which are $\Omega(\log(t/\epsilon))$. We have also developed shallow random circuits that generate such states. This directly implies that our algorithms can be used to simulate Haar random states, which have been widely used in various applications, {such as} benchmarking quantum circuits and quantum learning.

{Furthermore,} we show that our {circuit} construction can be used for classical shadow tomography with {enhanced classical post-processing time.} This feature can be utilized to {efficiently} perform a certification task with {fewer samples than previously proposed schemes and constant computation time by adapting} neural quantum states or matrix product states. We believe our work opens up the possibility of performing these tasks on near-term quantum devices. 

\textit{Note added.} {After the completion of this work, we became aware of another independent work that introduces a constant magical ensemble that forms an approximate state design~\cite{bittel2025}.}

\begin{acknowledgements}
We thank Byungmin Kang for helpful discussions. W.L. and G.Y.C. are financially supported by Samsung Science and Technology Foundation under Project Number SSTF-BA2401-03, the NRF of Korea (Grants No. RS-2023-00208291, RS-2024-00410027, 2023M3K5A1094810, RS-2023-NR119931, RS-2024-00444725, RS-2023-00256050, IRS-2025-25453111, RS-2025-08542968) funded by the Korean Government (MSIT), the Air Force Office of Scientific Research under Award No. FA23862514026, and Institute of Basic Science under project code IBS-R014-D1. W.L. is supported by the KAIST Jang Young Sil Fellow Program. H.K. is supported by the KIAS Individual Grant No. CG085302 at Korea Institute for Advanced Study and National Research Foundation of Korea (Grants No.~RS-2023-NR119931, No.~RS-2024-00413957, and No.~RS-2024-00438415) funded by the Korean Government (MSIT).
M.H. is supported by Schmidt Sciences Polymath award to David Soloveichik. 
\end{acknowledgements}

\bibliography{Ref}

\clearpage

\noindent{\large{\bf End Matter}}\\

\noindent{\bf Approximate state designs.}\\
An ensemble $\mathcal{E}$ of $n$-qubit states forms an approximate state $t$-design with additive error $\epsilon$ if it approximates Haar random states in the following sense:
\begin{equation}
\operatorname{TD}\left(\mathbb{E}_{\phi\sim\mathcal{E}}\left[\ketbra{\phi}^{\otimes t}\right],\rho_\mathrm{Haar}^{(t)}\right)\leq \epsilon
\end{equation}
with $\rho_\mathrm{Haar}^{(t)} = \mathbb{E}_{\psi\sim\mathrm{Haar}}\left[\ketbra{\psi}{\psi}^{\otimes t}\right]$. Here, $\mathbb{E}_{\phi\sim\mathcal{E}}\left[\ketbra{\phi}^{\otimes t}\right]$ is the ensemble average of $t$-copy quantum states over $\mathcal{E}$, and $\operatorname{TD}(\rho,\sigma)$ is the trace distance between two quantum states $\rho$ and $\sigma$.
\vspace{1em}

\noindent{{\bf $t$-wise independent random injective map and its unitary implementation.}\\
Here, we define a $t$-wise independent random injective function. We set $k$ and $n$ to be integers such that $k\leq n$. Let $\mathcal{P}$ be an ensemble of functions from $\{0,1\}^k$ to $\{0,1\}^n$. Then, $\mathcal{P}$ is $t$-wise independent if any $t$ distinct bitstrings $\{x_i\}_{i=1}^t$ of length $k$ and $\{y_i\}_{i=1}^t$ of length $n$, the probability of having $p(x_i)=y_i$ for all $i\in[1,t]$ is uniformly given as
\begin{equation}
    \mathrm{Pr}_{p\sim \mathcal{P}}[p(x_1)=y_1,\cdots,p(x_t)=y_t] = \frac{1}{t!}\binom{2^n}{t}^{-1}.
\end{equation}
Consequently, an ensemble of $n$-qubit unitary operators $\{U_p\}_{p\sim\mathcal{P}}$ implements $\mathcal{P}$ if it follows that
\begin{equation}
    \begin{split}
        \mathrm{Pr}_{p\sim \mathcal{P}}[U_p\ket{x_1,a}=\ket{y_1},\cdots,U_p\ket{x_t,a}=\ket{y_t}]
        = \frac{1}{t!}\binom{2^n}{t}^{-1}
    \end{split}
\end{equation}
for any $a\in\{0,1\}^{n-k}$.
}

\vspace{1em}  
\noindent{\bf Algorithm for random injective map.}\\
A random injective map $U_p$ introduced in the main text is implemented by \textbf{Algorithm}~\ref{alg:depth_opt}, which uses \textbf{Algorithm}~\ref{alg:RMCC} as a subroutine. Let us define notations used in \textbf{Algorithm}~\ref{alg:depth_opt}. For some $m$-distinct positions $s$ in $[1,k]$, a vector $c$ in $\{0,1\}^m$, and a position $q$ in $[1,n]$, $m$-$\mathrm{MCX}_{(s,c),q}$ is a MCX gate with $m$-conditional bits targeting the $q$-th qubit whose control conditions are given by $c$. $S$ and $C$ are sequences length $\alpha$ and comprised of sets of $m$-distinct random positions and random vectors in $\{0,1\}^m$, respectively. Additionally, the `$.+$' operator is the element-wise summation.\\

\begin{figure}[H]
\begin{algorithm}[H]
    \caption{Shallow Depth Bits Randomizer}
    \label{alg:depth_opt}
    \begin{algorithmic}
    \Require $n > 0$, $n\geq k > 1$, $m\geq 2$, $\alpha>0$
    \Ensure $U$
    \State $U \gets I$
    \For{$i = 1$ to $\log_2 (n/k)$; $i$\texttt{++}}
        \For{$j = 1$ to $k$; $j$\texttt{++}}
            \For{$l = 0$ to $2^{i-1}-1$; $l$\texttt{++}}\\
                \Comment{This can be done in parallel} 
                \State $U\gets \operatorname{CNOT}_{lk+j,(2^{i-1}+l)k+j} U$ 
            \EndFor
        \EndFor
    \EndFor
    \State $S, C \gets \operatorname{RMCC}(1,k,m,\alpha)$
    \For{$i = \log_2(n/k)$ to $1$; $i$\texttt{-}\texttt{-}}
        \State $T\gets$ A set of $2^{i-1}$ random bit-strings of length $\alpha$
        \For{$j=0$ to $\alpha-1$; $j$\texttt{++}}
            \For{$q=0$ to $k-1$; $q$\texttt{++}}
                \For{$l=0$ to $2^{i-1}-1$; $l$\texttt{++}}\\
                \Comment{This can be done in parallel.}
                    \If{$T[l][j]$}
                        \State $U \gets m\operatorname{-MCX}_{(lk.+S[j],C[j]),(l+2^{i-1})k+q}U$\\
                        \Comment{$lk.+S[j]$ are positions of control bits.}\\
                        \Comment{$C[j]$ are conditions of control bits.}
                    \EndIf
                \EndFor
            \EndFor
        \EndFor
    \EndFor 
    \State $T\gets$ A random bit-string of length $\alpha$
    \For{$j=0$ to $\alpha-1$; $j$\texttt{++}}
        \For{$q=0$ to $k-1$; $q$\texttt{++}}
            \If{$T[j]$}
                \State $U \gets m\operatorname{-MCX}_{(k.+S[j],C[j]),q}U$
            \EndIf
        \EndFor
    \EndFor
    \end{algorithmic}
\end{algorithm}
\end{figure}

\begin{figure}[H]
\begin{algorithm}[H]
    \caption{Random Multi-Control Condition (RMCC)}
    \label{alg:RMCC}
    \begin{algorithmic}
    \Require $x_1\geq 1$, $x_2 > x_1$, $m\geq 2$, $\alpha>0$
    \Ensure $S$, $C$
    \State $S \gets \emptyset$
    \State $C \gets \emptyset$
    \For{$i=1$ to $\alpha$; $i$\texttt{++}}
        \State $s\gets$ Choose $m$ distinct positions randomly in $[x_1,x_2]$
        \State $c\gets$ Random bits of length $m$
        \State $S\gets\operatorname{push}(S,s)$
        \State $C\gets\operatorname{push}(C,c)$
    \EndFor
    \end{algorithmic}
\end{algorithm}
\end{figure}

\end{document}



\title{Supplemental Material: Shallow quantum circuit for generating extremely low-entangled approximate state designs}
\author{Wonjun Lee}
\email{wonjun1998@postech.ac.kr}
\affiliation{Department of Physics, Pohang University of Science and Technology, Pohang, 37673, Republic of Korea}
\author{Minki Hhan}
\email{minkihhan@gmail.com}
\affiliation{Department of Electrical and Computer Engineering, The University of Texas at Austin, Texas, 78712, USA}
\author{Gil Young Cho}
\email{gilyoungcho@kaist.ac.kr}
\affiliation{Department of Physics, Korea Advanced Institute of Science and Technology, Daejeon, 34141, Korea}
\affiliation{Center for Artificial Low Dimensional Electronic Systems, Institute for Basic Science, Pohang, 37673, Republic of Korea}
\author{Hyukjoon Kwon}
\email{hjkwon@kias.re.kr}
\affiliation{School of Computational Sciences, Korea Institute for Advanced Study, Seoul, 02455, South Korea}

\maketitle
\tableofcontents

\newpage
\clearpage
\setcounter{page}{1}
\section*{Supplementary Figures}
\hypertarget{fig:copy_circuit}{}
\subsection*{Supplementary Figure 1: Parallel CNOT gate circuit}
\begin{figure}[h]
    \includegraphics[width=0.7\linewidth]{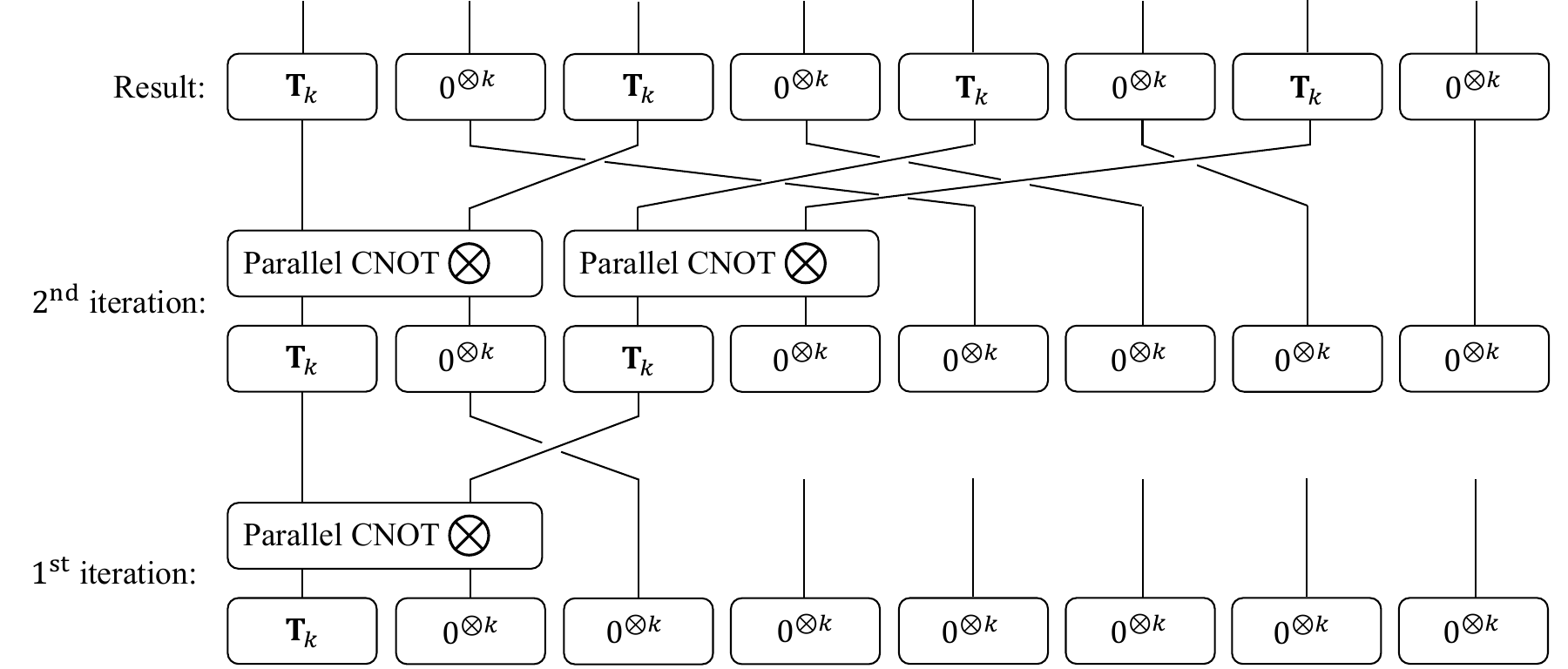}
    \begin{justify}
        Each small block at the bottom of each circuit represents a $k$-bit register. The first block is copied to half of the registers using CNOT gates with a tree-like propagation.
    \end{justify}
\end{figure}

\newpage
\hypertarget{fig:randomize_circuit}{}
\subsection*{Supplementary Figure 2: Random MCX gate circuit}
\begin{figure}[h]
    \includegraphics[width=0.7\linewidth]{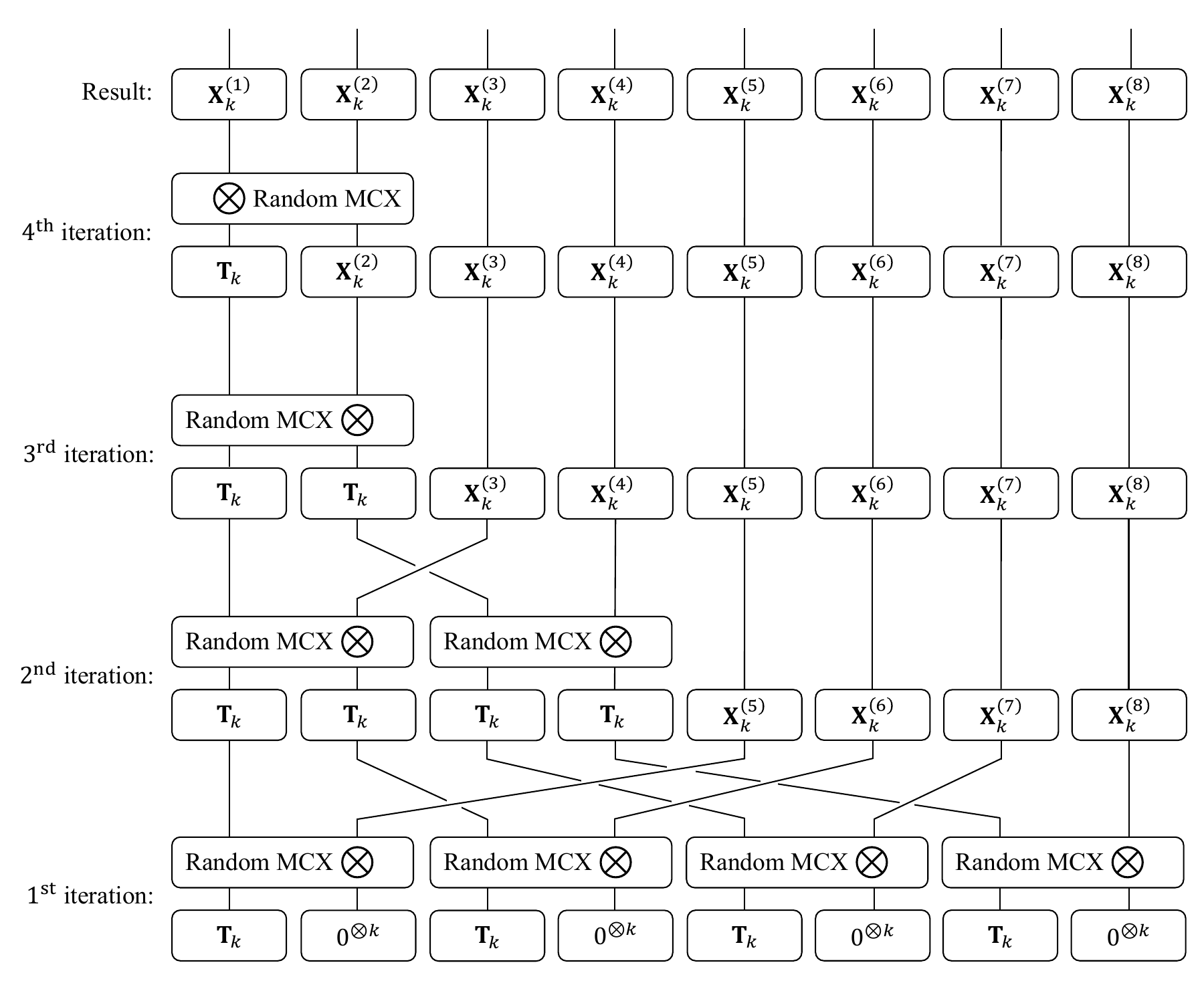}
    \begin{justify}
        Random MCX gates are applied conditioned on registers with the state $\mathbf{T}_k$ targeting bits in other registers. Random applications of MCX gates randomize targeted bits. Registers with randomized bits are denoted by $\mathbf{X}_k^{(i)}$.
    \end{justify}
\end{figure}

\newpage
\hypertarget{fig:MCX-random-var}{}
\subsection*{Supplementary Figure 3: Randomization of unique type state}
\begin{figure}[h]
    \includegraphics[width=0.6\linewidth]{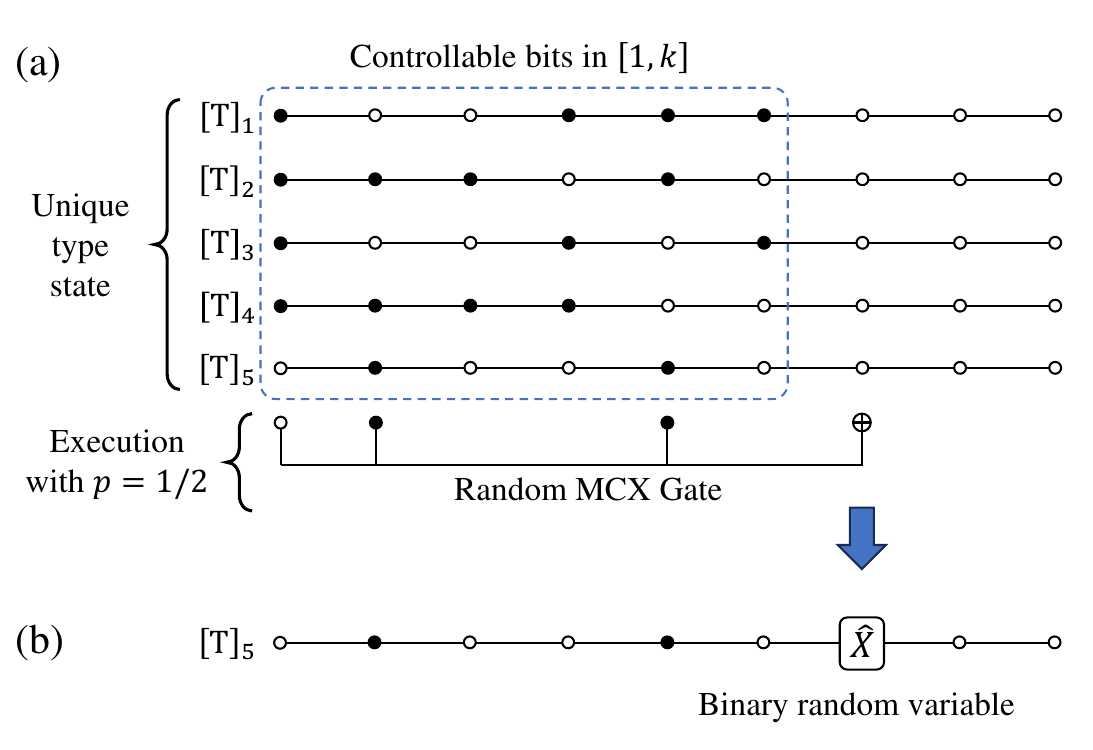}
    \begin{justify}
        (a) A 3-MCX gate sampled by the algorithm is conditioned on the first, second, and fifth bits and targeting the seventh bit. This gate is hypothetically applied on a unique type state $\ketbra{\textbf{T}}{\textbf{T}}$ of $\abs{\textbf{T}}=5$ with the probability $1/2$. (b) This random application of the MCX gate adds a binary random variable $\hat{X}$ to the value of the seventh bit of the fifth copy.
    \end{justify}
\end{figure}

\newpage
\hypertarget{fig:improved-circuit}{}
\subsection*{Supplementary Figure 4: Circuit diagrams of random multi-control X block}
\begin{figure}[h]
    \includegraphics[width=\linewidth]{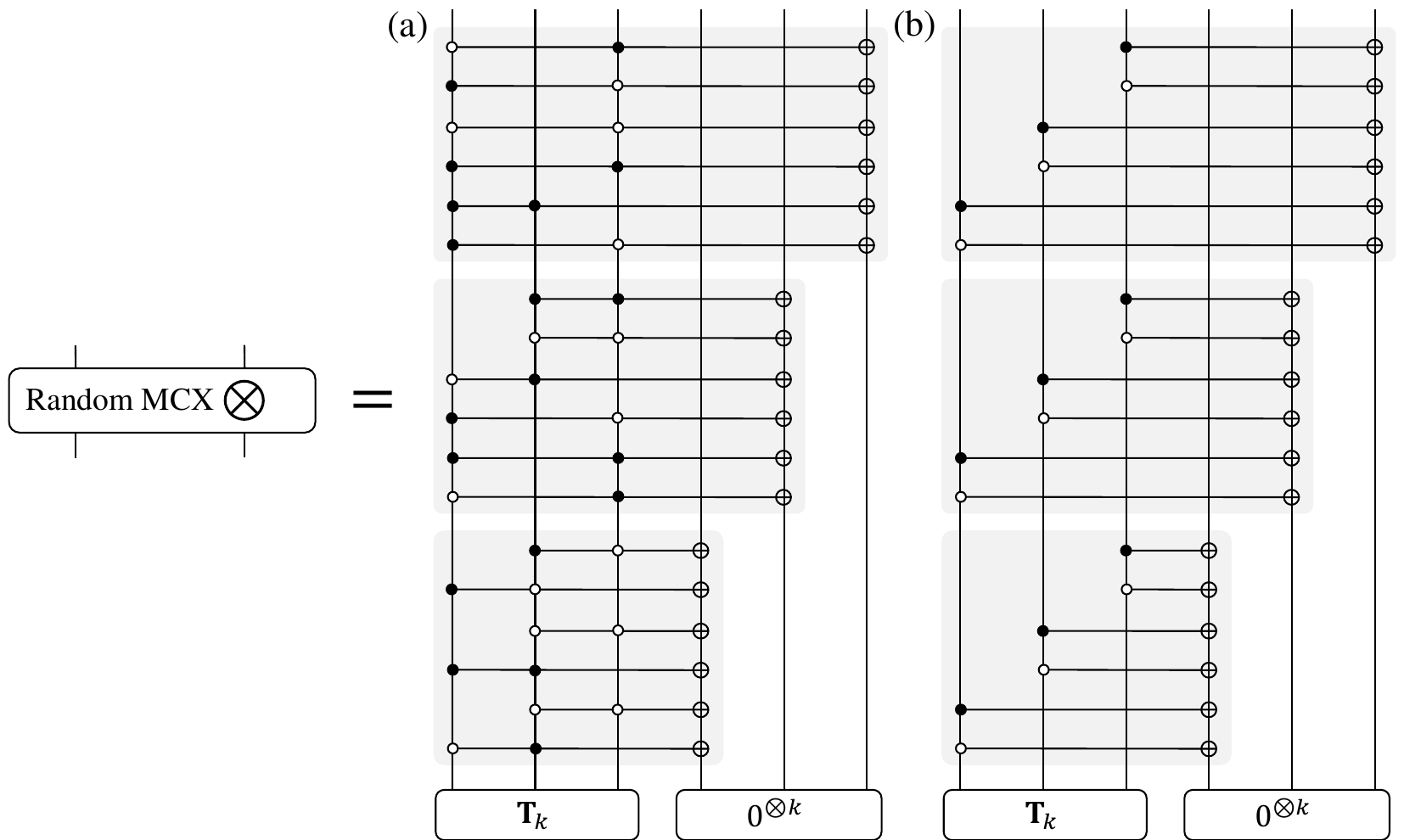}
    \begin{justify}
        Circuit diagrams of "Random MCX" block with $t=3$ for (a) our previous construction requiring $\textrm{(circuit depth for MCX gate)}\times [\textrm{(number of MCX gate applications per target qubit)} \times \textrm{(number of target qubits)} \divisionsymbol \textrm{(number of parallelizable MCX gates)}] \times \textrm{(expected number of applications)} = 6\times [3(\log 3)^2 k] \divisionsymbol 2$ mean circuit depth considering the MCX parallelization and (b) the improved implementation requiring $k$ mean circuit depth with $k=\lceil 2.885\log_2(9/\epsilon)\rceil$. This choice of $k$ is discussed in Supplementary Notes 3. Each gate is applied with the probability $p=1/2$. When $\epsilon=0.1$, the mean circuit depth of our previous construction is 207, while that of our improved construction is 19. 
    \end{justify}
\end{figure}

\newpage
\hypertarget{fig:shadow-circuit}{}
\subsection*{Supplementary Figure 5: Circuit diaram of random injective map for shadow tomography}
\begin{figure}[h]
    \includegraphics[width=0.7\linewidth]{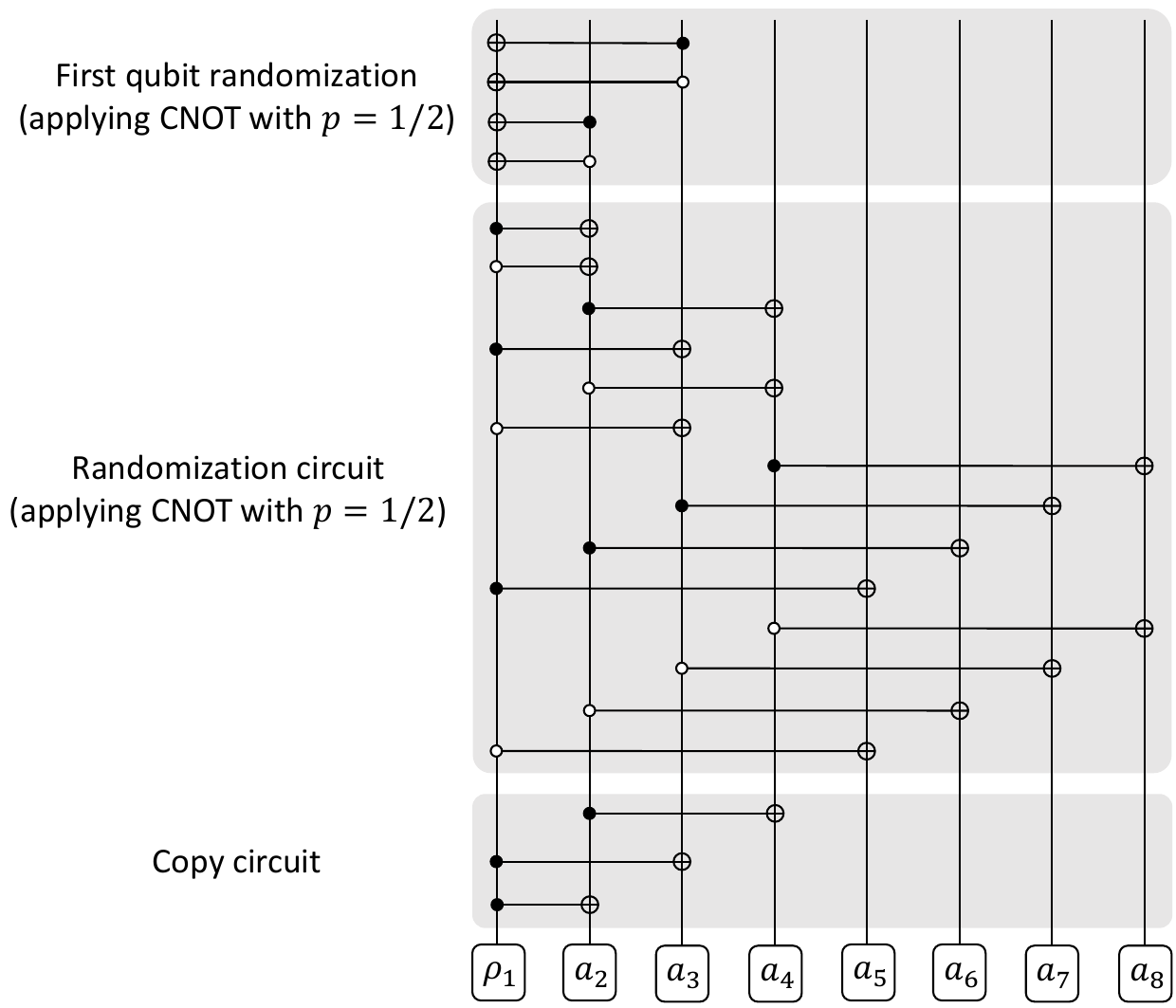}
    \begin{justify}
        The circuit diagram of random injective map for shadow tomography. This comprised of two parts: copy circuit and randomization circuit. This diagram considers two 8-bit input bitstrings, denoted as $\rho^{(1)}_1a_2\cdots a_8$ and $\rho^{(2)}_1a_2\cdots a_8$. The circuit generates two distinct random bitstrings if the leading bits $\rho^{(1)}_1$ and $\rho^{(2)}_1$ differ, or identical random bistrings if they are the same. To achieve this, the copy circuit propagates information regarding the equality of $\rho^{(1)}_1$ and $\rho^{(2)}_1$ to half of the system. Based on this information, the randomization circuit randomizes the input bitstrings in parallel.
    \end{justify}
\end{figure}



\newpage
\section*{Supplementary Notes}
\subsection*{1. Construction of state $t$-design from random injective map (proof of \textbf{Lemma 1} {in the main text})}\label{sec:state-design-construction}
In this section, we introduce the unique type state introduced in Refs.~\cite{harrow2013church, ananth2022function}, and then, prove \textbf{Lemma 1} based on it. Let us consider a vector $v \in [L]^t$ for $L=2^l$. A type vector $\operatorname{type}(v)$ of $v$ is another vector in $[t+1]^L$ whose $z$-th entry with $z\in[L]$ is the frequency of $z$ in $v$. A unique type vector $\textbf{T}_l$ is defined as a type vector whose elements are at most one. We set $\mathcal{U}_{l,t}$ as the set of all such unique type vectors. For a given unique type vector $\textbf{T}_l\in\mathcal{U}_{l,t}$, the unique type state $\ket{\textbf{T}_l}$ is defined as
\begin{equation}
    \ket{\mathbf{T}_l} = C \sum_{\substack{v\in[L]^t\\\operatorname{type}(v)=\mathbf{T}_l}}\ket{v}
\end{equation}
with a normalization constant $C$. Let $\mathcal{U}_{l,t}$ be the set of all $l$-qubit unique type states. Then, the ensemble average of unique type states is given by:
\begin{equation}\label{eq:avg-unique-type-state}
     \rho_{\mathrm{unique},l}^{(t)} = \binom{L}{t}^{-1}\sum_{\ket{\textbf{T}_l}\in\mathcal{U}_{l,t}}\ketbra{\textbf{T}_l}.
\end{equation}
Additionally, for later use, we define the ensemble average of the $t$-th moments of Haar random states in the $l$-qubit system as
\begin{equation}
    \rho^{(t)}_{\mathrm{Haar},l} = \mathbb{E}_{\psi\sim\mathrm{Haar}(2^l)}[\ketbra{\psi}^{\otimes t}].
\end{equation}

\begin{lemma}\label{thm:sub-global-unique}
    For any $k$-qubit $t$-copy unique type state $\ket{\mathbf{T}_k}$, a $t$-wise independent random injective map $p\sim\mathcal{P}$ satisfies
    \begin{equation}
        \mathbb{E}_{p\sim\mathcal{P}}[U_p^{\otimes t}\ketbra{\mathbf{T}_k}\otimes\ketbra{0}^{\otimes(n-k)t}U_p^{\otimes t,\dag}] = \rho^{(t)}_{\mathrm{unique},n}.
    \end{equation}
\end{lemma}
\begin{proof}
    We start from the definition of the unique type state:
    \begin{equation}
        \ket{\mathbf{T}_k} = \frac{1}{\sqrt{k!}}\sum_{\sigma\in S_t}\ket{(\mathbf{T}_k)_{\sigma(1)},\cdots,(\mathbf{T}_k)_{\sigma(t)}}.
    \end{equation}
    Next, let $p$ be an injective map sampled from $\mathcal{P}$. Then, we have
    \begin{equation}
        U_p^{\otimes t} \ket{\mathbf{T}_k}\otimes\ket{0}^{\otimes(n-k)t} = \frac{1}{\sqrt{k!}}\sum_{\sigma\in S_t}\ket{p((\mathbf{T}_k)_{\sigma(1)}),\cdots,p((\mathbf{T}_k)_{\sigma(t)})},
    \end{equation}
    which is a unique type state in the $n$-qubit system. Thus, the ensemble average of this over $p\sim\mathcal{P}$ is the equal summation of all possible unique type states in the $n$-qubit system, \textit{i.e.},
    \begin{equation}
        \mathbb{E}_{p\sim\mathcal{P}}[U_p^{\otimes t}\ketbra{\mathbf{T}_k}\otimes\ketbra{0}^{\otimes(n-k)t}U_p^{\otimes t,\dag}] = \rho^{(t)}_{\mathrm{unique},n}.
    \end{equation}
\end{proof}

\begin{lemma}\label{thm:Haar-unique}
    For any integers $l>0$ and $t>1$, the following is satisfied:
    \begin{equation*}
        \operatorname{TD}\left(\rho_{\mathrm{Haar},l}^{(t)},\rho^{(t)}_{\mathrm{unique},l}\right)\leq \frac{t^2}{2^{l-1}}.
    \end{equation*}
\end{lemma}
\begin{proof}
    The $t$-th moments of $l$-qubit Haar random states is explicitly given by
    \begin{equation}
        \rho_{\mathrm{Haar},l}^{(t)}=\binom{2^l+t-1}{t}^{-1}\frac{1}{t!}\sum_{\sigma\in S_t}\sum_{\{z_i\}_{i=1}^t\in[2^l]}\ketbra{\sigma(z_1,\cdots,z_t)}{z_1,\cdots,z_t}.
    \end{equation}
    Let the set of all $l$-qubit $t$-copy unique type vectors be $\mathcal{U}_{l,t}$. We introduce the projector $\Pi^{(t)}_{\mathrm{unique},l}$ as
    \begin{equation}
        \Pi^{(t)}_{\mathrm{unique},l} = \sum_{\mathbf{T}_l\in\mathcal{U}_{l,t}}\ketbra{\mathbf{T}_l}.
    \end{equation}
    We note that any $\mathbf{T}_l\in\mathcal{U}_{l,t}$, $\ket{\mathbf{T}_l}$ is an eigen state of $\rho^{(t)}_{\mathrm{Haar},l}$:
    \begin{equation}
        \begin{split}
            \rho^{(t)}_{\mathrm{Haar},l}\ket{\mathbf{T}_l}
            &= \rho^{(t)}_{\mathrm{Haar},l} \frac{1}{\sqrt{t!}}\sum_{\pi\in S_t}\ket{\pi((\mathbf{T}_l)_1,\cdots,(\mathbf{T}_l)_t)}\\
            &= \binom{2^l+t-1}{t}^{-1}\frac{1}{t!\sqrt{t!}}\sum_{\sigma,\pi\in S_t}\ket{\sigma\cdot\pi((\mathbf{T}_l)_1,\cdots,(\mathbf{T}_l)_t)} \\
            &= \binom{2^l+t-1}{t}^{-1}\frac{1}{\sqrt{t!}}\sum_{\sigma\in S_t}\ket{\sigma((\mathbf{T}_l)_1,\cdots,(\mathbf{T}_l)_t)} \\
            &= \binom{2^l+t-1}{t}^{-1}\ket{\mathbf{T}_l}.
        \end{split}
    \end{equation}
    This implies that 
    \begin{equation}
        \tr(\Pi^{(t)}_{\mathrm{unique},l}\rho^{(t)}_{\mathrm{Haar},l}) = \binom{2^l+t-1}{t}^{-1}\binom{2^l}{t}=\frac{(2^l-1)!}{(2^l+t-1)!}\frac{(2^l)!}{(2^l-t)!}.
    \end{equation}
    At the same time, $\rho^{(t)}_{\mathrm{unique},l}$ is given by
    \begin{equation}
        \rho^{(t)}_{\mathrm{unique},l} = \binom{2^l}{t}^{-1}\sum_{\mathbf{T}_l\in\mathcal{U}_{l,t}}\ketbra{\mathbf{T}_l}.
    \end{equation}
    Since $\rho_{\mathrm{Haar},l}^{(t)}$ is positive semi-definite and has unit trace, we have
    \begin{equation}
        \begin{split}
            \operatorname{TD}\left(\rho_{\mathrm{Haar},l}^{(t)},\rho^{(t)}_{\mathrm{unique},l}\right) 
            &= \operatorname{TD}\left((I-\Pi^{(t)}_{\mathrm{unique},l})\rho_{\mathrm{Haar},l}^{(t)},\rho^{(t)}_{\mathrm{unique},l}\right) + \operatorname{TD}\left(\Pi^{(t)}_{\mathrm{unique},l}\rho_{\mathrm{Haar},l}^{(t)},\rho^{(t)}_{\mathrm{unique},l}\right) \\
            &= 1-\tr(\Pi^{(t)}_{\mathrm{unique},l}\rho^{(t)}_{\mathrm{Haar},l}) + \mathrm{TD}\left(\Pi^{(t)}_{\mathrm{unique},l}\rho^{(t)}_{\mathrm{Haar},l},\rho^{(t)}_{\mathrm{unique},l}\right)\\
            &=1-\frac{(2^l-1)!}{(2^l+t-1)!}\frac{(2^l)!}{(2^l-t)!}+\left(\frac{(2^l-t)!}{(2^l)!}-\frac{(2^l-1)!}{(2^l+t-1)!}\right)\frac{(2^l)!}{(2^l-t)!}\\
            &=2\left(1-\frac{(2^l-1)!}{(2^l+t-1)!}\frac{(2^l)!}{(2^l-t)!}\right)\\
            &=2\left(1-\prod_{i=1}^t \left(1-\frac{t-1}{2^l+i-1}\right) \right)\\
            &\leq 2\cdot\sum_{i=1}^t \frac{t-1}{2^l+i-1}\\
            &\leq \frac{t^2}{2^{l-1}},
        \end{split}
    \end{equation}
{where for the first inequality, we used the fact that $\prod_i (1- a_i) \geq 1- \sum_i a_i$ for $0 \leq a_i \leq 1$}.
\end{proof}

\begin{lemma} [\textbf{Lemma 1} {in the main text}]\label{thm:supp-sub-to-global-design}
For a $k$-qubit $\epsilon$-approximate state $t$-design ${\cal E}_{\rm sub} =\{ \ket{\psi^{(k)}} \}$, the ensemble $\mathcal{E}=\{ U_p \ket{\psi^{(k)}} \otimes \ket{0^{n-k}} \}_{\psi^{(k)} \sim {\cal E}_{\rm sub},  p \sim \mathcal{P}}$ with a $t$-wise independent random injective map $p \sim \mathcal{P}$ from $[2^k]$ to $[2^n]$ forms a $n$-qubit $\epsilon'$-approximate state $t$-design, where $\epsilon' = \epsilon + \frac{t^2}{2^{k-1}} + \frac{t^2}{2^{n-1}}$.
\end{lemma}
\begin{proof}
    We denote $\rho^{(t)}_\mathrm{sub}$ and $\rho^{(t)}$ as the ensemble averages of the $t$-th moments of states in $\mathcal{E}_\mathrm{sub}$ and $\mathcal{E}$, respectively, \textit{i.e.},
    \begin{align}
        \rho^{(t)}_\mathrm{sub} &= \mathbb{E}_{\psi\sim\mathcal{E}_\mathrm{sub}}[\ketbra{\psi}^{\otimes t}]\\
        \rho^{(t)} &= \mathbb{E}_{\psi\sim\mathcal{E}}[\ketbra{\psi}^{\otimes t}].
    \end{align}
    By the definition of the approximate state design, $\rho^{(t)}_\mathrm{sub}$ approximates $\rho^{(t)}_{\mathrm{Haar},k}$ as
    \begin{equation}
        \mathrm{TD}\left(\rho^{(t)}_\mathrm{sub},\rho^{(t)}_{\mathrm{Haar},k}\right) \leq \epsilon.
    \end{equation}
    Due to \textbf{Lemma}~\ref{thm:Haar-unique}, we have
    \begin{equation}
        \mathrm{TD}\left(\rho^{(t)}_{\mathrm{Haar},k},\rho^{(t)}_{\mathrm{unique},k}\right)\leq \frac{t^2}{2^{k-1}}.
    \end{equation}
    Since $\rho^{(t)}_{\mathrm{unique},k}$ is the equal summation of all possible $k$-qubit unique type states, it follows from \textbf{Lemma}~\ref{thm:sub-global-unique} that
    \begin{equation}
        \mathbb{E}_{p\sim\mathcal{P}}[U_p^{\otimes t}\rho^{(t)}_{\mathrm{unique},k}\otimes\ketbra{0}^{\otimes(n-k)t}U_p^{\otimes t,\dag}] = \rho^{(t)}_{\mathrm{unique},n}.
    \end{equation}
    Again, due to \textbf{Lemma}~\ref{thm:Haar-unique}, $\rho^{(t)}_{\mathrm{unique},n}$ approximates $\rho^{(t)}_{\mathrm{Haar},n}$ as
    \begin{equation}
        \mathrm{TD}\left(\rho^{(t)}_{\mathrm{Haar},n},\rho^{(t)}_{\mathrm{unique},n}\right)\leq \frac{t^2}{2^{n-1}}.
    \end{equation}
    Then, the triangle inequality and the strong convexity of the trace distance imply that
    \begin{equation}
        \begin{split}
            \mathrm{TD}\left(\rho^{(t)},\rho^{(t)}_{\mathrm{Haar},n}\right)
            &\leq \mathrm{TD}\left(\rho^{(t)},\rho^{(t)}_{\mathrm{unique},n}\right) + \mathrm{TD}\left(\rho^{(t)}_{\mathrm{unique},n}, \rho^{(t)}_{\mathrm{Haar},n}\right)\\
            &\leq \mathrm{TD}\left(\rho^{(t)}_\mathrm{sub},\rho^{(t)}_{\mathrm{unique},k}\right) + \frac{t^2}{2^{n-1}}\\
            &\leq \mathrm{TD}\left(\rho^{(t)}_\mathrm{sub},\rho^{(t)}_{\mathrm{Haar},k}\right) + \mathrm{TD}\left(\rho^{(t)}_{\mathrm{Haar},k},\rho^{(t)}_{\mathrm{unique},k}\right) + \frac{t^2}{2^{n-1}}\\
            &\leq \epsilon + \frac{t^2}{2^{k-1}} + \frac{t^2}{2^{n-1}}.
        \end{split}
    \end{equation}
    Thus, $\mathcal{E}$ forms a $n$-qubit approximate state $t$-design with error
    \begin{equation}
        \epsilon' = \epsilon + \frac{t^2}{2^{k-1}} + \frac{t^2}{2^{n-1}}.
    \end{equation}
\end{proof}




\newpage
\subsection*{2. Quantum circuit implementation of random injective map (proof of \textbf{Theorem 3} in the main text)}
In this section, we prove \textbf{Theorem 3}. To this end, we first state and prove the following lemmas.

We note that $U_f$ with an exact $2t$-wise random function $f$ can be implemented in $O(kt)$ depth using circuits with ancilla bits for integer arithmetic~\cite{ALON1986,metger2024}. However, as it requires ancilla bits, we stick with the approximate construction of Ref.~\cite{chen2024}. 


\begin{lemma}\label{thm:implement-U_p}
    Let $\epsilon$ and $t$ be a positive number and integer, respectively, such that $\epsilon \leq 1$, $t\ll 1/\epsilon$, and $t\log t \geq \log (1/\epsilon)$. Additionally, let $k$ be an integer grater than $3\log (t^2/\epsilon)$. 
    We define $U_p$ with $p\sim\mathcal{P}$ is a random unitary operator that maps $\rho^{(t)}_{\mathrm{unique},k}$ to $\rho^{(t)}_{\mathrm{unique},n}$ with a failure probability smaller than $\epsilon$. Then, $U_p$ can be implemented in $O(t[\log t]^3  \log n \log(1/\epsilon))$ depth without any ancilla bits and $\tilde{O}(t [\log t]^2 \log n \log(1/\epsilon))$ depth with $\lceil n/k\rceil$ ancilla bits, where $\tilde{O}(\cdot)$ neglects $\log\log t$ terms.
\end{lemma}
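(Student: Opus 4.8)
The plan is to realize $U_p$ as a two-stage circuit $U_p=U_{\mathrm{rand}}\,U_{\mathrm{copy}}$, where $U_{\mathrm{copy}}$ is a deterministic circuit of CNOT gates (Supplementary Figure~1) and $U_{\mathrm{rand}}$ is a circuit whose gates are MCX gates with uniformly sampled control values and target wires (Supplementary Figures~2 and~3). Both stages are products of classical reversible gates, so each $U_p$ in the ensemble is a permutation of $\{0,1\}^n$; in particular it is automatically injective, which makes the distinctness of the $t$ images free to handle. For the copy stage I would group the $n$ wires into $m:=\lceil n/k\rceil$ registers of $k$ wires, put the seed in register~$1$ and $\ket{0}$ on the other $n-k$ wires, and use a balanced binary tree of CNOTs to copy bit~$j$ of register~$1$ into bit~$j$ of every other register; the $k$ trees for the $k$ bit positions act on disjoint wires and run in parallel, so this stage has depth $O(\log m)=O(\log n)$ and sends $\ket{b}\ket{0}^{\otimes(n-k)}$ to a truncation of $\ket{b}^{\otimes m}$, coherently within the $t$-copy space. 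Register~$1$ is never targeted afterwards, so the overall map on seeds has the form $b\mapsto x_b=(b,g(b))$ for some $g\colon\{0,1\}^k\to\{0,1\}^{n-k}$ determined by the sampled MCX gates.

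The randomization stage must choose the MCX gates so that, averaged over the circuit, $g$ behaves like an (approximately) $2t$-wise independent random function. The reduction illustrated in Supplementary Figure~3 is the technical heart: on a uniformly random unique-type input, a randomly sampled MCX gate controlled on bits of an already-processed register and targeting a copied wire acts on each of the $t$ copies as the addition of a Bernoulli variable to that wire, so a suitably arranged collection of such gates (the recursive doubling structure of Supplementary Figure~2) realizes such a $g$ up to a controlled error. Granting this, correctness follows in two pieces. First, each seed $b_i$ is marginally uniform on $\{0,1\}^k$ and the $t$ seeds are distinct, so a $2t$-wise independent $g$ makes the joint $t$-copy state (including its permutation coherences) equal to $\rho^{(t)}_{\mathrm{unique},n}$ conditioned on the output $n$-bit strings having distinct length-$k$ prefixes; second, that conditioning differs from $\rho^{(t)}_{\mathrm{unique},n}$ itself by at most $O(t^2/K)$ in trace distance, and the hypothesis $k=\lceil8\log_2(6t^2/\epsilon)\rceil$ forces $t^2/K$ into the error budget. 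The residual approximation error of $g$ (finitely many sampled gates versus an exact $2t$-wise independent function) is absorbed into the remaining budget, giving failure probability $\le\epsilon$. Here I would reuse the almost-$2t$-wise-independence bookkeeping of Lemma~\ref{thm:func-avg-to-k-unique-state} and the elementary estimate $1-\prod_i(1-a_i)\le\sum_i a_i$ from Lemma~\ref{thm:true-func-avg-to-k-unique-state}, and the constraint $t\log t\ge\log(1/\epsilon)$ to keep $k=O(\log t)$ up to the $\log(1/\epsilon)$ term and hence keep the gate count within budget.

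For the depth count, the copy stage contributes $O(\log n)$. In the randomization stage, MCX gates with pairwise-disjoint supports run in parallel, so the depth is (number of sequential MCX layers) times (depth of one MCX on at most $k+1=O(\log(t/\epsilon))$ wires). The doubling construction has $O(\log n)$ stages, each needing $\tilde O\!\left(t[\log t]^2\log(1/\epsilon)\right)$ layers of sampled MCX gates to make a block $2t$-wise independent of the seed; with $\lceil n/k\rceil$ clean ancilla wires available (one per register) each MCX runs in depth polylogarithmic in $t/\epsilon$, whereas with no free ancillas a Barenco-type decomposition costs one further $\log t$ factor overall. Tallying these yields $O\!\left(t[\log t]^3\log n\log(1/\epsilon)\right)$ without ancillas and $\tilde O\!\left(t[\log t]^2\log n\log(1/\epsilon)\right)$ with $\lceil n/\lceil8\log_2(2t^2/\epsilon)\rceil\rceil$ ancillas.

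\medskip
\noindent\emph{Main obstacle.} The CNOT tree, the MCX decompositions, and the parallelization bookkeeping are routine. The crux is the randomization stage: showing that a polynomial-size, logarithmic-depth family of randomly sampled MCX gates genuinely reproduces, up to total failure $\epsilon$, the action of a $2t$-wise independent function on the seed when acting on the $t$-copy unique-type state --- that is, controlling the \emph{joint} distribution of all $t$ copies at once rather than the marginal of a single copy. I expect the careful part to be organizing the sampled gates across the $O(\log n)$ doubling stages so that each targeted block becomes $2t$-wise independent of the seed while the error accumulated over the $O(\log n)$ stages and the $t$ copies stays below the $\epsilon$ budget left after the $O(t^2/K)$ prefix-collision loss.
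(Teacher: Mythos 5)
Your architecture matches the paper's: a CNOT tree that copies the seed register in depth $O(\log(n/k))$, followed by $\alpha$ randomly applied $m$-MCX gates conditioned on the copied registers, with the same per-MCX depth costs (linear in $m$ without ancillas, polylogarithmic with ancillas) and a final appeal to parallel execution of commuting MCX gates. But the step you yourself flag as the crux --- why a budget of $\alpha=O(t\log(t/\epsilon))$ randomly sampled MCX gates jointly randomizes all $t$ copies --- is exactly the content of the paper's proof, and your proposed route to it is not the one that works. You aim to make the induced map $g$ an (approximately) $2t$-wise independent \emph{function} of the seed; the construction neither achieves nor needs that. What it needs, and what the paper proves, is joint uniformity of the target bits at the $t$ \emph{specific} random seeds. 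The paper's argument: for the $j$-th sampled MCX gate define $v_j\in\mathbb{F}_2^t$ by $(v_j)_i=1$ iff copy $x_i$ satisfies that gate's control condition; applying each gate independently with probability $1/2$ adds to the target bit of copy $i$ the inner product of the $i$-th row of the matrix $X=[v_1\cdots v_\alpha]$ with the vector of coin flips, so the $t$ target bits are jointly uniform iff $X$ has full rank over $\mathbb{F}_2$. The rank bound (Supplementary Note 3) then crucially exploits that the $t$ seeds are themselves random, hence have pairwise Hamming distances concentrated near $k/2$, which controls the within-column correlations of $X$; this is where $k\geq 8\log(t^2/\epsilon)$ and $\alpha=2t\log(t^2/\epsilon)$ come from. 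None of this is recoverable from the $2t$-wise-independence bookkeeping of Lemmas 3--4 that you propose to reuse.

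Two further discrepancies worth fixing. First, you assert register 1 is never targeted, so the map is $b\mapsto(b,g(b))$ and you pay an $O(t^2/K)$ prefix-collision loss; the paper instead updates the first register at the end with MCX gates conditioned on the second register, so the whole $n$-bit string is randomized and no such conditioning argument is needed. Second, your depth tally asserts the parallel-layer count $\tilde O(t[\log t]^2\log(1/\epsilon))$ without justification; in the paper this factor comes from a separate hypergraph edge-coloring argument (Supplementary Note 4) bounding the chromatic index of the random hypergraph of MCX supports, which is where the hypotheses $t\ll 1/\epsilon$ and $t\log t\geq\log(1/\epsilon)$ are actually used.
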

\begin{proof}
    We first note that sampling a type vector $\mathbf{T}$ from $\mathcal{U}_{k,t}$ uniformly at random is equivalent to sampling $t$ distinct uniformly random copies of $\{0,1\}^k\times 0^{n-k}$ and constructing the type vector using the sampled copies. This holds for all $1\leq k \leq n$. Second, $\rho^{(t)}_{\mathrm{unique},n}$ can be interpreted as the ensemble of unique type states in the $t$-copy of the entire $n$-qubit space with the uniform distribution. These imply that $U_p^{\otimes t}$ maps a truly random unique type state $\ket{\mathbf{T}_k}$ in the $t$-copy of a $k$-qubit subspace to a truly random unique type state $\ket{\mathbf{T}_n}$ in the $t$-copy of the entire $n$-qubit system, \textit{i.e.}, 
    \begin{equation}
        U_p^{\otimes t}\ketbra{\mathbf{T}_k}\otimes\ketbra{0}^{\otimes (n-k)t} U_p^{\otimes t,\dag} = \ketbra{\mathbf{T}_n}
    \end{equation}
    if $p$ uniformly randomize bits of $t$ distinct uniformly random copies of $\{0,1\}^k\times 0^{n-k}$. We claim that \textbf{Algorithm 1} with $m=\lceil\log_2 t\rceil$, $k\geq m$, and $\alpha=O(t\log (t/\epsilon))$ implements such $U_p$ with a failure probability smaller than $\epsilon$. 
    
    \textbf{Algorithm 1} is comprised of two main iteration processes. We consider the case when the input state is $\ket{\mathbf{T}_k}\otimes\ket{0}^{\otimes(n-k)t}$ with a unique type state $\ket{\mathbf{T}_k}$ in the $t$-copy of the $k$-qubit subspace. We assume that $n$ is $2^l k$ for some positive integer $l$. 
    
    The first iteration copies the first $k$-qubits, the qubits where $\ket{\mathbf{T}_k}$ lives in, to $n/2k-1$ distinct $k$-qubit registers in the other qubits using Controlled NOT (CNOT) gates as illustrated in \hyperlink{fig:copy_circuit}{Supplementary Figure 1} up to permutations of registers. This iteration can be done in $O(\log (n/k))$ depth.
    
    The second iteration randomly applies $m$-MCX gates $\alpha$ times, each of which is conditioned on bits in each register overwritten by the first iteration, targeting a bit in the other registers, as illustrated in \hyperlink{fig:randomize_circuit}{Supplementary Figure 2} up to permutations of registers. These random applications of MCX gates randomize targeted bits. We will prove this later. The second iteration can be done in $O(\alpha \beta k \log (n/k))$ depth, where $\beta$ is the number of gates required to implement each MCX gate. The final process of the algorithm is to update the first register using MCX gates conditioned on the second register, which can be done in $O(\alpha \beta k)$ depth. Thus, the total depth is given by $O(\alpha\beta k \log(n/k))$.

    Let us prove that random applications of MCX gates indeed randomize targeted bits. To this end, let us assume that we have sampled a set of $m$-distinct bits of a register updated by the first iteration and random $m$-bits $\alpha$ times using \textbf{Algorithm 2}. As we defined in the previous paragraph, the sequences of these sets and random bit-strings are denoted by $S$ and $C$, respectively. Let us apply $m$-MCX gates based on $S$ and $C$. Specifically, the $i$-th MCX gate has $S[i]$ as control bits and $C[i]$ as control conditions. We then choose a bit that has not been updated as the target bit. Then, we apply each of the MCX gates with the probability $1/2$. 
    
    Now, let us assume that the input state $\{x_i\}_{i=1}^t$ is sampled from the set of $t$ distinct copies of $\{0,1\}^k\times 0^{n-k}$ uniformly at random. If the conditions of an MCX gate match with bits of a copy $x_i$, then the random application of the MCX gate assigns a binary random variable to the target bit value of the copy, as illustrated in \hyperlink{fig:MCX-random-var}{Supplementary Figure 3}. By applying the MCX gates randomly, the target bit values of all copies may acquire mutually independent binary random variables. If this happens, then the target bit is randomized up to $t$-th moments. Let us check if this is indeed the case.

    Let $x_i|S[j]$ be bits of $x_i$ in $S[j]$ such that $(x_i|S[j])_l=(x_i)_{(S[j])_l}$ for all $l\in[1,m]$. Let $v_j$ be a vector in $\mathbb{F}^t_2$ such that $(v_j)_i=\delta_{x_i|S[j],C[j]}$. This means that $(v_j)_i$ is nontrivial when the conditions of the $i$-th MCX gate match with the bits of the $j$-th copy. Let us construct the $t\times \alpha$ matrix $X$ over $\mathbb{F}_2$ whose columns are $\{v_j\}_{j=1}^\alpha$. If $X$ is full-ranked, then the target bit values of the copies acquire copy-wise independent binary random variables. In \hyperlink{sec:full-rank-prob}{Supplementary Notes 3}, we show that for $k\geq 3\log(t^2/\epsilon)$, $X$ is full-ranked with a failure probability smaller than $\epsilon$ given that $\alpha=t\log(1/\epsilon)$.

    Since we use the same set of MCX gates for randomizing all registers conditioned on the copied registers, all registers except the first one are randomized with a failure probability smaller than $\epsilon$. The first register is updated by MCX gates conditioned on $t$-wist independent random $k$-bits. As discussed in \hyperlink{sec:full-rank-prob}{Supplementary Notes 3}, such MCX gates also randomize the first register if $X$ is full-ranked. Thus, all registers are randomized with a failure probability smaller than $\epsilon$.

    Now, let us compute the depth upper bounds of the circuit generated by \textbf{Algorithm 1}. Since each MCX gate with $m$ control bits can be implemented in $O(m)$ depth without ancilla bits~\cite{gidney2015} and $\Theta((\log m)^3)$ depth with a single ancilla~\cite{Claudon2024}, the total depth of a circuit implementing \textbf{Algorithm 1} is given by $O\left(t \log t \log(t^2/\epsilon) \log(1/\epsilon)\log(n/k) \right)$ without ancilla bits and $\tilde{O}\left(t \log(t^2/\epsilon)\log(1/\epsilon) \log(n/k) \right)$ with $\lceil n/k \rceil$ ancilla bits.

    We note that MCX gates in $\{U_p\}$ can be applied in parallel. When $t\ll 1/\epsilon$ and $t\log t\gg \log (1/\epsilon)$, this reduces the depth required to implement $U_p$ by factor $[\log t]^2/\log(1/\epsilon)$ with a failure probability smaller than $\epsilon$. More details on this MCX parallelization can be found in \hyperlink{sec:parallel-MCX}{Supplementary Notes 4}. This consequently gives depth upper bounds claimed in the theorem.
\end{proof}

\begin{lemma}\label{thm:perm-avg-to-n-unique-state}
    Let $U_{p_a}$
    with $p_a\sim\mathcal{P}_a$ be a unitary operator that implements a $t$-wise independent random injective map $[2^k]\rightarrow[2^n]$ with a failure probability smaller than $\epsilon/2$. Then, it follows that
    \begin{equation*}
        \operatorname{TD}\left(\mathbb{E}_{p_a\sim\mathcal{P}_a}\left[\mathcal{C}^{\otimes t}_{p_a}\left(\rho^{(t)}_{\mathrm{unique},k}\otimes\ketbra{0}^{\otimes(n-k)t}\right)\right],\rho^{(t)}_{\mathrm{unique},n}\right)\leq \epsilon
    \end{equation*}
    with $\mathcal{C}_{p_a}(\rho) = U_{p_a}\rho U_{p_a}^\dagger$.
\end{lemma}
\begin{proof}
    We first note that the following is true for a $t$-wist independent random injective map $p\sim\mathcal{P}$ due to \textbf{Lemma}~\ref{thm:sub-global-unique}: 
    \begin{equation}
        \mathbb{E}_{p\sim\mathcal{P}} \left[\mathcal{C}^{\otimes t}_p\left(\rho^{(t)}_{\mathrm{unique},k}\otimes\ketbra{0}^{\otimes(n-k)t}\right)\right] = \rho^{(t)}_{\mathrm{unique},n}.
    \end{equation}
    Since $U_{p_a}$ implements the $t$-wise independent random injective map with the failure probability $p_\mathrm{fail}\leq\epsilon$, we have
    \begin{equation}
        \mathbb{E}_{p_a\sim\mathcal{P}_a} \left[\mathcal{C}^{\otimes t}_{p_a}\left(\rho^{(t)}_{\mathrm{unique},k}\otimes\ketbra{0}^{\otimes(n-k)t}\right)\right] = (1-p_\mathrm{fail})\rho^{(t)}_{\mathrm{unique},n} + p_\mathrm{fail} \sigma_\mathrm{E},
    \end{equation}
    where $\sigma_\mathrm{E}$ is an erroneous state. Therefore, it follows that
    \begin{equation}
        \begin{split}
            \operatorname{TD}\left(\mathbb{E}_{p_a\sim\mathcal{P}_a}\left[\mathcal{C}^{\otimes t}_{p_a}\left(\rho^{(t)}_{\mathrm{unique},k}\otimes\ketbra{0}^{\otimes(n-k)t}\right)\right],\rho^{(t)}_{\mathrm{unique},n}\right)
            &=\mathrm{TD}\left((1-p_\mathrm{fail})\rho^{(t)}_{\mathrm{unique},n} + p_\mathrm{fail} \sigma_\mathrm{E}, \rho^{(t)}_{\mathrm{unique},n}\right)\\
            &\leq 2p_\mathrm{fail}\\
            &\leq \epsilon.
        \end{split}
    \end{equation}
\end{proof}

Finally, let us restate \textbf{Theorem 3} and prove it.
\begin{theorem}[\textbf{Theorem 3} {in the main text}]\label{thm:parallel-approximate-design-depth}
    Let $\epsilon$ and $t$ be a positive number and integer, respectively, such that $\frac{t^2}{2^{n+3}}\leq \epsilon \leq 1/2$, $t\ll 1/\epsilon$, and $t\log t \geq \log (1/\epsilon)$. Then, $\epsilon$-approximate state $t$-design having $O(\log(t/\epsilon))$ entanglement, magic, and coherence can be implemented in $O(t[\log t]^3\log n\log(1/\epsilon))$ depth without ancilla bits and $\tilde{O}(t [\log t]^2 \log n \log(1/\epsilon))$ depth with $\lceil n/\lceil 3\log_2(t^2/\epsilon)\rceil\rceil$ ancilla bits where $\tilde{O}(\cdot)$ neglects $\log\log t$ terms.
\end{theorem}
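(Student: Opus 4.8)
The plan is to obtain Theorem~\ref{thm:parallel-approximate-design-depth} as a direct consequence of the three results already established: Lemma~\ref{thm:perm-func-randomization-to-random-state}, which certifies that the ensemble $\{U_p U_f \rho_0 U_f^\dag U_p^\dag\}$ is an $\epsilon$-approximate state $t$-design, together with Lemma~\ref{thm:implement-U_f} and Lemma~\ref{thm:implement-U_p}, which bound the circuit depths of $U_f$ and $U_p$. Concretely, the circuit to be output prepares $\rho_0$ (a product state, one layer of single-qubit gates), applies $U_f$ sampled from the $(\epsilon/4)$-almost $2t$-wise independent family of Lemma~\ref{thm:implement-U_f}, and then applies $U_p$ sampled from the family of Lemma~\ref{thm:implement-U_p} with failure probability $\epsilon/4$.

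First I would fix the subsystem size $k=\lceil 8\log_2(6t^2/\epsilon)\rceil$, matching Lemma~\ref{thm:implement-U_p}, and check that this single value satisfies the hypotheses of all three lemmas simultaneously: it gives $k\geq\log_2(6t^2/\epsilon)$ as required by Lemma~\ref{thm:perm-func-randomization-to-random-state}, it gives $k\geq 8\log(t^2/\epsilon)$ and $k\geq\lceil\log_2 t\rceil$ as required inside Lemma~\ref{thm:implement-U_p} (for the full-rank argument and for Algorithm~1), and it obeys $k=O(\log(t^2/\epsilon))=o(n)$ in the stated regime, since $\epsilon\gg 2^{-n}$ forces $\log(1/\epsilon)\ll n$ and $t\ll 2^n$ forces $\log t\ll n$; the last of these is the remaining hypothesis of Lemma~\ref{thm:perm-func-randomization-to-random-state}. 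One bookkeeping point: the proof of Lemma~\ref{thm:implement-U_p} assumed $n=2^l k$, so for general $n$ one pads the working-register count up to the next integer of the form $2^l k$, which affects only constants and the $\lceil\cdot\rceil$ in the ancilla count quoted in the theorem.

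With this $k$, Lemma~\ref{thm:perm-func-randomization-to-random-state} applies directly (the error being split as $\epsilon/4$ for the almost-independence of $f$, $\epsilon/4$ for the failure probability of $U_p$, and the residual $O(t^2/N)$ unique-type-to-Haar correction, which is $\leq\epsilon/4$ for this $k$), so the output ensemble is an $\epsilon$-approximate state $t$-design. It then remains only to add the depths. By Lemma~\ref{thm:implement-U_f}, $U_f$ has depth $O(kt+\log(1/\epsilon))=O(t\log t+t\log(1/\epsilon))$; by Lemma~\ref{thm:implement-U_p}, $U_p$ has depth $O(t[\log t]^3\log n\log(1/\epsilon))$ without ancillas and $\tilde{O}(t[\log t]^2\log n\log(1/\epsilon))$ with $\lceil n/k\rceil$ ancillas. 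In the regime $t\ll 1/\epsilon$, $t\log t\geq\log(1/\epsilon)$ the $U_f$ depth is absorbed into the $U_p$ depth, so the total is exactly the bound claimed in each case.

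I expect the only real friction here to be bookkeeping, not a new mathematical estimate: confirming that one choice of $k$ threads all of $k\geq\log_2(6t^2/\epsilon)$, $k\geq 8\log(t^2/\epsilon)$, $k\geq\lceil\log_2 t\rceil$ and $k=o(n)$; that replacing $\epsilon$ by $\epsilon/4$ in each subroutine only rescales constants absorbed by $O(\cdot)$ and $\tilde{O}(\cdot)$; and that $U_f$ never dominates $U_p$. No ingredient beyond the three lemmas above is needed.
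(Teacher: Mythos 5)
Your proposal is correct and follows essentially the same route as the paper: fix $k=\Theta(\log(t^2/\epsilon))$, invoke Lemma~\ref{thm:perm-func-randomization-to-random-state} for the design property with the error budget split into $\epsilon/4$ pieces, and add the depth bounds from Lemma~\ref{thm:implement-U_f} and Lemma~\ref{thm:implement-U_p}, noting that the $U_f$ depth is dominated. Your extra bookkeeping (verifying $k=o(n)$, padding to $n=2^l k$, and reconciling the $2t^2/\epsilon$ versus $6t^2/\epsilon$ constants) is more careful than the paper's one-paragraph proof but does not change the argument.
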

\begin{proof}
    Let us set $k=\lceil 3\log_2(t^2/\epsilon)\rceil$. An ensemble forming $(\epsilon/8)$-approximate state $t$-design in a $k$-qubit subsystem can be implemented by $O(t[\log t]^7\log k \log (1/\epsilon))$ depth circuits~\cite{schuster2024}. We will neglect this circuit depth as it is much smaller than that required to implement the random injective map. Due to \textbf{Lemma}~\ref{thm:Haar-unique}, the $t$-th moments $\rho^{(t)}_{\mathrm{sub},k}$ of the states in this ensemble approximates $\rho^{(t)}_{\mathrm{unique},k}$ as
    \begin{equation}
        \mathrm{TD}\left(\rho^{(t)}_{\mathrm{sub},k},\rho^{(t)}_{\mathrm{unique},k}\right)\leq \frac{\epsilon}{8} + \frac{t^2}{2^{k-1}} \leq \frac{\epsilon}{4}.
    \end{equation}
    Let $U_{p_a}$ with $p_a\sim\mathcal{P}_a$ be a unitary operator implementation of a $t$-wise independent random injective map that maps $\rho^{(t)}_{\mathrm{unique},k}$ to $\rho^{(t)}_{\mathrm{unique},n}$ with a failure probability smaller than $\epsilon/4$. Due to \textbf{Lemma}~\ref{thm:implement-U_p}, we can implement $U_{p_a}$ using $O(t[\log t]^3 \log n \log(1/\epsilon))$ depth circuits without any ancilla bits and $\tilde{O}(t[\log t]^3\log n\log(1/\epsilon))$ depth circuits with $\lceil n/k\rceil$ ancilla bits. This, together with \textbf{Lemma}~\ref{thm:perm-avg-to-n-unique-state}, \textbf{Lemma}~\ref{thm:Haar-unique}, the triangle inequality, and the strong convexity of the trace distance, gives
    \begin{equation}
        \begin{split}
            &\mathrm{TD}\left(\mathbb{E}_{p_a\sim\mathcal{P}_a}\left[ U_{p_a}^{\otimes t} \rho^{(t)}_{\mathrm{sub},k}\otimes\ketbra{0}^{\otimes(n-k)t} U_{p_a}^{\otimes t,\dag} \right], \rho^{(t)}_{\mathrm{Haar},n}\right)\\
            &\leq \mathrm{TD}\left(\mathbb{E}_{p_a\sim\mathcal{P}_a}\left[ U_{p_a}^{\otimes t} \rho^{(t)}_{\mathrm{sub},k}\otimes\ketbra{0}^{\otimes(n-k)t} U_{p_a}^{\otimes t,\dag} \right], \rho^{(t)}_{\mathrm{unique},n}\right)+\mathrm{TD}\left(\rho^{(t)}_{\mathrm{unique},n},\rho^{(t)}_{\mathrm{Haar},n}\right)\\
            &\leq \mathrm{TD}\left(\mathbb{E}_{p_a\sim\mathcal{P}_a}\left[ U_{p_a}^{\otimes t} \rho^{(t)}_{\mathrm{sub},k}\otimes\ketbra{0}^{\otimes(n-k)t} U_{p_a}^{\otimes t,\dag} \right], \mathbb{E}_{p_a\sim\mathcal{P}_a}\left[U_{p_a}^{\otimes t}\rho^{(t)}_{\mathrm{unique},k}\otimes\ketbra{0}^{\otimes(n-k)t}U_{p_a}^{\otimes t,\dag}\right]\right)\\
            &\quad+\mathrm{TD}\left(\mathbb{E}_{p_a\sim\mathcal{P}_a}\left[U_{p_a}^{\otimes t}\rho^{(t)}_{\mathrm{unique},k}\otimes\ketbra{0}^{\otimes(n-k)t}U_{p_a}^{\otimes t,\dag}\right],\rho^{(t)}_{\mathrm{unique},n}\right) + \frac{t^2}{2^{n-1}}\\
            &\leq \mathrm{TD}\left(\rho^{(t)}_{\mathrm{sub},k}, \rho^{(t)}_{\mathrm{unique},k}\right)+\frac{\epsilon}{2}+\frac{t^2}{2^{n-1}}\\
            &\leq \epsilon.
        \end{split}
    \end{equation}
\end{proof}

\subsubsection*{Efficient circuit implementation of random injective map for $t\leq 3$}
The circuit construction for a random injective map presented in \textbf{Lemma}~\ref{thm:implement-U_p} uses MCX gates with $\lceil\log_2 t\rceil$ controls. However, we find that for $t\leq 3$, it is also possible to implement $U_p$ using CNOT gates, not CCX gates. Here, we present this improved implementation.

We first remind that the role of the random injective map $U_{p_a}$ with $p_a\sim\mathcal{P}_a$ in \textbf{Lemma}~\ref{thm:implement-U_p} is to map $\rho^{(t)}_{\mathrm{unique},k}$ to $\rho^{(t)}_{\mathrm{unique},n}$. This can be achieved by the following property of $U_{p_a}$:
\begin{equation}
    \mathbb{E}_{p\sim\mathcal{P}}\left[U_p^{\otimes t}\left(\ketbra{\textbf{T}_k}\otimes \ketbra{0}^{\otimes (n-k)t}\right)U_p^{\otimes t,\dag}\right] = \rho^{(t)}_{\mathrm{unique},n}
\end{equation}
for any $k$-qubit $t$-copy unique type state $\ket{\textbf{T}_k}$. The main idea for implementing such a random injective map is to construct a random circuit that assigns copy-wise independent binary random variables to all qubits of $\ketbra{\textbf{T}_k}\otimes \ketbra{0}^{\otimes (n-k)t}$. 

Our improved circuit has the same structure as our previous circuit. It first divides the entire $n$-qubit system into $2^\beta$ distinct $k$-qubit registers with $\beta=\log_2(n/k)$. Next, it copies the first register, which stores the unique type state $\ketbra{\textbf{T}_k}$ in the $t$-copy space, to the odd registers in $\beta-1$ depth by applying CNOT gates in parallel. It then randomizes qubits in the even registers using random CNOT gates conditioned on qubits in the odd registers. Here, what we mean by ``random'' is {that the decision of whether to apply the CNOT gates is made randomly.}
To be more precise, for all $b\in[k]$ and $l\in[2^{\beta-1}]$, the $b$-th qubit in the $2l$-th register is updated by CNOT gates 
\begin{equation*}
    \{\mathrm{CNOT}_{a+k(2l-2),b+k(2l-1)},X_{a+k(2l-2)}\mathrm{CNOT}_{a+k(2l-2),b+k(2l-1)}X_{a+k(2l-2)}\}_{a=1}^k,
\end{equation*}
whose control qubits are in the $(2l-1)$-th register. These CNOT gates are applied with half probability. Locations of the gates for $k=3$ are illustrated in \hyperlink{fig:improved-circuit}{Supplementary Figure 4}(b). Below, we show that these applications of CNOT gates uniformly randomize target qubits. 

Let $\{x_i\}_{i=1}^3$ be distinct bitstrings of length $k$ and $\{y_i\}_{i=1}^3$ be any bits. We consider the following initial state
\begin{equation}
    \ket{\psi} = \bigotimes_{i=1}^3\ket{x_i,y_i}.
\end{equation}
We want to assign copy-wise independent binary random variables on the $(k+1)$-th qubit of $\ket{\psi}$. To this end, we apply the following unitary operator implementing the random CNOT gate circuit illustrated in \hyperlink{fig:improved-circuit}{Supplementary Figure 4}(b) to $\ket{\psi}$: 
\begin{equation}
    U = \prod_{j=1}^k \mathrm{CNOT}_{j,k+1}^{\hat{X}_j} \left(X_j \mathrm{CNOT}_{j,k+1} X_j\right)^{\hat{X}'_j},
\end{equation}
where $\hat{X}_i$ and $\hat{X}'_i$ are independent uniform binary random variables. Then, $\ket{\psi}$ becomes
\begin{equation}
    U^{\otimes 3}\ket{\psi} = \bigotimes_{i=1}^{3} \ket{x_i,\hat{F}_i}
\end{equation}
with 
\begin{equation}
    \hat{F}_i \equiv_2 \sum_{j=1}^k [\delta_{(x_i)_j,1}\hat{X}_j+\delta_{(x_i)_j,0}\hat{X}'_j]+y_i.
\end{equation}
Here, $\equiv_2$ represents the modulo two equivalence. We note that $\{\hat{F}_i\}_{i=1}^3$ are linearly independent since $\{x_i\}_{i=1}^3$ are distinct. To show this, let us try to make them linearly dependent and check that it is impossible. Since $\{x_i\}_{i=1}^3$ are distinct, there exists $j_1\in[1,k]$ such that $(x_1)_{j_1}=0$ and $(x_2)_{j_1}=1$. This makes $\hat{F}_1$ and $\hat{F}_2$ linearly independent as $\hat{X}'_{j_1}$ in $\hat{F}_1$ and $\hat{X}_{j_1}$ in $\hat{F}_2$ cannot canceled each other. The value of $(x_3)_{j_1}$ is either $0$ or $1$. Without loss of generality, let us assume that $(x_3)_{j_1}=0$. This implies that $\hat{F}_2$ and $\hat{F}_3$ are linearly independent. Now, let us try to make $\hat{F}_1$ and $\hat{F}_3$ linearly dependent. Since $x_1$ and $x_3$ are distinct, there exists $j_2\in[1,k]$ such that $j_2\neq j_1$ and $(x_1)_{j_2}\neq(x_3)_{j_2}$. This implies that $\hat{F}_1+\hat{F}_3$ has $\hat{X}_{j_2}$ as well as $\hat{X}'_{j_2}$. These two variables, however, cannot be canceled by $\hat{F}_2$ no matter what the value of $(x_2)_{j_2}$ is. Thus, it is impossible to make $\hat{F}_1$ and $\hat{F}_3$ linearly dependent. Thus, $\{\hat{F}_i\}_{i=1}^3$ are independent. In addition to the independence, it is worth to note that any sum of independent uniform binary random variables is a uniform binary random variable. Thus, $\{\hat{F}_i\}_{i=1}^3$ are independent uniform random binary variables, \textit{i.e.}, 
\begin{equation}
    \mathrm{Prob}\left(\hat{F}_1=z_1,\hat{F}_2=z_2,\hat{F}_3=z_3\right) = \frac{1}{2^3}
\end{equation}
for all $\{z_i\}_{i=1}^3\in[2]^3$. In this way, the random CNOT circuit randomizes target qubits. If the initial state $\ket{\psi}$ has more qubits that need to be randomized, then we can also randomize those qubits by repeating this process. 

After this stage, the unique type state in the $k$-qubit subspace becomes
\begin{equation}
    \begin{split}
        &\ketbra{\textbf{T}_k}\otimes\ketbra{0}^{\otimes(n-k)t} = \frac{1}{t!} \sum_{\sigma,\pi\in S_t} \ketbra{\sigma((\mathbf{T}_k)_1,\cdots,(\mathbf{T}_k)_t)}{\pi((\mathbf{T}_k)_1,\cdots,(\mathbf{T}_k)_t)} \\
        &= \sum_{\sigma\in S_t} \ketbra{\sigma((\mathbf{T}_k)_1,\cdots,(\mathbf{T}_k)_t)}{(\mathbf{T}_k)_1,\cdots,(\mathbf{T}_k)_t} \\
        &\rightarrow \frac{1}{2^{(n-k)t}}\sum_{\sigma\in S_t}\sum_{\{z_i\}_{i=1}^t\in[2^{n-k}]^t}\ketbra{\sigma((\mathbf{T}_k)_1,\cdots,(\mathbf{T}_k)_t)}{(\mathbf{T}_k)_1,\cdots,(\mathbf{T}_k)_t}\otimes\ketbra{\sigma(z_1,\cdots,z_t)}{z_1,\cdots,z_t}.
    \end{split}
\end{equation}
Finally, we randomize the first register by randomly applying CNOT gates conditioned on the second register and targeting the first register. Since the second register has copy-wise distinct configurations with probability higher than $1-t^2/2^{k+1}$, this uniformly randomizes the first register as
\begin{equation}
    \ketbra{\textbf{T}_k}\otimes\ketbra{0}^{\otimes(n-k)t}\rightarrow \frac{1}{2^{nt}}\sum_{\sigma\in S_t}\sum_{\{z_i\}_{i=1}^t\in[2^n]^t}\ketbra{\sigma(z_1,\cdots, z_t)}{z_1,\cdots,z_t} + O(t^2/2^k)\times \rho_\mathrm{fail},
\end{equation}
where $\rho_\mathrm{fail}$ is the density matrix for failing cases. We note that the failure probability is upper bounded by $t^2/2^k\leq O(\epsilon)$. The first term on the right-hand side approximates the equal summation of all possible unique type states in the entire space, as
\begin{equation}
    \begin{split}
        \frac{1}{2^{nt}}\sum_{\sigma\in S_t}\sum_{\{z_i\}_{i=1}^t\in[2^n]^t}\ketbra{\sigma(z_1,\cdots, z_t)}{z_1,\cdots,z_t} 
        &= \binom{2^n}{t}^{-1}\sum_{\sigma\in S_t}\sum_{\{z_i\}_{i=1}^t\in[2^n]^t_\mathrm{dist}}\ketbra{\sigma(z_1,\cdots, z_t)}{z_1,\cdots,z_t}\\
        &\qquad\qquad\qquad\qquad\qquad\qquad+O(t^2/2^n)\times \rho_\mathrm{err}\\
        &= \rho^{(t)}_{\mathrm{unique},n}+O(t^2/2^n)\times \rho_\mathrm{err}
    \end{split}
\end{equation}
with another density matrix $\rho_\mathrm{err}$ corresponding to the approximation error due to copy-wise collisions between bit strings $\{z_i\}_{i=1}^t$. Thus, by taking $k=\Omega(\log(t^2/\epsilon))$, we can make the approximation error be upper bounded by $O(\epsilon)$.

\newpage
\hypertarget{sec:full-rank-prob}{}
\subsection*{3. Full rank probability}\label{sec:full-rank-prob}
Let us assume that we have sampled $\{x_i\}_{i=1}^t\in[K]^t_\mathrm{dist}$ uniformly at random with $K=2^k$. Here, $[K]^t_\mathrm{dist}$ is the set of $t$ distinct integers in $[1,K]$. Let $\alpha$ be a positive integer. We then sample sets $\{S_i\}_{i=1}^\alpha$ of $m$-bits and their values $\{y_i\}_{i=1}^\alpha\in[\mathbb{Z}_2^m]^\alpha$ randomly. Let $x_i|S_j$ be bits of $x_i$ in $S_j$ such that $(x_i|S_j)_l=(x_i)_{(S_j)_l}$ for all $l\in[1,t]$. Let $v_j$ be a vector in $\mathbb{F}_2^t$ such that $(v_j)_i=\delta_{x_i|S_j,y_j}$. We note three important properties of $(v_j)_i$. First, $(v_j)_i$ is one with the marginal probability $1/2^m$. Second, $(v_j)_i$ and $(v_{j'})_i$ with $j\neq j'$ are independent. Third, $(v_j)_i$ and $(v_j)_{i'}$ with $i\neq i'$ are dependent. 

To see the statistical dependence between $(v_j)_i$ and $(v_j)_{i'}$, let us compute the probability of having $(v_j)_i=(v_j)_{i'}=1$. To achieve this, we first need bits of $x_i$ in $S_j$ and $y_j$ are the same. This happens with the probability $1/2^m$. Next, bits of $x_i$ and $x_{i'}$ should be the same in $S_j$. Let the hamming distance between $x_i$ and $x_{i'}$ be $d(x_i,x_{i'})$. Then, it follows that
\begin{equation}
    \sum_{S_j}\mathrm{Prob}(x_i|S_j=x_{i'}|S_j)\mathrm{Prob}(S_j) = \binom{k-d(x_i,x_{i'})}{m}/\binom{k}{m}.
\end{equation}
Consequently, probability of having $(v_j)_i=(v_j)_{i'}=1$ is given by
\begin{equation}
    \mathrm{Prob}((v_j)_i=(v_j)_{i'}=1)=\frac{1}{2^m}\binom{k-d(x_i,x_{i'})}{m}/\binom{k}{m},
\end{equation}
which differs from $[\mathrm{Prob}((v_j)_i=1)]^2=2^{-2m}$. Thus, $(v_j)_i$ and $(v_j)_{i'}$ with $i\neq i'$ are statistically dependent.

Let us analyze the probability that the vectors $\{v_j\}_{j=1}^\alpha$ contain a subset of $t$ linearly independent vectors in a $t$-dimensional space. This corresponds to determining the probability that the $t \times \alpha$ matrix $X$ (with columns ${v_j}$) achieves full rank.

A key challenge arises from within-column dependencies: for each vector $v_j$, its entries $(v_j)_i$ and $(v_j)_{i'}$ with $i \neq i'$ are statistically dependent. These dependencies propagate to the matrix elements of $X$, rendering traditional full-rank probability analyses, which often assume independent entries, insufficient.

To enable tractable analysis, we define a structured approximation matrix $X_2$ where each column contains at most two nonzero elements. The probability $p_2$ of a column having exactly two nonzero elements will be set as the minimum value of $\mathrm{Prob}((v_j)_i=(v_j)_{i'}=1)$ while respecting typical pairwise distances $d(x_i, x_{i'})$. Additionally, the probability $p_1$ of each element being nonzero is set to ensure $\mathrm{Prob}\left((v_j)_i = 1\right) \leq 1/t$. The full rank probability of $X_2$ is smaller than that of $X$ for two reasons. First, $X_2$ neglects the possibilities of each column having more than two non-trivial elements, which increases the full rank probability as columns are statistically independent. Second, $\mathrm{Prob}\left((v_j)_i = 1\right)$ is upper bounded by $1/t$, making the full rank probability even more smaller. Below, we lower bound the full rank probability of $X_2$, which consequently lower bounds that of $X$.

Let us first set $p_2$. Although $\{x_i\}_{i=1}^t$ are fixed before sampling the sets $\{S_i\}_{i=1}^\alpha$ and $\{y_i\}_{i=1}^\alpha$, these $x_i$ themselves originate from a random sampling process. In this regard, we derive a probabilistic bound for the pairwise distances $d(x_i, x_{i'})$, establishing a typical range within which these distances concentrate. For the following analysis, let us hypothetically consider $\{x_i\}_{i=1}^t$ without the distinct condition. Then, the probability of having $d(x_i, x_{i'})\notin [k/2-\delta,k/2+\delta]$ for some $0<\delta<k/2$ is given due to the Chernoff bound by
\begin{equation}
    \mathrm{Prob}\left(\abs{d(x_i,x_{i'})-k/2}>\delta\right) \leq 2e^{-2\delta^2/k}.
\end{equation}
Let $\mathcal{A}$ be a set of events that $\{x_i\}_{i=1}^t$ are typical, \textit{i.e.}, $\forall i\neq i'\in[t]$, $\abs{d(x_i,x_{i'})-k/2}\leq \delta$. Then, due to the union bound, the probability of any event in $\mathcal{A}$ not happening is upper bounded by
\begin{equation}
    \mathrm{Prob}(\mathcal{A}^c) \leq 2\binom{t}{2}e^{-2\delta^2/k}.
\end{equation}
Next, let $\mathcal{B}$ be a set of events that $\{x_i\}_{i=1}^t$ are all mutually distinct. The probability of an event in $\mathcal{B}$ happening is given by
\begin{equation}
    \mathrm{Prob}(\mathcal{B}) = \frac{1}{K^t}\frac{K!}{(K-t)!}.
\end{equation}
Now, we are ready to compute the conditional probability of
\begin{equation}
    \mathrm{Prob}(\mathcal{A}|\mathcal{B}) = \frac{\mathrm{Prob}(\mathcal{A}\cap\mathcal{B})}{\mathrm{Prob}(\mathcal{B})},
\end{equation}
which is our original interest with the distinct condition on $\{x_i\}_{i=1}^t$. Let us consider an event in $\mathcal{B}^c$. For such an event, there exist $i\neq i'\in[t]$ such that $x_i=x_{i'}$. These bitstrings have $\abs{d(x_i,x_{i'})-k/2}=k/2$, violating the typicallity condition given that $\delta<k/2$. Therefore, we have $\mathcal{A}\cap\mathcal{B}=\mathcal{A}$. Thus, $\mathrm{Prob}(\mathcal{A}|\mathcal{B})$ is lower bounded by
\begin{equation}
    \mathrm{Prob}(\mathcal{A}|\mathcal{B}) 
    > \frac{K^t (K-t)!}{K!} (1 - t(t-1)e^{-2\delta^2/k})\geq 1 - t^2 e^{-2\delta^2/k}.
\end{equation}
The failure probability of having an event in $\mathcal{A}|\mathcal{B}$ becomes below $\epsilon>0$ when $\delta$ is given by
\begin{equation}\label{eq:delta-condition}
    \delta^2 \geq \log(t/\sqrt{\epsilon}) k.
\end{equation}
Then, for any choice of $i\neq i'\in[t]$ and with the failure probability below $\epsilon$, randomly chosen conditional bits of $x_i$ and $x_{i'}$ are the same with the worst case probability given by 
\begin{equation}
    \binom{k/2-\delta}{m}/\binom{k}{m} 
    =\frac{1}{2^m}\prod_{i=0}^{m-1} \left(\frac{k-2\delta-2i}{k-i}\right)\frac{1}{2^m}\left(1-\frac{2\delta}{k}\right)^m.
\end{equation}
Based on these, we conclude that with a failure probability below $\epsilon$, $p_2 = 1/(at)^2$ with
\begin{equation}\label{eq:a_bound}
    a\geq \left(1-\frac{2\delta}{k}\right)^{-m/2},
\end{equation}
and $m=\log_2 t$. Let us set $k$ by $c\log_2 (t/\sqrt{\epsilon})$ for some $c>0$. Then, due to Eq.~\eqref{eq:delta-condition}, $\delta$ should be lower bounded by
\begin{equation}
    \delta\geq \sqrt{\frac{c}{\log 2}}\log_2(t/\sqrt{\epsilon}).
\end{equation}
Since $\delta$ should be less than $k/2$ by its definition, $c$ should satisfy 
\begin{equation}
    c \geq \frac{4}{\log 2} \approx 5.77.
\end{equation}
Consequently, taking $k = \lceil c\log_2 (t/\sqrt{\epsilon}) \rceil = \lceil (c/2) \log_2 (t^2/ \epsilon) \rceil \geq \lceil 2.885 \log_2 (t^2/ \epsilon) \rceil$ is necessary to guarantee $t$-wise independence of the implemented random injective map.

Next, let us set $p_1$. There are two possibilities of getting $(v_j)_i=1$. One possibility is to sample $(v_j)_i=1$ with $(v_j)_{i'}=0$ for all $i'\neq i$. This happens with the probability $p_1$ by definition. Second possibility is to sample $(v_j)_i=(v_j)_{i'}=1$ for some $i'\neq i$. This happens with the probability $(t-1)p_2$. Thus, the marginal probability $\mathrm{Prob}\left((v_j)_i = 1\right)$ is given by
\begin{equation}
    \mathrm{Prob}\left((v_j)_i = 1\right) = p_1 + (t-1)p_2.
\end{equation}
Since we enforce this less than $1/t$, $p_1$ should satisfy
\begin{equation}
    p_1 \leq \frac{1}{t} - \frac{t-1}{a^2t^2}.
\end{equation}
Thus, $p_1$ can be set by
\begin{equation}
    p_1 = \left(1-\frac{1}{a^2}\right)\frac{1}{t}.
\end{equation}
In summary, we have
\begin{align}
    p_0 &= 1-p_1-p_2\\
    p_1 &= \left(1-\frac{1}{a^2}\right)\frac{1}{t}\\
    p_2 &= \frac{1}{(at)^2},
\end{align}
where $p_0$ is the probability of having zero $v_j$.

Now, let us compute a lower bound of the full rank probability of $X_2$. To this end, we upper bound the probability of $X_2$ having rank less than $t$:
\begin{equation}
    \begin{split}
        \mathrm{Prob}(\mathrm{rank}(X_2)< t) = \mathrm{Prob}(\exists s\neq\mathbf{0} \in \mathbb{F}_2^t, \textrm{ s.t. }\forall j\in[1,\alpha],s^T v_j=0).
    \end{split}
\end{equation}
Due to the union bound, we have
\begin{equation}
    \begin{split}
        \mathrm{Prob}(\mathrm{rank}(X_2)< t) \leq \sum_{s\neq\mathbf{0} \in \mathbb{F}_2^t}\mathrm{Prob}(\forall j\in[1,\alpha],s^T v_j=0).
    \end{split}
\end{equation}
Since columns are statistically independent, it follows that
\begin{equation}\label{eq:rank-union-bound-sum}
    \mathrm{Prob}(\mathrm{rank}(X_2)< t)\leq \sum_{s\neq\mathbf{0} \in \mathbb{F}_2^t}[\mathrm{Prob}(s^T v=0)]^\alpha,
\end{equation}
where we drop the subscript of $v_j$ labeling the column dependency. We note that $\mathrm{Prob}(s^T v=0)$ only depends on the hamming weight of $s$. Thus, we introduce $p(w)$ which is the sum of $\mathrm{Prob}(s^T v=0)$ over $s\in\mathbb{F}_2^t$ with the fixed hamming weight $\|s\|=w$,
\begin{equation}
    p(w)=\mathrm{Prob}(s\in\mathbb{F}^t_2,\|s\|=w,s^T v=0).
\end{equation}
Then, the probability of rank deficiency is upper-bounded by 
\begin{equation}
    \mathrm{Prob}(\mathrm{rank}(X_2)< t)\leq \sum_{w=1}^t \binom{t}{w} [p(w)]^\alpha.
\end{equation}
We can compute $p(w)$ explicitly by the following case study. First, when $v$ has trivial elements, then $s^T v$ is vanishing for any $s\in\mathbb{F}_2^t$. Second, when $v$ has exactly one non-trivial element, then $s^T v$ is vanishing if there exists $l\in[t]$ such that $v_l$ is one with $s_l=0$. Since the weight of $s$ is $w$, there exist $t-w$ different $v$ vectors having $s^Tv=0$. Third, when $v$ has exactly two non-trivial elements, then $s^T v$ is vanishing if there exist $l\neq l'\in[t]$ such that either $v_l=v_{l'}=1$ with $s_l=s_{l'}=1$ or $s_l=s_{l'}=0$. There are $\binom{w}{2}$ different $v$ satisfying the former condition and $\binom{t-w}{2}$ different $v$ satisfying the later condition. Therefore, $p(w)$ is given by
\begin{equation}
    \begin{split}
        p(w)
        &= p_0 + p_1(t-w)+ p_2\left(\binom{t-w}{2}+\binom{w}{2}\right)\\
        &= \left[1 - \left(1-\frac{1}{a^2}\right)\frac{1}{t} - \frac{t(t-1)}{2a^2t^2}\right] + \left(1-\frac{1}{a^2}\right)\frac{t-w}{t} + \frac{t^2-(2w+1)t+2w^2}{2a^2t^2}\\
        &= 1 - \frac{w}{t} + \frac{w^2}{a^2t^2}.
    \end{split}
\end{equation}

The direct computation of the summation in Eq.~\eqref{eq:rank-union-bound-sum} is challenging. Thus, we instead try to find its upper bound. To this end, let us define  
\begin{equation}
    f(w) = \binom{t}{w}[p(w)]^\alpha
\end{equation}
and
\begin{equation}
    L(w) = \log f(w) = \log \binom{t}{w} + \alpha \log p(w).
\end{equation}
Additionally, let us define $\delta L(w)$ as
\begin{equation}
    \delta L(w) = L(w+1)-L(w).
\end{equation}
This is the ratio between consecutive summands. Now, let us compute $\delta L(w)$. First, the difference between log binomial coefficients is given by
\begin{equation}
    \log \binom{t}{w+1} - \log \binom{t}{w} = \log \frac{t-w}{w+1}.
\end{equation}
Next, let us compute the difference between $\log p(w)$. First, $p(w+1)$ is given by
\begin{equation}
    p(w+1) = p(w) - \frac{1}{t} + \frac{2w+1}{a^2 t^2}.
\end{equation}
Their log difference is then given by
\begin{equation}
    \log \frac{p(w+1)}{p(w)} = \log \left(1 - \frac{a^2t-2w-1}{a^2t^2-a^2 wt+w^2}\right).
\end{equation}
This is maximized when $w=0$, which gives
\begin{equation}
    \log \frac{p(w+1)}{p(w)} 
        \leq \log \left(1 - \frac{a^2 t-1}{a^2 t^2}\right).
\end{equation}
Combining altogether, $\delta L(w)$ is upper bounded by
\begin{equation}
    \delta L(w) \leq \log \frac{t-w}{w+1} - \alpha\frac{a^2t-1}{2a^2t^2}.
\end{equation}
Let us introduce $b>0$ such that 
\begin{equation}
    \alpha=\frac{2(b+1)a^2 t^2}{a^2t-1}\log t.
\end{equation}
Then, we have
\begin{equation}
    \delta L(w) 
    \leq \log \frac{t-w}{w+1} - (b+1) \log t\leq -b\log t.
\end{equation}
This means that $f(w)$ decreases as
\begin{equation}
    f(w+1) \leq f(w)\cdot t^{-b}.
\end{equation}
Using this, we can bound the summation as
\begin{equation}
    \begin{split}
        \sum_{w=1}^t f(w) 
        &\leq f(1) \sum_{w=1}^t t^{-b(w-1)}\\
        &= f(1) \frac{1-t^{-bt}}{1-t^{-b}}.
    \end{split}
\end{equation}
Here, $f(1)$ is upper bounded by
\begin{equation}
    \begin{split}
        f(1) 
        &= t \left(1-\frac{1}{t}+\frac{1}{a^2t^2}\right)^\alpha\\
        &\leq t e^{-\alpha\left(\frac{1}{t}-\frac{1}{a^2 t^2}\right)}\\
        &=t^{1-2(b+1)\left(1-\frac{1}{a^2t}\right)}\\
        &\approx t^{-1-2b}
    \end{split}
\end{equation}
for $1/\epsilon \gg 1$, which is an assumption taken in \textbf{Lemma}~\ref{thm:implement-U_p}. Thus, we finally have
\begin{equation}
    \begin{split}
        \mathrm{Prob}(\mathrm{rank}(X)< t)
        &\lesssim t^{-1-2b}\frac{1-t^{-bt}}{1-t^{-b}}\\
        &\leq t^{-2b}.
    \end{split}
\end{equation}
For the error bound $\epsilon$, we need $b$ to be
\begin{equation}
    b \geq \frac{\log (1/\epsilon)}{2\log t}.
\end{equation}
Then, the number of columns can be set by
\begin{equation}
    \alpha \gtrsim t\log(1/\epsilon).
\end{equation}

\newpage
\hypertarget{sec:parallel-MCX}{}
\subsection*{4. Parallel execution of MCX gates}\label{sec:parallel-MCX}
Let us set $t>0$, $1\geq\epsilon>0$, and $\beta=O(1)$. Additionally, let us set $k=3\log (t^2/\epsilon)$, $m=\log t$, and $\alpha=\beta t\log(1/\epsilon)$. Let us consider an asymptotic limit of $1/\epsilon\gg t$. Then, we have $k\sim \log(1/\epsilon)$ and $\alpha\sim \beta t\log(1/\epsilon)$. Let us assume that there are two $k$-bit registers $\mathrm{A}$ and $\mathrm{B}$. For each bit in $\mathrm{B}$, let us apply an $m$-MCX gate conditioned on random bits in $\mathrm{A}$ targeting the bit in $\mathrm{B}$ and repeat this $\alpha$ times. We note that the same set of conditions is used to apply MCX gates on different bits in $\mathrm{B}$. Since these MCX gates commute with each other, we can apply MCX gates with distinct conditional and target bits in parallel, thereby reducing the depth required to apply all the MCX gates.

The minimum depth achievable by the parallelization of commuting gates is deeply related to the edge coloring problem of a graph~\cite{Bremner2017}. More precisely, let us consider an $m$-uniform hypergraph $H$ with $k$ vertices. Hyperedges of $H$ can be interpreted as $m$-qubit commuting gates. Then, the chromatic index $\chi'(H)$ of $H$ is exactly the minimum depth of applying all the gates. This is because edges having the same color can be executed in parallel. In our case, since MCX gates target bits in $\mathrm{B}$, the number of MCX gates that can be simultaneously applied is limited to $k$. Given that the number of edges having the same color is upper bounded by $k/m$, we can achieve the minimum depth $\chi'(H)$. 

The direct computation of $\chi'(H)$ is hard in general. Instead, there are some works that study its upper bounds. A general upper bound on $\chi'(H)$ would be
\begin{equation}
    \chi'(H) \leq m(\Delta(H)-1)+1.
\end{equation}
This can be derived using a greedy coloring strategy, meaning that when coloring an edge $e$, we choose a color that is different from all previously colored edges that intersect $e$. To complete the upper bound, let us compute $\Delta(H)$. Let us consider a vertex $v$ in $H$. Each edge of $H$ contains $v$ with the probability of $m/k$. Since each edge is sampled independently, the degree $d(v)$ of the vertex $v$ is given as 
\begin{equation}
    d(v) = \sum_{i=1}^\alpha \hat{Y}_i,
\end{equation}
where $\{\hat{Y}_i\}$ are independent binary random variables with the probability $m/k$ of being one. Then, the mean value of $d(v)$ is given by $\alpha m/k$, and the failure probability of $d(v)$ being concentrated near its mean value is given by
\begin{equation}
    \mathrm{Prob}\left(\abs{d(v)-\frac{\alpha m}{k}} \geq \frac{\delta\alpha m}{k}\right) \leq 2 e^{- \frac{\delta^2\alpha m}{3k}}
\end{equation}
with $0<\delta\leq 1$ due to the Chernoff bound. Since $\alpha m/k$ scales with $t\log t$, we have
\begin{equation}
    \Delta(H) \sim \frac{\beta}{3} t \log t
\end{equation}
with a failure probability $O(t^{-\delta^2\beta t/3})$. When $\log(1/\epsilon)\leq \frac{\delta^2\beta}{3} t\log t$, the failure probability is upper bounded by $\epsilon$. In this regime, we finally get
\begin{equation}
    \chi'(H) \lesssim \frac{\beta}{3\log 2} t (\log t)^2.
\end{equation}

A tighter upper bound on $\chi'(H)$ for random uniform hypergraphs with some conditions described below is established by Ref.~\cite{Kurauskas2015} as
\begin{equation}
    \chi'(H) \leq \frac{m\alpha}{k}(1+\varepsilon)
\end{equation}
for any $\varepsilon>0$ with probability at least $1-\frac{2}{k}-\frac{2k}{m\alpha}$. The conditions they used are
\begin{equation}
    m = o\left(\log\left(\frac{m\alpha}{k\log k}\right)\right)
\end{equation}
and
\begin{equation}
    m = o\left(\log\left(\frac{k}{\log(m\alpha/k)}\right)\right).
\end{equation}
In the asymptotic limit we have considered, these become
\begin{equation}
    \log(1/\epsilon) = o(t^{\beta})
\end{equation}
and
\begin{equation}
    t\log(\beta t\log t) = o(\log(1/\epsilon)),
\end{equation}
respectively. One may see that if $\log(1/\epsilon)$ scales with $t^{1+c}$ for some $c>0$, and $\beta$ is set by $\beta > 1+c$, then these two conditions are simultaneously satisfied. Thus, in this regime, we have
\begin{equation}
    \chi'(H) \lesssim t\log t
\end{equation}
with probability at least $1-O(1/\log t)$. We conjecture that in this regime, $\chi'(H)$ is upper bounded by $O(t\log t)$ with a failure probability smaller than $\epsilon$.

\newpage
\subsection*{5. Lower bounds on entanglement, magic, and coherence (proof of \textbf{Theorem 2} in the main text)}
In this section, we derive upper and lower bounds for entanglement, magic, and coherence of states in an ensemble $\mathcal{E}$ forming $\epsilon$-approximate state $t$-design. For a measure of entanglement, we use the entanglement entropy of any subsystem $A$ such that $|A|=\Theta(n)$ with the number of qubits $n$. For a measure of magic, we use the stabilizer R\'enyi entropy~\cite{Leone2022}. For a $n$-qubit pure state $\ket{\psi}$, this is given as
\begin{equation}
    M_\alpha(\ket{\psi}) = S_\alpha(\Xi_\psi)-\log 2^n,
\end{equation}
where $S_\alpha(\Xi_\psi)$ is the $\alpha$-R\'enyi entropy of the probability distribution $\Xi_\psi$ defined as
\begin{equation}
    \Xi_\psi = \{2^{-n}\bra{\psi}P\ket{\psi}^2|P\in\mathbb{P}_n\}
\end{equation}
with the Pauli group $\mathbb{P}_n$ on $n$ qubits. For a measure of coherence, we use the relative entropy of coherence~\cite{Baumgratz2014}, which is defined as
\begin{equation}
    C_{\mathrm{rel. ent.}}(\rho) = S(\rho_\mathrm{diag}) - S(\rho),
\end{equation}
where $\rho_\mathrm{diag}$ is the matrix made by eliminating all off-diagonal elements of a density matrix $\rho$, and $S(\cdot)$ is the von Neumann entropy. For a pure state $\ket{\psi}$, this becomes $C_{\mathrm{rel. ent.}}(\ketbra{\psi}) = S(\ketbra{\psi}_\mathrm{diag})$. 

Let us first derive lower bound of the entanglement entropy by following Ref.~\cite{aaronson2023quantum}. We first make $t/2$ pairs of states. We then perform $t/2$ SWAP test on the subsystem $A$. Each test successes with probability $1/2+\delta$, where
\begin{equation}
    \delta = \mathbb{E}_{\psi\sim\mathcal{E}}[\tr(\rho^2_A)]/2
\end{equation}
and $\rho_A=\tr_{A^c}(\ketbra{\psi})$, if the underlying ensemble is $\mathcal{E}$. For Haar random states, the success probability is $1/2$ up to exponentially small correction with high probability. Then, the task becomes to detect a binary random variable with a bias from $t/2$ samples. When the bias $\delta$ is small, we can detect this bias based on the direct estimation of the success probability or the likelihood test with probability $1/2+O(\delta\sqrt{t})$. Due to Jensen's inequality, $\delta$ is lower bounded by
\begin{equation}
    \delta \geq \Omega \left(\frac{1}{2^{\mathbb{E}_{\psi\sim\mathcal{E}}[S(\rho_A)]}}\right).
\end{equation}
Combining altogether, we get
\begin{equation}
    \mathbb{E}_{\psi\sim\mathcal{E}}[S(\rho_A)] \geq \Omega(\log (t/\epsilon)).
\end{equation}

Next, we obtain a lower bound of the stabilizer R\'enyi entropy by following Ref.~\cite{gu2023little}. Let us define $\Pi^{2\alpha}=2^{-n}\sum_{P\in\mathbb{P}_n}P^{\otimes 2\alpha}$. From the definition of $M_\alpha(\ket{\psi})$, we have
\begin{equation}
    \tr(\Pi^{(2\alpha)}\ketbra{\psi}^{\otimes 2\alpha}) = 2^{(1-\alpha)M_\alpha(\ket{\psi})}.
\end{equation}
For an odd $\alpha$, a Hadamard test with $\Pi^{(2\alpha)}$ has the success probability of $1/2+\tr(\Pi^{(2\alpha)}\ketbra{\psi}^{\otimes 2\alpha})/2$~\cite{Haug2024}. Using the same logic used for entanglement, for constant $\alpha$, we get
\begin{equation}
    \mathbb{E}_{\psi\sim\mathcal{E}}\left[M_\alpha(\ket{\psi})\right]\geq \Omega\left(\log(t/\epsilon)\right).
\end{equation}


Finally, we can lower bound the relative entropy of coherence by the entanglement entropy. First, every diagonal matrix in the computational basis is a separable state. Thus, by the definition of the relative entropy, the relative entropy of coherence is lower bounded by the relative entropy of entanglement. Second, for pure states, the relative entropy of entanglement is the same as the entanglement entropy. Therefore, we have
\begin{equation}
    \mathbb{E}_{\psi\sim\mathcal{E}}[C_\mathrm{rel.ent.}(\ketbra{\psi})] \geq \Omega(\log(t/\epsilon)).
\end{equation}


\subsection*{6. Upper bound on magic of $\epsilon$-approximate state $t$-designs in \textbf{Theorem 1}}
An upper bound of $\alpha$-stabilizer R\'enyi entropy $M_\alpha(\ket{\psi})$ of $\ket{\psi}$ in $\mathcal{E}_\mathrm{sub}$ with the subset dimension $K=2^{\Theta(\log(t/\epsilon))}$ can be obtained using \textbf{Lemma S4} of Ref.~\cite{gu2023little}. It states that any random subset phase state having the subset dimension $K$ satisfies $M_0(\ket{\psi})\leq 2\log K$. Using the same proof, one can see that the same holds for any state having $2^k$ superpositions in the computational basis. This together with the hierarchy of the stabilizer R\'enyi entropy imply that for $\alpha>0$, $M_\alpha(\ket{\psi})$ is upper bounded by
\begin{equation}
    M_\alpha(\ket{\psi})\leq M_0(\ket{\psi}) \leq O(\log K) = O(\log (t/\epsilon)).
\end{equation}

\newpage
\subsection*{7. $O(1)$-entangled shadow estimator (proof of \textbf{Theorem 4} in the main text)}
In this section, we drive the shadow estimator for measurement unitary operators 
\begin{equation}
    \mathcal{E}=\{V\otimes I^{\otimes (n-k)}U_p|V\in\mathcal{U}(K),p\in\mathcal{P}(N)\}
\end{equation}
with the system size $n$, the subsystem size $k\leq n$, $N=2^n$, $K=2^k$, a unitary 3-design ensemble $\mathcal{U}(K)$ in the subspace, and the ensemble $\mathcal{P}$ of injective maps from $\{0,1\}^k$ to $\{0,1\}^n$ with a uniform distribution. Let $U_p$ with $p\sim\mathcal{P}$ be the unitary operator such that $U_p\ket{b,a}=\ket{p(b)\oplus 0^ka}$ with $k$-bits $k$ and $(n-k)$-bits $a$. In the subsequent subsection, we discuss how to efficiently construct $U_p$ for the shadow tomography.
We find that the shadow estimators that correspond to this measurement setup are:
\begin{equation}
\begin{aligned}
    \hat\rho_{\rm d.} &= \ketbra{z} &&\text{sampled from}~z\sim\bra z \rho \ket z\\
    \hat\rho_{\rm o.d.} &= \frac{(K+1)(N-1)}{K-1}U^\dag\ketbra{z}U&&\text{sampled from}~z\sim\bra{z}U\rho U^\dag\ket{z}.
\end{aligned}
\end{equation}
Let us decompose $\hat{O}$ into the diagonal and off-diagonal components as $\hat{O} = \hat O_{\rm d.} + \hat O_{\rm o.d.}$. We define $\hat{o}_\mathrm{d.}$ and $\hat{o}_\mathrm{o.d.}$ as $\tr(\hat{\rho}_\mathrm{d.}\hat{O}_\mathrm{d.})$ and $\tr(\hat{\rho}_\mathrm{o.d.}\hat{O}_\mathrm{o.d.})$, respectively. We then provide the proof of \textbf{Theorem 4} in the main text by showing the following two theorems:
\begin{theorem}
    $\hat{o}_\mathrm{d.}$ and $\hat{o}_\mathrm{o.d.}$ are unbiased estimators of $\tr \left( \rho \hat{O}_{\rm d.} \right)$ and $\tr \left( \rho \hat{O}_{\rm o.d.} \right)$, \textit{i.e.},
    \begin{equation}
    \begin{aligned}
        \mathbb{E}_{z\sim \bra z \rho \ket z}\left[ \hat{o}_\mathrm{d.} \right] &= \tr \left( \rho \hat{O}_{\rm d.} \right) \\
        \mathbb{E}_{U \sim {\cal E}, z\sim\bra{z}U\rho U^\dag\ket{z}} \left[ \hat{o}_\mathrm{o.d.} \right] &= \tr \left( \rho \hat{O}_{\rm o.d.} \right).
    \end{aligned}
    \end{equation}
\end{theorem}
\begin{proof}
    We skip proving the statement on $\hat{o}_\mathrm{d.}$. We define two twirls for random injective maps and Haar random unitaries:
    \begin{equation}
        \mathcal{C}^{(t)}_\mathcal{P}(X) = \mathbb{E}_{p\sim \mathcal{P}(N)}[U_p^{\otimes t}XU_p^{\otimes t,\dag}]
    \end{equation}
    and
    \begin{equation}
        \mathcal{C}^{(t)}_{\mathrm{H},k}(X) = \mathbb{E}_{V\sim\mathrm{Haar}(K)}[V^{\otimes t}\otimes I^{\otimes(n-k)t}X V^{\otimes t,\dag}\otimes I^{\otimes(n-k)t}]
    \end{equation}
    with a positive integer $t$ and a $N^{t}\times N^{t}$ matrix $X$. The ensemble average of {$\bra{z}U^\dag \hat{O}_\mathrm{o.d.} U\ket{z}$} over $U\sim\mathcal{E}$ and $z\sim\bra{z}U\rho U^\dag\ket{z}$ is given by
    \begin{equation}\label{eq:shadow-mean-estimation}
        \begin{split}
            &\mathbb{E}_{U\sim\mathcal{E},z\sim\bra{z}U\rho U^\dag\ket{z}}\left[\tr({\hat{O}_\mathrm{o.d.}} U^\dag \ketbra{z} U)\right]\\
            &=\sum_{b\in[K]}\sum_{a\in[N/K]}\tr(\rho\otimes {\hat{O}_\mathrm{o.d.}} \left(\mathcal{C}^{(2)}_\mathcal{P}\circ\mathcal{C}^{(2)}_{\mathrm{H},k}\right)\left(\ketbra{b,a}^{\otimes 2}\right))\\
            &=\frac{1}{K+1}\sum_{a\in[N/K]}\sum_{\{b_1,b_2\}\in[K]^2}\sum_{\tau\in S_2}\tr(\rho\otimes {\hat{O}_\mathrm{o.d.}}\mathcal{C}^{(2)}_\mathcal{P}\left(\ketbra{\tau(b_1,b_2)}{b_1,b_2}\otimes\ketbra{a}^{\otimes 2}\right))\\
            &=\frac{K-1}{(K+1)(N-1)}\sum_{\{z_1,z_2\}\in[N]^2_\mathrm{dist}}\bra{z_1}\rho\ket{z_1}\bra{z_2}{\hat{O}_\mathrm{o.d.}}\ket{z_2}+\frac{K-1}{(K+1)(N-1)}\sum_{\{z_1,z_2\}\in[N]^2_\mathrm{dist}}\bra{z_1}\rho\ket{z_2}\bra{z_2}{\hat{O}_\mathrm{o.d.}}\ket{z_1}\\
            &\quad+\frac{2}{K+1}\sum_{z\in[N]} \bra{z}\rho\ket{z}\bra{z}{\hat{O}_\mathrm{o.d.}}\ket{z}\\
            &=\frac{K-1}{(K+1)(N-1)}\tr(\rho {\hat{O}_\mathrm{o.d.}}),
        \end{split}
    \end{equation}
    where $S_t$ is the symmetric group of degree $t$, and $[2^n]^t_\mathrm{dist}$ is the set of $t$-distinct tuples in $[2^n]$. Here, we use the following identity:
    \begin{equation}
        \mathcal{C}^{(2)}_\mathcal{P}(\ketbra{\tau(b_1,b_2)}{b_1,b_2}\otimes\ketbra{a}^{\otimes 2}) = \frac{1}{N(N-1)}\sum_{\{z_1,z_2\}\in[N]^2_\mathrm{dist}}(1-\delta_{b_1,b_2})\ketbra{\tau(z_1,z_2)}{z_1,z_2} + \frac{1}{N}\sum_{z\in[N]}\delta_{b_1,b_2}\ketbra{z}
    \end{equation}
    for any $\{b_1,b_2\}\in[K]^2$, $a\in[N/K]$, and $\tau\in S_2$. 
\end{proof}

Now, let us upper bound the sample complexities of these estimators.
\begin{theorem}\label{thm:sample-complexity}
    The variances of $\hat{o}_\mathrm{d.}$ and $\hat{o}_\mathrm{o.d.}$ are upper bounded by
    \begin{equation}
        \begin{split}
            \mathrm{Var}(\hat{o}_\mathrm{d.})&\leq \tr(\hat{O}^2)\\
            \mathrm{Var}(\hat{o}_\mathrm{o.d.})&\leq O\left(\tr(\hat{O}^2)\right) + O\left(NK^{-1}\sum_{z\in[N]}\bra{z}\rho\ket{z}\bra{z}\hat{O}^2\ket{z})\right)
        \end{split}
    \end{equation}
\end{theorem}
\begin{proof}
The variances are bounded by the second moments of $\hat{o}_\mathrm{d.}$ and $\hat{o}_\mathrm{o.d.}$:
\begin{equation}
\begin{aligned}
    \mathbb{E}_{z\sim \bra z \rho \ket z}\left[ \hat{o}_{\rm d.}^2 \right] &= \sum_z \bra z \rho \ket z \bra{z} \hat O_{\rm d.} \ket{z}^2 \leq \left[\max_{z \in [N]} \bra{z} \hat{O}_{\rm d.} \ket{z}\right]^2 \leq \tr \left( \hat O^2_{\rm d.}\right) \leq \tr \left( \hat{O}^2 \right)\\
    \mathbb{E}_{U \sim {\cal E}, z\sim\bra{z}U\rho U^\dag\ket{z}} \left[ \hat{o}_{\rm o.d.}^2 \right] &= \frac{(K+1)(K-2)}{(K+2)(K-1)}\frac{N-1}{N-2}\left[\tr({\hat{O}_\mathrm{o.d.}}^2) + 2\tr(\rho {\hat{O}_\mathrm{o.d.}}^2) \right]\\
    &\quad+\frac{4(K+1)}{(K+2)(K-1)}\frac{(N-1)(N-K)}{N-2}\sum_{z\in[N]}\bra{z}\rho\ket{z}\bra{z}{\hat{O}_\mathrm{o.d.}}^2\ket{z}\\
    &\leq O\left(\tr(\hat{O}^2)\right) + O\left(NK^{-1}\sum_{z\in[N]}\bra{z}\rho\ket{z}\bra{z}\hat{O}^2\ket{z}\right).
\end{aligned}
\end{equation}
To get the upper bound for the second moments of $\hat{o}_{\rm o.d.}$, we use the following fact:
\begin{equation}
    \begin{split}
        &\mathbb{E}_{U\sim \mathcal{E},z\sim\bra{z}U\rho U^\dag\ket{z}}\left[\tr({\hat{O}_\mathrm{o.d.}} U^\dag \ketbra{z} U)^2\right]\\
        &=\sum_{b\in[K]}\sum_{a\in[N/K]}\tr(\rho\otimes {\hat{O}_\mathrm{o.d.}}\otimes {\hat{O}_\mathrm{o.d.}}\left(\mathcal{C}^{(3)}_\mathcal{P}\circ\mathcal{C}^{(3)}_{\mathrm{H},k}\right)\left(\ketbra{b,a}^{\otimes 3}\right))\\
        &=\frac{K!}{(K+2)!}\sum_{a\in[N/K]}\sum_{\{b_i\}_{i=1}^3\in[K]^3}\sum_{\tau\in S_3}\tr(\rho\otimes {\hat{O}_\mathrm{o.d.}}\otimes {\hat{O}_\mathrm{o.d.}}\mathcal{C}^{(3)}_\mathcal{P}\left(\ketbra{\tau(b_1,b_2,b_3)}{b_1,b_2,b_3}\otimes\ketbra{a}^{\otimes 3}\right))\\
        &=\frac{(K-1)!}{(K+2)!}\frac{K!}{(K-3)!}\frac{(N-3)!}{(N-1)!}\sum_{\{z_i\}_{i=1}^3\in[N]^3_\mathrm{dist}}\sum_{\tau\in S_3}\tr(\rho\otimes {\hat{O}_\mathrm{o.d.}}\otimes {\hat{O}_\mathrm{o.d.}}\ketbra{\tau(z_1,z_2,z_3)}{z_1,z_2,z_3})\\
        &\quad+\frac{(K-1)!}{(K+2)!}\frac{K!}{(K-2)!}\frac{1}{N-1}\sum_{\{z_i\}_{i=1}^2\in[N]^2_\mathrm{dist}}\sum_{\tau\in S_3}\sum_{\sigma\in C_3}\tr(\rho\otimes {\hat{O}_\mathrm{o.d.}}\otimes {\hat{O}_\mathrm{o.d.}}\ketbra{\tau\circ\sigma(z_1,z_1,z_2)}{\sigma(z_1,z_1,z_2)})\\
        &\quad+K\frac{(K-1)!}{(K+2)!}\sum_{z\in[N]}\sum_{\tau\in S_3}\tr(\rho\otimes {\hat{O}_\mathrm{o.d.}}\otimes {\hat{O}_\mathrm{o.d.}}\ketbra{\tau(z,z,z)}{z,z,z})\\
        &=\frac{(K-1)!}{(K+2)!}\frac{K!}{(K-3)!}\frac{(N-3)!}{(N-1)!}\sum_{\{z_i\}_{i=1}^3\in[N]^3}\sum_{\tau\in S_3}\tr(\rho\otimes {\hat{O}_\mathrm{o.d.}}\otimes {\hat{O}_\mathrm{o.d.}}\ketbra{\tau(z_1,z_2,z_3)}{z_1,z_2,z_3})\\
        &\quad+\frac{(K-1)!}{(K+2)!}\frac{K(K-1)(N-K)}{(N-1)(N-2)}\sum_{\{z_i\}_{i=1}^2\in[N]^2}\sum_{\tau\in S_3}\sum_{\sigma\in C_3}\tr(\rho\otimes \hat{O}_\mathrm{o.d.}\otimes \hat{O}_\mathrm{o.d.}\ketbra{\tau\circ\sigma(z_1,z_1,z_2)}{\sigma(z_1,z_1,z_2)})\\
        &\quad+\frac{(K-1)!}{(K+2)!}\frac{K(N-2K)(N-K)}{(N-1)(N-2)}\sum_{z\in [N]}\sum_{\tau\in S_3}\tr(\rho\otimes {\hat{O}_\mathrm{o.d.}}\otimes {\hat{O}_\mathrm{o.d.}}\ketbra{\tau(z,z,z)}{z,z,z})\\
        &=\frac{(K-1)!}{(K+2)!}\frac{K!}{(K-3)!}\frac{(N-3)!}{(N-1)!}\left[\tr({\hat{O}_\mathrm{o.d.}}^2) + 2\tr(\rho {\hat{O}_\mathrm{o.d.}}^2) \right]\\
        &\quad+4\times\frac{(K-1)!}{(K+2)!}\frac{K(K-1)(N-K)}{(N-1)(N-2)}\sum_{z\in[N]}\bra{z}\rho\ket{z}\bra{z}{\hat{O}_\mathrm{o.d.}}^2\ket{z},
    \end{split}
\end{equation}
and the following equality for the injective map twirl:
\begin{equation}
    \begin{split}
        &\mathcal{C}^{(3)}_\mathcal{P}(\ketbra{\tau(b_1,b_2,b_3)}{b_1,b_2,b_3}\otimes\ketbra{a}^{\otimes 3})\\
        &=\frac{(N-3)!}{N!}\sum_{\{z_i\}_{i=1}^3\in[N]^3_\mathrm{dist}}(1-\delta_{b_1,b_2})(1-\delta_{b_2,b_3})(1-\delta_{b_1,b_3})\ketbra{\tau(z_1,z_2,z_3)}{z_1,z_2,z_3}\\
        &\quad+\frac{1}{N(N-1)}\sum_{\{z_i\}_{i=1}^2\in[N]^2_\mathrm{dist}}\delta_{b_1,b_2}(1-\delta_{b_2,b_3})\ketbra{\tau(z_1,z_1,z_2)}{z_1,z_1,z_2}\\
        &\quad+\frac{1}{N(N-1)}\sum_{\{z_i\}_{i=1}^2\in[N]^2_\mathrm{dist}}\delta_{b_2,b_3}(1-\delta_{b_1,b_2})\ketbra{\tau(z_1,z_2,z_2)}{z_1,z_2,z_2}\\
        &\quad+\frac{1}{N(N-1)}\sum_{\{z_i\}_{i=1}^2\in[N]^2_\mathrm{dist}}\delta_{b_1,b_3}(1-\delta_{b_2,b_3})\ketbra{\tau(z_1,z_2,z_1)}{z_1,z_2,z_1}\\
        &\quad+\frac{1}{N}\sum_{z\in[N]}\delta_{b_1,b_2}\delta_{b_2,b_3}\ketbra{z,z,z}
    \end{split}
\end{equation}
for any $\{b_1,b_2,b_3\}\in[K]^3$, $a\in[N/K]$, and $\tau\in S_3$. 
\end{proof}
The total variance is dominated by $\mathrm{Var}(\hat{o}_\mathrm{o.d.})$. For arbitrary $\rho$, this is upper bounded by
\begin{equation}
    \mathrm{Var}(\hat{o}_\mathrm{d.})+\mathrm{Var}(\hat{o}_\mathrm{o.d.})\leq O\left(\tr(\hat{O}^2)\right) + O\left(NK^{-1}\chi(\hat{O})\right)
\end{equation}
with
\begin{equation}
        \chi(\hat{O}) =\max_{z \in [N]} \bra{z}\hat{O}_{\rm o.d.}^2\ket{z} = \max_{z \in [N]} \left[ \bra{z} \hat{O}^2 \ket{z} - \bra{z} \hat{O}\ket{z}^2 \right].
\end{equation}
We note that $\tr \left( \hat{O}^2 \right)$ is the sample complexity for the case of random Clifford gates. We may take an appropriate scaling of $K$ depending on $\chi(\hat{O})$ to make $NK^{-1}\chi(\hat{O})\sim \tr(\hat{O}^2)$. This directly provides a trade-off between sample complexity and classical postprocessing time. 
For example, when $\chi(\hat{O}) = O(nN^{-1})$, one can take $K = O(n)$ to make $NK^{-1}\chi(\hat{O})\sim \tr(\hat{O}^2)=O(1)$. In this case, the total sample complexity becomes a constant if $\tr(\hat{O}^2)=O(1)$, \textit{i.e.}, $\hat{O}$ is a low-rank observable.
We note that for a typical low-rank $\hat{O}$ whose eigenstates are sampled from the ensemble of Haar random states, $\chi(\hat{O})$ is given by $O(n/N)$ with high probability. Furthermore, if we apply the product of random single-qubit unitaries to the underlying state and the observable, then we can reduce the subsystem size to $K=O(1)$ while preserving the constant sample complexity as discussed in \textbf{Theorem}~\ref{thm:almost-all-certification} in the next subsection. This preconditioning process can be utilized to measure a wider class of observables. For example, $N\chi(\hat{O})$ is exponentially large for relatively simple low-rank observables such as $\hat{O}=\ketbra{\mathrm{GHZ}}$. In this case, we can measure them by applying the product of Hadamard gates to the underlying state and the observable to make $N\chi({H^{\otimes n}\hat{O}H^{\otimes n}})=O(1)$.





\subsubsection*{Certification and benchmarking using shadow tomography (proof of \textbf{Theorem 5} in the main text)}
In this subsection, we apply our shadow tomography method to quantum state certification and benchmarking. These tasks represent some of the most significant applications of shadow tomography utilizing global random unitary operators, such as random Clifford circuits. We provide two primary results. In \textbf{Theorem}~\ref{thm:almost-all-certification}, we establish that almost all pure quantum states can be certified with constant sample complexity. Furthermore, \textbf{Theorem}~\ref{thm:benchmarking-shadow} demonstrates that the fidelity of anti-concentrated states, prepared by a quantum device subject to random errors and generic noise, can be estimated using a constant number of samples.

\begin{theorem}[\textbf{Theorem 5} {in the main text} restated]\label{thm:almost-all-certification}
    Let $\rho$ be a fixed $n$-qubit state. With high probability, for a $n$-qubit pure state $\ket{\phi}$ sampled from the Haar random ensemble, the fidelity between $\rho$ and $\ket{\phi}$ can be estimated up to $\varepsilon$-additive error through $O(\log (n/\varepsilon))$-depth circuits with $O(1/\varepsilon^2)$ queries to $\rho$ and fixed local basis coefficients of $\ket{\phi}$. 
\end{theorem}
\begin{proof}
    First of all, let $U$ be $\bigotimes_{i=1}^n U_i$ where $\{U_i\}_{i=1}^n$ are sampled independently from an ensemble forming single qubit unitary $2$-design. We define $p_z$ for $z\in[N]$ as $\bra{z}U\rho U^\dag\ket{z}$. The fidelity between $\rho$ and $\ket{\phi}$ can be estimated using the estimators $\hat{o}_\mathrm{d.}$ and $\hat{o}_\mathrm{o.d.}$ with $\hat{O}=U^\dag\ketbra{\phi}U$. Due to the unitary invariance of the Haar measure, $U^\dag\ket{\phi}$ can be thought of as a Haar random state. We will use the measurement unitary operator introduced in \textbf{Theorem}~\ref{thm:shadow-circuit}, which can be implemented in $O(\log(n/\varepsilon))$-depth with $\varepsilon$-additive error in $\mathbb{E}[\hat{o}_\mathrm{o.d.}]$. We note that \textbf{Theorem}~\ref{thm:sample-complexity} introduces the following upper bounds on the variances of the estimators $\hat{o}_\mathrm{d.}$ and $\hat{o}_\mathrm{o.d.}$:
    \begin{equation}
        \begin{split}
            \mathrm{Var}(\hat{o}_\mathrm{d.}) &\leq 1\\
            \mathrm{Var}(\hat{o}_\mathrm{o.d.}) &\leq O\left(N \sum_{z\in[N]} p_z q_z \right)
        \end{split}
    \end{equation}
    with $q_z=\abs{\bra{z}U^\dag\ket{\phi}}^2$. This means that the total sample complexity is proportional to the linear cross entropy $N\sum_{z\in[N]}p_z q_z-1$. Next, let us consider the maximum value of $\{p_z\}$. Due to \textbf{Lemma}~\ref{thm:onsite-Haar-max-prob}, with high probability, $p_\mathrm{max}$ is upper bounded by $2^{-O(n)}$. Now, \textbf{Lemma}~\ref{thm:cross-entropy-bound} implies that the linear cross entropy is upper bounded by $O(1)$ with failure probability smaller than $e^{-2^{O(n)}}$. Combining altogether, with high probability, the sum of the variances is $O(1)$, and $O(1/\varepsilon^2)$ shadow estimators are sufficient to estimate the fidelity between $\rho$ and $\ket{\phi}$ up to additive error $\varepsilon$.
\end{proof}

If one does not want to apply the single-qubit Haar random unitary layer as in this proof, then the following corollary may be helpful.
\begin{corollary}\label{thm:almost-all-certification-without-Haar}
    Let $\rho$ be a fixed $n$-qubit state. With high probability, for a $n$-qubit pure state $\ket{\phi}$ sampled from the Haar random ensemble, the fidelity between $\rho$ and $\ket{\phi}$ can be estimated up to $\varepsilon$-additive error through $O(\log (n/\varepsilon))$-depth circuits with $O(n/\varepsilon^2)$ queries to $\rho$ and computational basis coefficients of $\ket{\phi}$. 
\end{corollary}
\begin{proof}
    This corollary can be proven using the proof of \textbf{Theorem}~\ref{thm:almost-all-certification} except for the fact that the maximum probability $p_\mathrm{max}$ is an unknown constant. Here, we set $\gamma$ in \textbf{Lemma}~\ref{thm:cross-entropy-bound} as $n$. This gives $\mathrm{Var}(\hat{o}_\mathrm{o.d.})\leq O(n)$ with failure probability smaller than $e^{-O(n)}$. Thus, with high probability, $O(n/\varepsilon^2)$ shadow estimators are sufficient to estimate the fidelity between $\rho$ and $\ket{\phi}$ up to an additive error $\varepsilon$.
\end{proof}

\begin{lemma}\label{thm:cross-entropy-bound}
    Let $q$ be the measurement probability distribution of an $n$-qubit pure state $\ket{\phi}$ sampled from the Haar random ensemble. Let $p$ be a probability distribution over $[N]$ with $N=2^n$. We define $p_\mathrm{max}$ as the maximum value of $p$. For $\gamma>1$, the linear cross entropy between $p$ and $q$, $N\sum_{z\in[N]}p_z q_z-1$, is grater than $\gamma$ with probability smaller than $e^{-(\gamma-\log(1+\gamma))/p_\mathrm{max}}$.
\end{lemma}
\begin{proof}
    Let us compute the probability of having the linear cross-entropy greater than $\gamma$. To this end, we first note that for $z_0\in[N]$, $q_{z_0}$ is given by 
    \begin{equation}
        q_{z_0} = \frac{\hat{X}_{z_0}}{\sum_{z\in[N]}\hat{X}_z},
    \end{equation}
    where $\{\hat{X}_z\}$ are independent random variables following the Porter-Thomas distribution. In addition, the Porter-Thomas distribution is log-concave. Then, due to \textbf{Theorem 2.8} of Ref.~\cite{Kumar1983}, $\{q_z\}$ are negatively associated random variables. Now, let us consider the exponential function $f(x)=e^x$. This is a positive increasing function. Thus, using \textbf{Property $\mathrm{P}_2$} of Ref.~\cite{Kumar1983}, we have
    \begin{equation}
        \mathbb{E} \left[ \prod_{z\in[N]} e^{t_* p_z q_z} \right] \leq \prod_{z\in[N]}\mathbb{E}\left[ e^{t_* p_z q_z} \right]
    \end{equation}
    for any $t_*\in T$ with $T=[0,Np_\mathrm{max}^{-1})$. This, together with the Chernoff bound, gives
    \begin{equation}
        \mathrm{Prob}\left(\sum_{z\in[N]}p_zq_z\geq \frac{\gamma+1}{N}\right) \leq \inf_{t\in T} \left( e^{-t(\gamma+1)/N}\prod_{z\in[N]}\mathbb{E}\left[ e^{t p_z q_z} \right]\right).
    \end{equation}
    For large enough $n$, $q_{z_0}$ follows the Porter-Thomas distribution and gives
    \begin{equation}
        \mathbb{E}\left[ e^{t p_{z_0} q_{z_0}} \right] = \frac{1}{1-tp_{z_0}/N}.
    \end{equation}
    If we set $t_*$ by $Np_\mathrm{max}^{-1}(1-(1+\gamma)^{-1})$, then we have
    \begin{equation}
        \begin{split}
            \mathrm{Prob}\left(\sum_{z\in[N]}p_zq_z\geq \frac{\gamma+1}{N}\right) 
            &\leq e^{-\gamma/p_\mathrm{max}}\prod_{z\in[N]}\frac{1}{1-(1-(1+\gamma)^{-1})p_z/p_\mathrm{max}}\\
            &\leq e^{-\gamma/p_\mathrm{max}}\prod_{z\in[N]}\frac{1}{[1-(1-(1+\gamma)^{-1})]^{p_z/p_\mathrm{max}}}\\
            &= e^{-(\gamma-\log(1+\gamma))/p_\mathrm{max}}.
        \end{split}
    \end{equation}
\end{proof}

\begin{lemma}\label{thm:onsite-Haar-max-prob}
    For any $n$-qubit state $\rho$, $p_\mathrm{max}=\max_{z\in[N]}\bra{z}U\rho U^\dag\ket{z}$ with $N=2^n$ and $U=\bigotimes_{i=1}^n U_i$ where $\{U_i\}$ are sampled from a unitary 2-design satisfies
    \begin{equation*}
        \mathrm{Prob}\left(p_\mathrm{max}\geq \frac{1}{N} + 2^{(0.6-\frac{1}{2}\log_2 3)n} \right) \leq \frac{2^n}{1+2^{1.2 n}}.
    \end{equation*}
\end{lemma}
\begin{proof}
    For $z\in[N]$, let $p_z$ be $\bra{z}U\rho U^\dag\ket{z}$. Let us first compute the expected value and variance of $p_z$. 
    \begin{equation}
        \mu = \mathbb{E}_U[p_z] = \mathbb{E}_U \tr(\rho U^\dag\ketbra{z}U ) = \frac{1}{N}
    \end{equation}
    and
    \begin{equation}
        \begin{split}
            \sigma^2 
            &= \mathbb{E}_U[p_z^2] - \mu^2 \\
            &= \mathbb{E}_U \tr(\rho^{\otimes 2} U^{\otimes 2,\dag}\ketbra{z}^{\otimes 2} U^{\otimes 2}) - \frac{1}{2^{2n}}\\
            &= \frac{1}{6^n}\tr(\rho^{\otimes 2} (I+\operatorname{SWAP})^{\otimes n}) - \frac{1}{2^{2n}}\\
            &\leq \frac{1}{3^n}.
        \end{split}
    \end{equation}
    Next, due to the union bound, we have
    \begin{equation}
        \mathrm{Prob}\left( p_\mathrm{max} \geq \mu + \lambda\sigma \right) \leq \sum_{z\in[N]}\mathrm{Prob}\left( p_z \geq \mu + \lambda\sigma \right)
    \end{equation}
    for any $\lambda>0$. Now, let us use the one-sided Chebyshev inequality to bound the right-hand side:
    \begin{equation}
        \mathrm{Prob}\left( p_z \geq \mu + \lambda\sigma \right) \leq \frac{1}{1+\lambda^2}.
    \end{equation}
    If we set $\lambda$ as $2^{-0.6n}$, then we get the stated inequality
    \begin{equation}
        \mathrm{Prob}\left( p_\mathrm{max} \geq \frac{1}{N} + 2^{(0.6-\frac{1}{2}\log_2 3)n} \right) \leq \frac{2^n}{1+2^{1.2 n}}.
    \end{equation}
\end{proof}

When prior knowledge of the underlying state is available, our shadow tomography protocol becomes applicable to a broader class of observables. For instance, if the underlying state is anti-concentrated, then the sample complexity becomes a constant. Additionally, when attempting to prepare a quantum state on a device subject to random errors and generic noise, we can benchmark the state with constant sample complexity, provided the state is anti-concentrated as follows.
\begin{theorem}\label{thm:benchmarking-shadow}
    For a $n$-qubit anti-concentrated pure state $\ket{\psi}$, if $\rho$ is a state generated from $\ket{\psi}$ by applying a random error as well as a noisy channel $\mathcal{C}$ that decreases the collision probability such as depolarization channels, \textit{i.e.}, $\rho = \mathcal{C}(\ketbra{\phi})$ with $\ket{\phi}=\sqrt{1-q}\ket{\psi}+\sqrt{q}\ket{\psi_\perp}$ and $\ket{\psi_\perp}\sim\mathcal{E}_\mathrm{Haar}$, then the fidelity between $\rho$ and $\ket{\psi}$ can be estimated up to $\varepsilon$-additive error through $O(\log(n/\varepsilon))$-depth circuits with $O(1/\varepsilon^2)$ queries to $\rho$ and computational basis coefficients of $\ket{\psi}$.
\end{theorem}
\begin{proof}
    We first prove that $\ket{\phi}$ is anti-concentrated. The fourth moments of $\abs{\braket{z}{\phi}}$ is given by
    \begin{equation}
        \begin{split}
            \abs{\braket{z}{\phi}}^4 
            &= (1-q)^2\abs{\braket{z}{\psi}}^4+q^2\abs{\braket{z}{\psi_\perp}}^4+q(q-1)(\braket{\psi_\perp}{z}^2\braket{z}{\psi}^2+\braket{\psi}{z}^2\braket{z}{\psi_\perp}^2+\abs{\braket{z}{\psi}\braket{z}{\psi_\perp}}^2)\\
            &\quad+ 2q(1-q)\abs{\braket{z}{\psi}\braket{z}{\psi_\perp}}^2+2\sqrt{q(q-1)}(\braket{\psi_\perp}{z}\braket{z}{\psi}+\braket{\psi}{z}\braket{z}{\psi_\perp})((1-q)\abs{\braket{z}{\psi}}^2+q\abs{\braket{z}{\psi_\perp}}^2).
        \end{split}
    \end{equation}
    Under averaging over $\psi_\perp\sim\mathcal{E}_\mathrm{Haar}$, it becomes
    \begin{equation}
        \begin{split}
            \mathbb{E}_{\psi_\perp\sim\mathcal{E}_\mathrm{Haar}}\left[\abs{\braket{z}{\phi}}^4\right] 
            &= (1-q)^2\abs{\braket{z}{\psi}}^4 + q^2\mathbb{E}_{\phi_\perp\sim\mathcal{E}_\mathrm{Haar}}\left[\abs{\braket{z}{\psi_\perp}}^4\right]+3q(1-q)\abs{\braket{z}{\psi}}^2\mathbb{E}_{\psi_\perp\sim\mathcal{E}_\mathrm{Haar}}\left[ \abs{\braket{z}{\psi_\perp}}^2 \right]\\
            &= (1-q)^2\abs{\braket{z}{\psi}}^4 + \frac{2 q^2}{N(N+1)} + \frac{3q(1-q)}{N}\abs{\braket{z}{\psi}}^2
        \end{split}
    \end{equation}
    with $N=2^n$. Since $\mathcal{C}(\cdot)$ decreases the collision probability, it is upper bounded by
    \begin{equation}
        \mathbb{E}_{\psi_\perp\sim\mathcal{E}_\mathrm{Haar}}\left[\sum_{z\in[N]}(\bra{z}\rho\ket{z})^2\right] \leq O(N^{-1}).
    \end{equation}
    As in the proof of \textbf{Theorem}~\ref{thm:almost-all-certification}, the sample complexity for estimating $\bra{\phi}\rho\ket{\phi}$ is proportional to $1/\varepsilon^2$ times the cross entropy between $\rho$ and $\ket{\phi}$ which is upper bounded by
    \begin{equation}
        N\sum_{z\in[N]}\abs{\braket{z}{\phi}}^2\bra{z}\rho\ket{z} - 1 \leq N\left(\sum_{z\in[N]}\abs{\braket{z}{\phi}}^4\right)^{1/2}\left(\sum_{z\in[N]}(\bra{z}\rho\ket{z})^2\right)^{1/2} - 1 \leq O(1).
    \end{equation}
\end{proof}

\subsubsection*{Measurement circuit for shadow tomography}
In this subsection, we discuss an efficient construction of $U_p$ for shadow tomography. We find that by setting $k=1$, $U_p$ can be constructed by using only two types of CNOT gates: $U_{\mathrm{CNOT},0}$ and $U_{\mathrm{CNOT},1}$, which are controlled on states $\ket{0}$ and $\ket{1}$, respectively. The entire circuit diagram is illustrated in \hyperlink{fig:shadow-circuit}{Supplementary Figure 5}. Briefly, similar to \textbf{Algorithm 1}, it first applies a copy circuit to replicate the subsystem into the half of the system. It then randomizes the qubits outside the subsystem by applying the CNOT gates controlled by the subsystem and targeting the external qubits with a probability of $1/2$. Finally, it randomizes the subsystem using CNOT gates with the roles of control and target qubits reversed. Below, we discuss this process in more detail.

\begin{theorem}\label{thm:shadow-circuit}
    A measurement circuit for $\hat{o}_\mathrm{o.d.}$ with $k=1$ can be implemented in $O(\log(n/\varepsilon))$ depth with $\varepsilon$ additive error in $\mathbb{E}[\hat{o}_\mathrm{o.d.}]$ if $\hat{O}$ is anti-concentrated, \textit{i.e.}, 
    \begin{equation*}
        \sum_{z\in[2^n]}(\bra{z}\hat{O}\ket{z})^2\leq \frac{C}{2^n}
    \end{equation*}
    for some constant $C$.
\end{theorem}
\begin{proof}
Let us first consider the copy circuit $U_c$. Without loss of generality, we assume that we get $0^n\in[2^n]$ as a measurement outcome and apply an on-site Haar random unitary operator $V$. Then, by averaging the unitary operator over the Haar random ensemble $\mathcal{E}(2)$, we get 
\begin{equation}
    \mathbb{E}_{V\sim\mathcal{E}(2)}[V^{\otimes 2}\ketbra{0^n}^{\otimes 2}V^{\otimes 2,\dag}] = \frac{1}{6}\sum_{\{b_1,b_2\}\in[2]^2}\sum_{\tau\in S_2}\ketbra{\tau(b_1,b_2)}{b_1,b_2}\otimes\ketbra{0^{n-1}}^{\otimes 2}.
\end{equation}
The role of the copy circuit is to replicate $(b_1,b_2)$ to the half of the system, \textit{i.e.},
\begin{equation}
    U_c^{\otimes 2}\ketbra{\tau(b_1,b_2)}{b_1,b_2}\otimes\ketbra{0^{n-1}}^{\otimes 2}U_c^{\otimes 2} = \ketbra{\tau(b_1,b_2)}{b_1,b_2}^{\otimes (n/2)}\otimes \ketbra{0^{n/2}}^{\otimes 2}.
\end{equation}

The second stage is to randomize all qubits except the first one. This can be done by applying $U_{\mathrm{CNOT},0}$ and $U_{\mathrm{CNOT},1}$ with the probability of $1/2$. To be more precise, if $U_{\mathrm{CNOT},0}$ and $U_{\mathrm{CNOT},1}$ are applied based on uniformly random binary variables $\hat{X}_0$ and $\hat{X}_1$, then for any $\{a_1,a_2\}\in[2]^2$, we have

\begin{equation}
    \begin{split}
        &\mathbb{E}_{\hat{X}_0,\hat{X}_1}\left[(U^{\hat{X}_0}_{\mathrm{CNOT},0})^{\otimes 2} (U^{\hat{X}_1}_{\mathrm{CNOT},1})^{\otimes 2}\ketbra{\tau(b_1,b_2)}{b_1,b_2}\otimes\ketbra{\tau(a_1,a_2)}{a_1,a_2}(U^{\hat{X}_1}_{\mathrm{CNOT},1})^{\otimes 2,\dag} (U^{\hat{X}_0}_{\mathrm{CNOT},0})^{\otimes 2,\dag}\right] \\
        &= \frac{1}{4} \sum_{\{z_1,z_2\}\in[2]^2}\ketbra{\tau(b_1,b_2)}{b_1,b_2}\otimes\ketbra{\tau(z_1\oplus a_1, z_2\oplus a_2)}{z_1\oplus a_1, z_2\oplus a_2}\\
        &= \frac{1}{4} \sum_{\{z_1,z_2\}\in[2]^2}\ketbra{\tau(b_1,b_2)}{b_1,b_2}\otimes\ketbra{\tau(z_1, z_2)}{z_1, z_2}
    \end{split}
\end{equation}
if $b_1$ and $b_2$ are distinct, and 
\begin{equation}
    \begin{split}
        &\mathbb{E}_{\hat{X}_0,\hat{X}_1}\left[(U^{\hat{X}_0}_{\mathrm{CNOT},0})^{\otimes 2} (U^{\hat{X}_1}_{\mathrm{CNOT},1})^{\otimes 2}\ketbra{\tau(b_1,b_2)}{b_1,b_2}\otimes\ketbra{\tau(a_1,a_1)}{a_1,a_1}(U^{\hat{X}_1}_{\mathrm{CNOT},1})^{\otimes 2,\dag} (U^{\hat{X}_0}_{\mathrm{CNOT},0})^{\otimes 2,\dag}\right] \\
        &= \frac{1}{2} \sum_{z\in[2]}\ketbra{\tau(b_1,b_2)}{b_1,b_2}\otimes\ketbra{z\oplus a_1, z\oplus a_1}\\
        &= \frac{1}{2} \sum_{z\in[2]}\ketbra{\tau(b_1,b_2)}{b_1,b_2}\otimes\ketbra{z, z}
    \end{split}
\end{equation}
if we have $b_1=b_2$. Here, `$\oplus$' represents the summation modulo two. 

Consequently, the second stage gives us 
\begin{equation}
    \begin{split}
        &\mathbb{E}\left[U_\mathrm{RCNOT}^{\otimes 2}\ketbra{\tau(b_1,b_2)}{b_1,b_2}^{\otimes(n/2)}\otimes\ketbra{0^{n/2}}^{\otimes 2}U_\mathrm{RCNOT}^{\otimes 2,\dag}\right]\\
        &=\frac{1}{2^{2(n-1)}}(1-\delta_{b_1,b_2})\sum_{\{a_1,a_2\}\in[2^{n-1}]^2}\ketbra{\tau(b_1,b_2)}{b_1,b_2}\otimes\ketbra{\tau(a_1,a_2)}{a_1,a_2}\\
        &\quad+\frac{1}{2^{n-1}}\delta_{b_1,b_2}\sum_{a\in[2^{n-1}]}\ketbra{\tau(b_1,b_2)}{b_1,b_2}\otimes\ketbra{a,a},
    \end{split}
\end{equation}
where $U_\mathrm{RCNOT}$ is the random CNOT circuit.

The final stage is devoted to randomize the first qubit. In this stage, $U_{\mathrm{CNOT},0}$ and $U_{\mathrm{CNOT},1}$ are randomly applied targeting the first qubit and conditioned on other qubits. We will literately apply these gates sequentially conditioned on the second qubit to the $(r+1)$-th qubit for some $r>1$ as shown in \hyperlink{fig:shadow-circuit}{Supplementary Figure 5}. We will denote $g^{(r)}_{b_1,b_2,\tau}$ as the matrix after the $r$-th iteration. We find that $g^{(r)}_{b_1,b_2,\tau}$ is explicitly given by
\begin{equation}
    \begin{split}
        g^{(s)}_{b_1,b_2,\tau} &= \frac{1}{2^{2n}}(1-\delta_{b_1,b_2})\sum_{\substack{\{b'_1,b'_2\}\in[2]^2\\\{c^{(1)}_1,c^{(1)}_2\}\in[2]^2_\mathrm{dist}\\\{a_1,a_2\}\in[2^{n-2}]^2}}\ketbra{\tau(b'_1,b'_2)}{b'_1,b'_2}\otimes\ketbra{\tau(c_1^{(1)},c_2^{(1)})}{c_1,c_2}\otimes\ketbra{\tau(a_1,a_2)}{a_1,a_2}\\
        &\quad+\frac{1}{2^{2n}}(1-\delta_{b_1,b_2})\sum_{\substack{\{b'_1,b'_2\}\in[2]^2\\c^{(1)}\in[2]\\\{c^{(2)}_1,c^{(2)}_2\}\in[2]^2_\mathrm{dist}\\\{a_1,a_2\}\in[2^{n-3}]^2}}\ketbra{\tau(b'_1,b'_2)}{b'_1,b'_2}\otimes\ketbra{c^{(1)},c^{(1)}}\otimes\ketbra{\tau(c^{(2)}_1,c^{(2)}_2)}{c^{(2)}_1,c^{(2)}_2}\otimes\ketbra{\tau(a_1,a_2)}{a_1,a_2}\\
        &\quad+\cdots\\
        &\quad+\frac{1}{2^{2n}}(1-\delta_{b_1,b_2})\sum_{\substack{\{b'_1,b'_2\}\in[2]^2\\\{c^{(1)},\cdots,c^{(s-1)}\}\in[2]^{s-1}\\\{c^{(s)}_1,c^{(s)}_2\}\in[2]^2_\mathrm{dist}\\\{a_1,a_2\}\in[2^{n-s-1}]^2}}\ketbra{\tau(b'_1,b'_2)}{b'_1,b'_2}\otimes\ketbra{c^{(1)},c^{(1)}}\otimes\cdots\otimes\ketbra{c^{(s-1)},c^{(s-1)}}\\
        &\qquad\qquad\qquad\qquad\qquad\qquad\qquad\qquad\otimes\ketbra{\tau(c^{(2)}_1,c^{(2)}_2)}{c^{(s)}_1,c^{(s)}_2}\otimes\ketbra{\tau(a_1,a_2)}{a_1,a_2}\\
        &\quad+\cdots\\
        &\quad+\frac{1}{2^{2n-1}}(1-\delta_{b_1,b_2})\sum_{\substack{\{b'_1,b'_2\}\in[2]^2_\mathrm{dist}\\\{c^{(1)},\cdots,c^{(r)}\}\in[2]^{r}\\\{a_1,a_2\}\in[2^{n-r-1}]^2}}\ketbra{\tau(b'_1,b'_2)}{b'_1,b'_2}\otimes\ketbra{c^{(1)},c^{(1)}}\otimes\cdots\otimes\ketbra{c^{(r)},c^{(r)}}\\
        &\qquad\qquad\qquad\qquad\qquad\qquad\qquad\qquad\otimes\ketbra{\tau(a_1,a_2)}{a_1,a_2}\\
        &\quad+\frac{1}{2^n}\delta_{b_1,b_2}\sum_{z\in[2^n]}\ketbra{z,z}.
    \end{split}
\end{equation}

Next, let us compare our implementation with the one generated by the exact random injective map:
\begin{equation}
    \begin{split}
        g_{b_1,b_2,\tau} 
        &= \mathbb{E}_p\left[U_p^{\otimes 2}\ketbra{\tau(b_1,b_2)}{b_1,b_2}\otimes\ketbra{0^{n-1}}^{\otimes 2}U_p^{\otimes 2,\dag}\right]\\
        &= \frac{1}{2^n(2^n-1)}(1-\delta_{b_1,b_2})\sum_{\{z_1,z_2\}\in[2^n]^2_\mathrm{dist}}\ketbra{\tau(z_1,z_2)}{z_1,z_2}+\frac{1}{2^n}\delta_{b_1,b_2}\sum_{z\in[2^n]}\ketbra{z,z}.
    \end{split}
\end{equation}
The difference between $g^{(r)}_{b_1,b_2,\tau}$ and $g_{b_1,b_2,\tau}$ is given by
\begin{equation}
    \begin{split}
        \Delta g_{b_1,b_2,\tau} 
        &= g^{(r)}_{b_1,b_2,\tau} - g_{b_1,b_2,\tau}\\
        &= -\frac{1}{2^{2n}(2^n-1)}(1-\delta_{b_1,b_2})\sum_{\{z_1,z_2\}\in[2^n]^2_\mathrm{dist}}\ketbra{\tau(z_1,z_2)}{z_1,z_2}\\
        &\quad+\frac{1}{2^{2n}}(1-\delta_{b_1,b_2})\left(\sum_{\substack{z_1,z_2\in[2^n]^2_\mathrm{dist}\\(z_1)_1\neq (z_2)_1 \\(z_1)_2= (z_2)_2,\cdots,(z_1)_{r+1}=(z_2)_{r+1}}}-\sum_{\substack{z_1,z_2\in[2^n]^2_\mathrm{dist}\\(z_1)_1= (z_2)_1,\cdots,(z_1)_{r+1}=(z_2)_{r+1}}}\right)\ketbra{\tau(z_1,z_2)}{z_1,z_2}.
    \end{split}
\end{equation}
Let us study how this difference affects the expectation value of $\hat{O}_\mathrm{o.d.}$. To this end, we compute the average of $\tr({\hat{O}_\mathrm{o.d.}} U^\dag \ketbra{z} U)$ over measurement circuits generated by our circuit implementation $\mathcal{E}'$:
\begin{equation}
    \begin{split}
        &\mathbb{E}_{U\sim\mathcal{E}',z\sim\bra{z}U\rho U^\dag\ket{z}}\left[\tr({\hat{O}_\mathrm{o.d.}} U^\dag \ketbra{z} U)\right]\\
        &=\frac{1}{3(2^n-1)}\tr(\rho {\hat{O}_\mathrm{o.d.}}) + \frac{2^n}{6}\sum_{\{b_1,b_2\}\in[2]^2}\sum_{\tau\in S_2}\tr(\rho\otimes {\hat{O}_\mathrm{o.d.}}\Delta g_{b_1,b_2,\tau}).
    \end{split}
\end{equation}
Therefore, the bias caused by using the random CNOT circuit is precisely given by
\begin{equation}
    \begin{split}
        \Delta \langle \hat{O}_\mathrm{o.d.}\rangle 
        &= \frac{2^n(2^n-1)}{2}\sum_{\{b_1,b_2\}\in[2]^2}\sum_{\tau\in S_2}\tr(\rho\otimes {\hat{O}_\mathrm{o.d.}}\Delta g_{b_1,b_2,\tau})\\
        &= \frac{2^n-1}{2^{n+1}}\sum_{\{b_1,b_2\}\in[2]^2_\mathrm{dist}}\sum_{\tau\in S_2}\left(\sum_{\substack{\{z_1,z_2\}\in[2^n]^2_\mathrm{dist}\\(z_1)_1\neq (z_2)_1 \\(z_1)_2= (z_2)_2,\cdots,(z_1)_{r+1}=(z_2)_{r+1}}}-\sum_{\substack{\{z_1,z_2\}\in[2^n]^2_\mathrm{dist}\\(z_1)_1= (z_2)_1,\cdots,(z_1)_{r+1}=(z_2)_{r+1}}}\right)\\
        &\qquad\qquad\qquad\qquad\qquad\qquad\qquad\qquad\qquad\qquad\qquad\qquad\tr(\rho\otimes {\hat{O}_\mathrm{o.d.}}\ketbra{\tau(z_1,z_2)}{z_1,z_2})\\
        &\quad-\frac{1}{2^{n+1}}\sum_{\{b_1,b_2\}\in[2]^2_\mathrm{dist}}\sum_{\tau\in S_2}\sum_{\{z_1,z_2\}\in[2^n]^2_\mathrm{dist}}\tr(\rho\otimes {\hat{O}_\mathrm{o.d.}}\ketbra{\tau(z_1,z_2)}{z_1,z_2})\\
        &=\frac{2^n-1}{2^n}\left(\sum_{\substack{\{z_1,z_2\}\in[2^n]^2\\(z_1)_2= (z_2)_2,\cdots,(z_1)_{r+1}=(z_2)_{r+1}}}-2\sum_{\substack{\{z_1,z_2\}\in[2^n]^2\\(z_1)_1= (z_2)_1,\cdots,(z_1)_{r+1}=(z_2)_{r+1}}}\right)\tr(\rho\otimes {\hat{O}_\mathrm{o.d.}}\ketbra{z_2,z_1}{z_1,z_2})\\
        &\quad-\frac{1}{2^n}\tr(\rho \hat{O}_\mathrm{o.d.}).
    \end{split}
\end{equation}
This bias is dominated by the first term for large $n$. Now, let us elaborate on when the first term becomes small. To this end, we consider eigen decompositions of $\rho$ and $\hat{O}$:
{
\begin{equation}
    \rho = \sum_\lambda \lambda \ketbra{\psi_\lambda}
\end{equation}
and
\begin{equation}
    \hat{O} = \sum_\mu \mu \ketbra{\phi_\mu}.
\end{equation}
We first compute the error term associated with $\ket{\psi_\lambda}$ and $\ket{\phi_\mu}$. We then consider the total error. When take the eigenstates with certain eigenvalues $\lambda$ and $\mu$, for simplicity, we drop the subscripts of $\ket{\psi_\lambda}$ and $\ket{\phi_\mu}$.} Let Schmidt decompositions of $\ket{\psi}$ and $\ket{\phi}$ with a subsystem $A$ be 
\begin{equation}
    \ket{\psi} = \sum_i s_i \ket{\psi_i}_A \ket{\psi_i}_{A^c}
\end{equation}
and
\begin{equation}
    \ket{\phi} = \sum_i s'_i \ket{\phi_i}_A \ket{\phi_i}_{A^c}.
\end{equation}
If we set $A$ as the first $(r+1)$-qubits, then we have
\begin{equation}
    \begin{split}
        &\sum_{\substack{\{z_1,z_2\}\in[2^n]^2\\(z_1)_1= (z_2)_1,\cdots,(z_1)_{r+1}=(z_2)_{r+1}}}\tr(\rho\otimes {\hat{O}_\mathrm{o.d.}}\ketbra{z_2,z_1}{z_1,z_2})\\
        &=\sum_{\substack{z_1,z_2\in[2^n]^2 \\(z_1)_1= (z_2)_1,\cdots,(z_1)_{r+1}=(z_2)_{r+1}}}\left[\braket{z_1}{\psi}\braket{\psi}{z_2}\braket{z_2}{\phi}\braket{\phi}{z_1} - \braket{z_1}{\psi}\braket{\psi}{z_2}\bra{z_2}\mathrm{diag}(\ketbra{\phi})\ket{z_1}\right]\\
        &=\sum_{\substack{a_1,a_2\in[2^{n-|A|}]^2\\b\in[2^{|A|}]}}\braket{b,a_1}{\psi}\braket{\psi}{b,a_2}\braket{b,a_2}{\phi}\braket{\phi}{b,a_1}-\sum_{z\in[2^n]}\abs{\braket{z}{\psi}}^2\abs{\braket{z}{\phi}}^2\\
        &=\sum_{\substack{a_1,a_2\in[2^{n-|A|}]^2\\b\in[2^{|A|}]}}\sum_{i,j,k,l}s_is_js'_ks'_l\braket{b}{\psi_i}_A\braket{a_1}{\psi_i}_{A^c}\braket{\psi_j}{b}_A\braket{\psi_j}{a_2}_{A^c}\braket{b}{\phi_k}_A\braket{a_2}{\phi_k}_{A^c}\braket{\phi_l}{b}_A\braket{\phi_l}{a_1}_{A^c}\\
        &\quad-\sum_{z\in[2^n]}\abs{\braket{z}{\psi}}^2\abs{\braket{z}{\phi}}^2\\
        &=\sum_{b\in[2^{|A|}]}\sum_{i,j,k,l}s_is_js'_ks'_l\braket{b}{\psi_i}_A\braket{\psi_j}{b}_A\braket{b}{\phi_k}_A\braket{\phi_l}{b}_A\braket{\phi_l}{\psi_i}_{A^c}\braket{\psi_j}{\phi_k}_{A^c}-\sum_{z\in[2^n]}\abs{\braket{z}{\psi}}^2\abs{\braket{z}{\phi}}^2.
    \end{split}
\end{equation}
Furthermore, if we set $B$ as qubits in $[2,r+1]$, then the error term $\Delta\langle \hat{O}_\mathrm{o.d.} \rangle_{\psi,\phi}$ associated with $\ket{\psi}$ and $\ket{\phi}$ is given by
\begin{equation}
    \begin{split}
        \Delta\langle \hat{O}_\mathrm{o.d.} \rangle_{\psi,\phi}
        &= \frac{2^n-1}{2^n}\sum_{b\in[2^{|B|}]}\sum_{i,j,k,l}s_is_js'_ks'_l\braket{b}{\psi_i}_B\braket{\psi_j}{b}_B\braket{b}{\phi_k}_B\braket{\phi_l}{b}_B\braket{\phi_l}{\psi_i}_{B^c}\braket{\psi_j}{\phi_k}_{B^c}\\
        &\quad-\frac{2^n-1}{2^{n-1}}\sum_{b\in[2^{|A|}]}\sum_{i,j,k,l}s_is_js'_ks'_l\braket{b}{\psi_i}_A\braket{\psi_j}{b}_A\braket{b}{\phi_k}_A\braket{\phi_l}{b}_A\braket{\phi_l}{\psi_i}_{A^c}\braket{\psi_j}{\phi_k}_{A^c}\\
        &\quad- \frac{2^n-2}{2^n}\sum_{z\in[2^n]}\abs{\braket{z}{\psi}}^2\abs{\braket{z}{\phi}}^2-\frac{1}{2^n}\abs{\braket{\psi}{\phi}}^2.
    \end{split}
\end{equation}
The first term of the right-hand side is upper-bounded by
\begin{equation}
    \begin{split}
        &\sum_{b\in[2^{|B|}]}\sum_{i,j,k,l}s_is_js'_ks'_l\braket{b}{\psi_i}_B\braket{\psi_j}{b}_B\braket{b}{\phi_k}_B\braket{\phi_l}{b}_B\braket{\phi_l}{\psi_i}_{B^c}\braket{\psi_j}{\phi_k}_{B^c}\\
        &=\sum_{b\in[2^{|B|}]}\abs{\sum_{i,l}s_i s'_l \braket{b}{\psi_i}_B \braket{\phi_l}{b}_B \braket{\phi_l}{\psi_i}_{B^c} }^2\\
        &=\sum_{b\in[2^{|B|}]}\abs{\left(\sum_l s'_l \braket{b}{\phi_l}_B\ket{\phi_l}_{B^c}\right)^\dag\left(\sum_i s_i \braket{b}{\psi_i}_B\ket{\psi_i}_{B^c}\right)}^2\\
        &\leq \sum_{b\in[2^{|B|}]} \abs{\sum_l s'_l \braket{b}{\phi_l}_B\ket{\phi_l}_{B^c}}^2 \abs{\sum_i s_i \braket{b}{\psi_i}_B\ket{\psi_i}_{B^c}}^2 \\
        &= \sum_{b\in[2^{|B|}]} \bra{b} \rho_{\phi,B} \ket{b}\bra{b} \rho_{\psi,B} \ket{b}
    \end{split}
\end{equation}
with
\begin{equation}
    \rho_{\psi,B} = \tr_{B^c}(\ketbra{\psi}) = \sum_i s_i^2 \ketbra{\psi_i}_B
\end{equation}
and
\begin{equation}
    \rho_{\phi,B} = \tr_{B^c}(\ketbra{\phi}) = \sum_l (s'_l)^2 \ketbra{\phi_l}_B.
\end{equation}
The same applies to the second term, which gives
\begin{equation}
    \begin{split}
        \Delta\langle \hat{O}_\mathrm{o.d.} \rangle_{\psi,\phi} 
        &\leq \frac{2^n-1}{2^n} \sum_{b\in[2^{|B|}]}\bra{b} \rho_{\phi,B} \ket{b}\bra{b} \rho_{\psi,B} \ket{b} + \frac{2^n-1}{2^{n-1}} \sum_{b\in[2^{|A|}]} \bra{b} \rho_{\phi,A} \ket{b}\bra{b} \rho_{\psi,A} \ket{b} \\
        &\quad + \frac{2^n-2}{2^n}\sum_{z\in[2^n]}\abs{\braket{z}{\psi}}^2\abs{\braket{z}{\phi}}^2 + \frac{1}{2^n}\abs{\braket{\psi}{\phi}}^2.
    \end{split}
\end{equation}
The first two terms in this bound decay exponentially as a function of $r$ when reduced density matrices of $\ket{\psi}$ and $\ket{\phi}$ are both anti-concentrated, or more weakly, have exponentially decaying collision probabilities. We note that the eigenstates of a reduced density matrix of a typical state are all anti-concentrated. The third term can be estimated independently from $\hat{\rho}_\mathrm{d.}$ and is exponentially vanishing for typical $\ket{\psi}$. We neglect the last term as it also vanishes exponentially. Using this, we can also upper-bound the total error as follows.
\begin{equation}
    \begin{split}
        \Delta\langle \hat{O}_\mathrm{o.d.} \rangle
        &\leq \frac{2^n-1}{2^n} \sum_{b\in[2^{|B|}]}\bra{b} \rho_{B} \ket{b}\bra{b} \hat{O}_{B} \ket{b} + \frac{2^n-1}{2^{n-1}} \sum_{b\in[2^{|A|}]} \bra{b} \rho_{A} \ket{b}\bra{b} \hat{O}_{A} \ket{b} \\
        &\quad + \frac{2^n-2}{2^n}\sum_{z\in[2^n]} \bra{b}\rho\ket{b}\bra{b}\hat{O}\ket{b} + \frac{1}{2^n}\tr(\rho\hat{O})
    \end{split}
\end{equation}
with $\rho_B = \tr_{B^c}(\rho)$ and $\hat{O}_B = \tr_{B^c}(\hat{O})$. Thus, if we have 
\begin{equation}
    \begin{split}
        \sum_{b\in[2^{|A|}]}\bra{b}\rho_A\ket{b}\bra{b}\hat{O}_A\ket{b} &\leq 2^{-O(|A|)}\\
        \sum_{b\in[2^{|B|}]}\bra{b}\rho_B\ket{b}\bra{b}\hat{O}_B\ket{b} &\leq 2^{-O(|B|)}\\
        \sum_{b\in[2^n]}\bra{b}\rho\ket{b}\bra{b}\hat{O}\ket{b} &\leq 2^{-O(n)},
    \end{split}
\end{equation}
then the total error decays exponentially as a function of $r$. Since the total circuit depth is given by $O(\log n)+r$, for a fixed error $\varepsilon>0$, we can achieve $O(\log (n/\varepsilon))$ depth. Due to \textbf{Lemma}~\ref{thm:subspace-anti-concentration}, if $\hat{O}$ is anti-concentrated, then $\hat{O}_A$ and $\hat{O}_B$ are also anti-concentrated. These give
\begin{equation}
    \begin{split}
        \sum_{b\in[2^{|A|}]}\bra{b}\rho_A\ket{b}\bra{b}\hat{O}_A\ket{b} &\leq \left(\sum_{b\in[2^{|A|}]} (\bra{b}\hat{O}_A\ket{b})^2\right)^{1/2} \leq 2^{-O(|A|)}\\
        \sum_{b\in[2^{|B|}]}\bra{b}\rho_B\ket{b}\bra{b}\hat{O}_B\ket{b} &\leq \left(\sum_{b\in[2^{|B|}]} (\bra{b}\hat{O}_B\ket{b})^2 \right) \leq 2^{-O(|B|)}\\
        \sum_{b\in[2^n]}\bra{b}\rho\ket{b}\bra{b}\hat{O}\ket{b} &\leq \left(\sum_{b\in[2^n]}\bra{b}\hat{O}\ket{b}\right)^{1/2} \leq 2^{-O(n)}.
    \end{split}
\end{equation}
\end{proof}

    
\begin{lemma}\label{thm:subspace-anti-concentration}
    For a $n$-qubit state $\ket{\psi}$, if $\ket{\psi}$ is anti-concentrated, then for any subsystem $A$, $\rho_A=\tr_{A^c}(\ketbra{\psi})$ is also anti-concentrated.
\end{lemma}
\begin{proof}
    For $a\in[2^{|A|}]$ and $b\in[2^{n-|A|}]$, let $p_{a,b}$ be the computational basis measurement probability distribution of $\ket{\psi}$, \textit{i.e.}, $p_{a,b}=\abs{\braket{a,b}{\psi}}^2$. Since $\ket{\psi}$ is anti-concentrated, there exists $C>0$ such that 
    \begin{equation}
        \sum_{a\in[2^{|A|}],b\in[2^{n-|A|}]}p_{a,b}^2 \leq \frac{C}{2^n}.
    \end{equation}
    Let $q_a$ be the measurement probability distribution of $\rho_A$. It is given by
    \begin{equation}
        q_a = \sum_{b\in[2^{n-|A|}]}p_{a,b}.
    \end{equation}
    Now, let us upper bound the inverse participation ratio of $q_a$:
    \begin{equation}
        \begin{split}
            \sum_{a\in[2^{|A|}]}q_a^2 
            &\leq \sum_{a\in[2^{|A|}]} 2^{n-|A|} \sum_{b\in[2^{n-|A|}]}p_{a,b}^2\\
            &\leq 2^{n-|A|}\times\frac{C}{2^n}\\
            &= \frac{C}{2^{|A|}}.
        \end{split}
    \end{equation}
    Therefore, if $\ket{\psi}$ is anti-concentrated, then $\rho_A$ is also anti-concentrated.
\end{proof}








\newpage
\subsection*{8. Application of approximate state design with a fixed $\epsilon$}
An example requiring approximate state design is the average fidelity estimation of a quantum channel $\Lambda$ defined as
\begin{equation}
    F_{\rm avg}(\Lambda) = \mathbb{E}_{\psi\sim{\rm Haar}} \left[ \bra{\psi}\Lambda(\ketbra{\psi})\ket{\psi}\right].
\end{equation}
We can approximate $F_{\rm avg}(\Lambda)$ by replacing Haar random states to an ensemble $\mathcal{E}$ forming an $\epsilon$-approximate state 2-design as
\begin{equation}
    \tilde{F}_\mathrm{avg}(\Lambda) = \mathbb{E}_{\psi\sim\mathcal{E}} \left[ \bra{\psi}\Lambda(\ketbra{\psi})\ket{\psi}\right].
\end{equation}
This replacement gives $O(\epsilon)$ additive error as follows:
\begin{equation}
    \begin{split}
        \tilde{F}_\mathrm{avg}(\Lambda) 
        &=\mathbb{E}_{\psi\sim\mathcal{E}}\tr(\Lambda(\ketbra{\psi})\ketbra{\psi})\\
        &=\mathbb{E}_{\psi\sim\mathcal{E}}\tr(\operatorname{SWAP}\ketbra{\psi}\otimes \Lambda(\ketbra{\psi}))\\
        &=\mathbb{E}_{\psi\sim\mathcal{E}}\tr(\operatorname{SWAP}(I\otimes U)(\ketbra{\psi}\otimes\ketbra{\psi}\otimes\ketbra{0}_\mathsf{A})(I\otimes U)^\dag)\\
        &=(1-\epsilon)\tr(\operatorname{SWAP}(I\otimes U)(\rho^{(2)}_\mathrm{Haar}\otimes\ketbra{0}_\mathsf{A})(I\otimes U)^\dagger)+\epsilon\tr(\operatorname{SWAP}(I\otimes U)(\rho^{(2)}_\mathrm{err}\otimes\ketbra{0}_\mathsf{A})(I\otimes U)^\dagger)\\
        &=F_\mathrm{avg}(\Lambda)+O(\epsilon),
    \end{split}
\end{equation}
with $\rho^{(2)}_\mathrm{Haar}=\mathbb{E}_{\psi\sim\mathrm{Haar}}[\ketbra{\psi}^{\otimes 2}]$ and $\Lambda(\ketbra{\psi})=\tr_\mathsf{A}(U\ketbra{\psi}\otimes\ketbra{0}_\mathsf{A}U^\dag)$. Here, we use the fact that $\mathbb{E}_{\psi\sim\mathcal{E}}[\ketbra{\psi}^{\otimes 2}]=(1-\epsilon)\rho^{(2)}_\mathrm{Haar}+\epsilon \rho^{(2)}_\mathrm{err}$ for some two-copy density matrix $\rho^{(2)}_\mathrm{err}$. Additionally, the estimation of $\hat{F}_\mathrm{avg}(\Lambda)$ requires a constant sample complexity since we have
\begin{equation}
    \tilde{F}_\mathrm{avg}(\Lambda) = \mathbb{E}_{U\ket{0}\sim\mathcal{E},z\sim \bra{z}U^\dag\Lambda(U\ketbra{0}U^\dag)U\ket{z}}[\delta_{z,0}]
\end{equation}
and
\begin{equation}
    \begin{split}
        \mathrm{Var}(\delta_{z,0})
        &=\mathbb{E}_{U\ket{0}\sim\mathcal{E},z\sim \bra{z}U^\dag\Lambda(U\ketbra{0}U^\dag)U\ket{z}}[\delta_{z,0}^2]-(\mathbb{E}_{U\ket{0}\sim\mathcal{E},z\sim \bra{z}U^\dag\Lambda(U\ketbra{0}U^\dag)U\ket{z}}[\delta_{z,0}])^2\\
        &= \tilde{F}_\mathrm{avg}(\Lambda) - (\tilde{F}_\mathrm{avg}(\Lambda))^2\\
        &\leq 1.
    \end{split}
\end{equation}
In general, we can approximately estimate the observable with the following form
\begin{equation}
    \mathbb{E}_{\psi\sim\mathrm{Haar}}[g(\{\bra{\psi}A_m\ket{\psi}\})]
\end{equation}
with $O(\epsilon)$ additive error, where $g(x_1,\cdots,x_t)$ is a polynomial function of degree $t$. We note that these tasks can be done using pseudorandom states~\cite{Ji2018} only when the observer's computation time is limited. This constraint, however, is artificial in these cases as it has no connection with the physical meanings of observables, including fidelity. Even with this constraint, they are not as efficient as approximate state designs for these tasks.

\newpage
\section*{Supplementary References}
\bibliography{Ref}